\newcommand{\sprs}{s}
\newcommand{\nlm}{m}
\newcommand{\ui}{{t}} %
\newcommand{\ci}{{i}} %
\newcommand{\idx}[1]{({#1})}
\newcommand{\blk}{M}
\newcommand{\tri}{\{-1, 0, +1\}}
\title{The Role of Interactivity in Structured Estimation}
\author{
Jayadev Acharya \\
Cornell University\\
\email{acharya@cornell.edu}
\and
Cl\'ement L. Canonne\\
University of Sydney\\
\email{clement.canonne@sydney.edu.au}
\and
Ziteng Sun\\
Google Research\\
\email{zitengsun@google.com}
\and
Himanshu Tyagi\\ 
Indian Institute of Science\\
\email{htyagi@iisc.ac.in}
}
\begin{document}
\maketitle
\begin{abstract}%
We study high-dimensional sparse estimation under three natural constraints: communication constraints, local privacy constraints, and linear measurements (compressive sensing). Without sparsity assumptions, it has been established that interactivity cannot improve the minimax rates of estimation under these information constraints. The question of whether interactivity helps with natural inference tasks has been a topic of active research. We settle this question in the affirmative for the prototypical problems of high-dimensional sparse mean estimation and compressive sensing, by demonstrating a gap between interactive and noninteractive protocols. 
We further establish that the gap increases when we have more structured sparsity:
for \emph{block sparsity} this gap can be as large as \emph{polynomial} in the dimensionality.
Thus, the more structured the sparsity is, the greater is the advantage of interaction.
Proving the lower bounds requires a careful breaking of a sum of correlated random variables
into independent components using Baranyai's theorem on decomposition of hypergraphs, which might be of independent interest.
\end{abstract}

\section{Introduction}
Estimating high-dimensional parameters is a central task arising in various scientific disciplines and data-driven applications. 
Modern applications often involve data from distributed or online sources which restrict the mechanism via which we have access to the data; for instance, 
limitations may be placed due to
 ease of implementation, or due to stringent communication constraints (bandwidth), or legal constraints (privacy). 
 
Understanding the interplay between these restrictions and the task at hand is the key to designing better and more efficient algorithms for these tasks. In this paper, we make progress on that front, considering  the fundamental question of \emph{sparse parameter estimation}. 
\begin{description}
    \item[Sparse mean estimation:] Upon observing independent samples $X_1,\dots,X_\ns$ from a high-dimensional product distribution over $\bool^\dims$ with mean vector $\mu\in[-1,1]^\dims$, the goal is to output an estimate $\hat{\mu}$ such that
    \begin{equation}\label{task:sparsemeanestimation}
        \bPr{\normtwo{\hat{\mu}-\mu} > \dst } \leq 1/10,
    \end{equation}
    \ie to achieve good accuracy of estimation under $\lp[2]$ loss. In addition, we are promised that the unknown parameter $\mu$ is $\sprs$-sparse; namely, it has at most $\sprs$ nonzero coordinates and $\norm{\mu}_0 \leq \sprs$.

    However, the observations are subject to an $\numbits$-bit \emph{communication constraint}, where $1\leq \numbits\leq \dims$. Namely, each $X_\ui$ must be compressed to an $\numbits$-bit message $Y_\ui$, and the estimate $\hat{\mu}$ is then computed from the $\ns$ messages $Y_1,\dots,Y_\ns\in\{0,1\}^\numbits$ only. 
    Our results also apply to \emph{local differential privacy} (LDP) constraints~\citep{DMNS:06,KLNRS:11}, where each message is required not to reveal too much about the observation; we relegate the details to the appendix.
    
    \item[Block-sparse mean estimation:] The task is very similar, but the sparsity structure imposed on $\mu$ is now more restrictive. Specifically, we are promised that the (at most) $\sprs$ nonzero coordinates of the unknown parameter $\mu$ are contiguous:
    \begin{equation}\label{task:blocksparsemeanestimation}
        \exists 1\leq \ci\leq \dims-\sprs: \forall j \notin \{\ci,\ci+1,\dots, \ci+\sprs\}, \,\, \mu\idx{j} = 0. 
    \end{equation}
    \item[Compressive sensing:]
    There is an unknown $\sprs$-sparse vector $x\in\R^\dims$, which can only be observed through noisy linear measurements given by
    \begin{equation}\label{task:compressivesensing}
        Y_\ui \eqdef A_{\ui} x + Z_\ui, \qquad 1\leq \ui \leq \ns,
    \end{equation}
    where $Z_1,\dots, Z_\ns$ are \iid $\gaussian{0}{I_\nlm}$ random
    variables (noise), and $A_{1},\dots, A_{\ns}\in \R^{\nlm\times \dims}$ are measurement unitary matrices\footnote{More generally, this can be relaxed to requiring only that each row vector has bounded $\lp[2]$ norm.} chosen (possibly adaptively) by the protocol. %
    The goal is to estimate $x$ to $\lp[2]$ loss $\dst$ using observations $Y_1, \dots, Y_{\ns}\in \R^\nlm$, minimizing the number $\nlm \cdot \ns$ of overall measurements. 
    When the matrices $A_{\ui}$ are chosen interactively, this is known as \emph{adaptive sensing}~\citep{AriasCD12}; specifically, adaptive sensing considers the case $\nlm=1$ and allows each measurement to be of the form $\dotprod{\mathbf{a}_\ui}{x} + z_\ui$ for a vector $\mathbf{a}_\ui$ that is adaptively chosen dependent on $Y_1,\ldots, Y_{\ui-1}$.
\end{description}

All these tasks have received significant attention in recent years. But the role of \emph{interactivity} in communication protocols is not completely understood. Interactivity allows clients to choose the messaging scheme based on clients' outputs from previous communication rounds. Formally, for (sequentially) \emph{interactive} protocols, the messaging scheme from $X_t$ to $Y_t$ is allowed to be chosen based on previous messages $Y_1, \ldots, Y_{t-1}$ while for \emph{noninteractive} protocols, the mapping from $X_\ui$ to $Y_\ui$ is chosen independently without observing others' messages.  Although interactivity brings flexibility in the protocol design, it often comes with extra cost. For example, interaction may lead to time delays since each client needs messages from previous clients, which can be prohibitive for large-scale distributed learning systems such as those used for Federated Learning~\citep{Kairouz21federated}.
 Despite these overheads, it is not fully understood whether \emph{interactivity} can lead to significant savings. 

We make progress in this direction and show that for the three examples above interactivity \emph{does} enable more data-efficient solutions. At a high level, our results can be interpreted as follows:
\begin{framed}
\noindent Interactivity allows one to leverage the \emph{structure} (sparsity) of the three tasks considered to obtain provably more data-efficient estimation algorithms (in a minimax sense).
\end{framed}
This is to be put in contrast to two related tasks. First, it has recently been shown that for {unstructured} estimation tasks, allowing for interactivity does not yield any speedup over noninteractive protocols, or, indeed, even over \emph{private-coin} protocols (where the users do not have access to any common random seed, but instead are fully independent)~\citep{BGMNW:16, HOW:18,AcharyaCST21}. 
That is, for unstructured estimation, neither public randomness nor interactivity are useful. Second, for communication (or local privacy) constraints such as the ones considered in this work, a sequence of papers~\citep{AcharyaCT20a,AcharyaCT20b,IIUIC,ACHST:20} showed that \emph{goodness-of-fit testing} (not estimation) could be more data-efficient when allowing for some coordination between users; however, this gain in efficiency was enabled by the use of a common random seed (i.e., public-coin protocols), a weaker setting than interactivity. Moreover, once this common random seed was available, letting the users interact would not lead to any additional saving: put differently, under those constraints public randomness helps for testing, but interactivity does not. 

We state the formal statements of our results in~\cref{ssec:results}, and then put them in context and discuss prior work in~\cref{ssec:prior}. We discuss details about sparse estimation and block-sparse estimation in Section~\ref{sec:sparse_mean} and Section~\ref{sec-block-sparse} respectively. We present extensions to adaptive sensing and estimation under local privacy constraints in the appendix. 

\subsection{Our results and contributions}
\label{ssec:results}
Our first result concerns the lower bound of noninteractive protocols for sparse mean estimation, which provides a lower that establishes a strict separation 
between the performance of interactive and noninteractive protocols:
\begin{theorem}
  \label{theo:bernoullimean:sparse}
  For any $\sprs \geq 4\log\dims$, any $\numbits$-bit noninteractive protocol for mean estimation of $\sprs$-sparse product distributions over $\bool^\dims$ must have sample complexity 
  $
      \bigOmega{\frac{\sprs\dims}{\dst^2\numbits}\log\frac{e\dims}{\sprs}}
  $.
\end{theorem}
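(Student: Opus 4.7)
The plan is to prove this lower bound via a Fano-style reduction to a parameter identification problem, with the combinatorial difficulties created by the unknown sparsity pattern handled through a Baranyai-theorem decomposition of $\binom{[\dims]}{\sprs}$.

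First I would set up a well-separated hypothesis family indexed by $\mathcal{V} = \{(S,\epsilon) : S \in \binom{[\dims]}{\sprs},\ \epsilon \in \{-1,+1\}^\sprs\}$. To each $v = (S,\epsilon)$ I associate the product distribution $P_v$ on $\bool^\dims$ whose mean vector $\mu_v$ is supported on $S$ and equals $c\dst\epsilon_j/\sqrt{\sprs}$ on coordinate $j\in S$, for a small absolute constant $c>0$. Any two distinct elements of $\mathcal{V}$ are $\Omega(\dst)$-separated in $\lp[2]$, so a protocol achieving~\eqref{task:sparsemeanestimation} at accuracy $\dst$ can be post-processed to identify $V$ (uniform on $\mathcal{V}$) with constant probability. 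By Fano's inequality, this forces $I(V; Y_1,\ldots,Y_\ns) \gtrsim \log|\mathcal{V}| \asymp \sprs\log(e\dims/\sprs) + \sprs$. Moreover, because the protocol is noninteractive, conditioned on $V$ the messages $Y_\ui$ are independent (each channel $W_\ui$ is selected from the private randomness without observing past messages), so the chain rule yields $I(V;Y_1,\ldots,Y_\ns) \leq \sum_\ui I(V;Y_\ui)$.

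Second, I would aim for a per-round bound of the form $I(V; Y_\ui) \lesssim \dst^2\numbits/\dims$; combining it with the two displays above gives exactly the claimed $\ns = \bigOmega{\sprs\dims\log(e\dims/\sprs)/(\dst^2\numbits)}$. To obtain the per-round bound I would upper bound $I(V;Y_\ui)$ by the chi-square divergence $\chi^2(P_{V,Y_\ui}\,\|\,P_V P_{Y_\ui})$ and expand. For a $\numbits$-bit channel $W_\ui$ on Bernoulli-type inputs, standard computations reduce this divergence to a quadratic form of the shape $\sum_{y\in[2^\numbits]} \E_V[\langle \mu_V, w_\ui(y)\rangle^2]/\bar{P}_{Y_\ui}(y)$, where $w_\ui(y)\in\R^\dims$ is a ``gradient vector'' of the channel at output $y$ whose norms satisfy a $\numbits$-bit normalization constraint of the form $\sum_y \|w_\ui(y)\|_2^2/\bar{P}_{Y_\ui}(y) \lesssim 2^\numbits$ (actually tightened to a $\numbits$-rather-than-$2^\numbits$ bound via standard refinements).

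Third, the sparse support structure makes the coordinates of $\mu_V$ correlated under the prior on $V$: a naive Cauchy--Schwarz to pass from $\langle \mu_V, w_\ui(y)\rangle^2$ to $\|w_\ui(y)\|_2^2$ would cost an extraneous $\log(e\dims/\sprs)$ factor that must be preserved for $\log|\mathcal{V}|$ in Fano. Here my plan is to invoke Baranyai's theorem to decompose $\binom{[\dims]}{\sprs}$ into $\binom{\dims-1}{\sprs-1}$ perfect matchings, each a collection of $\dims/\sprs$ pairwise-disjoint $\sprs$-subsets. Conditioning on $S$ lying in a fixed matching renders the active coordinates of $\mu_V$ independent across blocks, so the quadratic form collapses (after taking expectation over $\epsilon$ and then over $S$ within the matching) to a clean diagonal sum contributing at most $(c^2\dst^2/\sprs)\cdot\|w_\ui(y)\|_2^2\cdot(\sprs/\dims) = c^2\dst^2\|w_\ui(y)\|_2^2/\dims$. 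Averaging over matchings (which by Baranyai cover the family $\binom{[\dims]}{\sprs}$ uniformly) preserves this per-output bound, and summing over $y$ using the $\numbits$-bit normalization on $w_\ui$ yields the target $I(V; Y_\ui) \lesssim \dst^2\numbits/\dims$.

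The main technical obstacle is carrying out the third step cleanly. The chi-square expansion also produces genuine higher-order cross-terms (arising from expansions of the Bernoulli likelihood ratio to degree $>2$) in which different $\sprs$-subsets interact nontrivially through their intersections. Controlling these via the Baranyai decomposition, and ensuring that the $\log(e\dims/\sprs)$ factor is not absorbed as slack inside the per-round bound but emerges purely from $\log|\mathcal{V}|$ in Fano, is the part of the argument requiring the most care; it is precisely this decoupling of sums of correlated support-indexed terms into independent components per matching that the paper highlights as a potentially reusable technique.
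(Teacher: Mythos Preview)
Your Fano step has a gap: the family $\mathcal{V}$ is \emph{not} $\Omega(\dst)$-separated. Two hypotheses $(S,\epsilon)$ and $(S,\epsilon')$ that share the same support and differ in a single sign satisfy $\normtwo{\mu_v-\mu_{v'}} = 2c\dst/\sqrt{\sprs}$, and similarly two hypotheses whose supports differ in one element are only $\Theta(\dst/\sqrt{\sprs})$ apart. So an estimator achieving $\lp[2]$ error $\dst$ cannot be post-processed to identify $V$ with constant probability, and the direct Fano bound $I(V;Y^\ns)\gtrsim \log|\mathcal{V}|$ does not follow. You would need either a packing of $\mathcal{V}$ (which shrinks the cardinality and destroys the independence structure you rely on later) or a coordinate-wise Assouad-type argument; the paper takes the latter route.

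Concretely, the paper does \emph{not} use your exact-$\sprs$-sparsity prior, and it does \emph{not} use Baranyai for this theorem. Instead it places an i.i.d.\ prior on $Z\in\{-1,0,+1\}^\dims$ with $\bPr{Z\idx{i}=\pm1}=\sprs/(4\dims)$, so that the coordinates of $Z$ are independent (and $Z$ is $\sprs$-sparse only with high probability, which suffices). The lower bound on information is then an Assouad-type statement $\sum_i\mutualinfo{Z\idx{i}}{Y^\ns}=\Omega(\sprs\log(e\dims/\sprs))$, obtained coordinate-by-coordinate via Fano on the ternary $Z\idx{i}$. For the per-user upper bound, the chi-square expansion yields terms indexed by subsets $B\subseteq[\dims]$ with weight $(\sigma^2\gamma^2)^{|B|}$ where $\sigma^2=\sprs/(2\dims)$; the first-order term is controlled by the measure-change lemma, and higher-order terms are handled by \emph{conditioning on one coordinate $X\idx{i}$ at a time} (not Baranyai), which reduces the $r$th-order contribution to the $(r{-}1)$st and gives a geometric series summing to $O(\dst^2\numbits/\dims)$. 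Baranyai's theorem appears in the paper only in the \emph{block-sparse} lower bound, where it decomposes $\binom{B_j}{r}$ for a fixed block $B_j$ of size $\sprs$ into disjoint-subset families; your proposed decomposition of $\binom{[\dims]}{\sprs}$ into perfect matchings is a different (and much heavier) object, and the sketch you give does not actually address the higher-order cross terms it would generate under an exact-sparsity prior.
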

Combined with previously known results for
sparse mean estimation (detailed in Section~\ref{sec:sparse_mean}), this lower bound immediately implies the following:
\begin{corollary}
  \label{coro:bernoullimean:sparse}
  For any $\sprs \geq 4\log\dims$, the \emph{noninteractive} sample complexity of mean estimation of $\sprs$-sparse product distributions over $\bool^\dims$ under $\numbits$-bit communication constraints is
  \smash{$
      \bigTheta{\frac{\sprs\dims}{\dst^2\numbits}\log\frac{e\dims}{\sprs}}
  $}, while the \emph{interactive} sample complexity is
  \smash{$
      \bigTheta{\frac{\sprs\dims}{\dst^2\numbits} + \frac{\sprs}{\dst^2}\log\frac{e\dims}{\sprs}}
  $}.
\end{corollary}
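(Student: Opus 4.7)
The corollary packages four matching bounds around Theorem~\ref{theo:bernoullimean:sparse}, which already supplies the noninteractive lower bound. The plan is to assemble the three remaining pieces: \emph{(a)} a matching noninteractive upper bound, \emph{(b)} an interactive upper bound, and \emph{(c)} a matching interactive lower bound; all three should be either quotable from prior work (as suggested by the reference to \cref{sec:sparse_mean}) or obtainable by standard reductions once the structural ingredients are in place.

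For \emph{(a)}, I would use the standard noninteractive random-coordinate protocol: each user picks a uniformly random $\numbits$-subset of coordinates and reports their bits. Each coordinate is then observed by a $\numbits/\dims$ fraction of the $\ns$ users, so the per-coordinate empirical mean has variance $\bigO{\dims/(\numbits\ns)}$; hard-thresholding at the $\sprs$-th largest magnitude (together with a union bound across $\dims$ coordinates, contributing the $\log(e\dims/\sprs)$ factor) yields $\lp[2]$ error $\dst$ whenever $\ns = \bigO{\sprs\dims\log(e\dims/\sprs)/(\dst^2\numbits)}$. Pairing this with Theorem~\ref{theo:bernoullimean:sparse} closes the noninteractive $\Theta$.

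For \emph{(b)}, I would implement a two-phase interactive protocol. \emph{Phase 1 (support localization)} spends $\ns_1 = \bigO{\sprs\dims/(\dst^2\numbits)}$ samples producing a candidate set $\widehat S$ of size $\bigO{\sprs}$ containing the true support, using interaction to replace the noninteractive union bound with progressive group-testing-style elimination (e.g., $\bigO{\log(\dims/\sprs)}$ rounds, each adaptively halving the candidate pool at a finer resolution). \emph{Phase 2 (estimation on $\widehat S$)} then spends $\ns_2 = \bigO{\sprs\log(e\dims/\sprs)/\dst^2}$ samples estimating the $\bigO{\sprs}$ relevant coordinates via standard centralized sparse-mean analysis, which applies since $\numbits\geq 1$ lets each sample carry an informative bit per relevant coordinate once the support has been narrowed.

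For \emph{(c)}, I would combine two hard families. The $\bigOmega{\sprs\log(e\dims/\sprs)/\dst^2}$ term is the centralized (no-constraint) lower bound for $\sprs$-sparse Bernoulli mean estimation, which any interactive communication-constrained protocol inherits a fortiori since extra information cannot hurt. The $\bigOmega{\sprs\dims/(\dst^2\numbits)}$ term follows by rerunning the Assouad-style argument behind Theorem~\ref{theo:bernoullimean:sparse} with its noninteractive chi-squared decomposition replaced by the interactive counterpart of~\citep{AcharyaCST21}, which saves exactly the $\log(e\dims/\sprs)$ factor. The hardest step will be \emph{(b)}'s Phase~1: a naive $\bigO{\log(\dims/\sprs)}$-round peel accumulates failure probabilities and per-round thresholds in a way that naturally reintroduces the very $\log$ factor interactivity is supposed to save, so the sample budget and detection thresholds across rounds must be chosen on a carefully doubled schedule to keep the total cost at $\bigO{\sprs\dims/(\dst^2\numbits)}$ while the compound failure probability stays below $1/10$.
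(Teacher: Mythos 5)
Your high-level plan is the same as the paper's: assemble four matching bounds around Theorem~\ref{theo:bernoullimean:sparse}. The paper, however, does not re-derive the three "standard" pieces from scratch: the interactive upper bound is quoted directly as~\citet[Proposition~2]{AcharyaCST21} (Lemma~\ref{lemma:ub:sparse:interactive}), the interactive lower bound is quoted from~\citet{BGMNW:16,AcharyaCST21} (Lemma~\ref{lemma:lb:sparse:interactive}), and the noninteractive upper bound (Lemma~\ref{lemma:ub:sparse:noninteractive}) is obtained not by a bespoke random-subset protocol but by the cleaner \emph{simulate-and-infer} reduction: partition the $\dims$ coordinates into $\lceil\dims/\numbits\rceil$ blocks, have groups of $\lceil\dims/\numbits\rceil$ users each report one block, reassemble a genuine sample from the source distribution, and invoke the known unconstrained rate $O(\sprs\log(e\dims/\sprs)/\dst^2)$. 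Your random-coordinate-plus-thresholding protocol for~\emph{(a)} is a legitimate alternative, but it requires a careful order-statistics argument to get $\log(e\dims/\sprs)$ rather than the coarser $\log\dims$ from a naive union bound; the reduction-based route sidesteps this entirely by inheriting the unconstrained rate wholesale. Your~\emph{(c)} matches the cited bound in spirit. Where the proposal is genuinely incomplete is~\emph{(b)}: you are effectively re-proving~\citet[Proposition~2]{AcharyaCST21}, and, as you yourself flag, the budget arithmetic across rounds of progressive elimination is precisely the delicate part of that proof. The intended resolution is simply to cite that proposition rather than attempt to reconstruct it; absent that, "carefully doubled schedule" is a gesture, not an argument, and a referee would reject it as the load-bearing step of the interactive upper bound.
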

This shows that interactive protocols outperform the noninteractive ones by a factor up to $\Omega(\log\dims/\sprs)$. We emphasize that prior to our work this gap was only known for $\dst \ll \sqrt{{\numbits}/{\dims}}$, from~\citep{HOW:18}, \new{even for the case $\sprs=1$.}\medskip

\noindent Our second set of results focuses on  \emph{block sparsity}.
\begin{theorem}
  \label{theo:bernoullimean:blocksparse}
  For any $\sprs \geq 1$, the \emph{noninteractive} sample complexity of mean estimation of $\sprs$-block sparse product distributions over $\bool^\dims$ under $\numbits$-bit communication constraints is 
  \smash{$
      \bigTheta{\frac{\sprs\dims + \dims\log\dims}{\dst^2\numbits}}
  $}, while the \emph{interactive} sample complexity is
  \smash{$
      \tildeO{\frac{\sprs^2+\dims}{\dst^2\numbits} + \frac{\sprs}{\dst^2}}
  $} and \smash{$
      \bigOmega{\frac{\sprs^2+\dims}{\dst^2\numbits} + \frac{\sprs}{\dst^2}}
  $}.
\end{theorem}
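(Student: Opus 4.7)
The plan is to establish the four bounds (noninteractive upper/lower, interactive upper/lower) separately, combining explicit protocol designs for the upper bounds with chi-square / Fano-type arguments for the lower bounds, the heart of which is a Baranyai-style decomposition to decouple correlated position indicators.

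For the \emph{interactive upper bound}, I would use a two-phase protocol. Phase one localizes the block: partition $[\dims]$ into $\dims/\sprs$ overlapping length-$\sprs$ windows and, using a simple testing subroutine that packs $\numbits$ coordinates into each user's message, spend $\tildeO{1/\dst^2}$ samples per window to decide whether a window contains the nonzero block, for a total of $\tildeO{\dims/(\numbits\dst^2)}$. Phase two, once the block has been (approximately) localized, invokes a standard $\sprs$-dimensional Bernoulli mean-estimation protocol under $\numbits$-bit constraints, whose sample complexity is $O(\sprs^2/(\numbits\dst^2) + \sprs/\dst^2)$. For the \emph{noninteractive upper bound}, I would have each client pick a random chunk of $\numbits$ consecutive coordinates and send the corresponding bits; this yields an unbiased estimator with per-coordinate variance $O(\dims/(\numbits\ns))$, and after zeroing out all but the length-$\sprs$ window of largest empirical $\ell_2$-energy, the squared 2-norm error is dominated either by $\sprs\cdot O(\dims/(\numbits\ns))$ (estimation inside the true block) or by $O(\dims\log\dims/(\numbits\ns))$ (union bound over $\dims$ candidate blocks to recover the support), giving the two summands of the stated complexity.

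For the \emph{interactive lower bound}, the three summands decouple naturally: the $\Omega(\sprs/\dst^2)$ is the classical unconstrained $\sprs$-dimensional Bernoulli lower bound applied to a fixed block, the $\Omega(\sprs^2/(\numbits\dst^2))$ follows by reducing dense $\sprs$-dimensional mean estimation under $\numbits$-bit constraints (whose complexity is known to be the same interactively and noninteractively) to a fixed block position, and the $\Omega(\dims/(\numbits\dst^2))$ is the $\sprs=1$ case, where a single spike placed uniformly over $\dims$ positions gives the desired bound via a standard interactive argument.

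The \emph{noninteractive lower bound} $\Omega((\sprs\dims + \dims\log\dims)/(\numbits\dst^2))$ is where the bulk of the work lies. The $\Omega(\dims\log\dims/(\numbits\dst^2))$ piece mirrors the argument of \cref{theo:bernoullimean:sparse} specialized to a single spike hidden among $\dims$ positions. For the $\Omega(\sprs\dims/(\numbits\dst^2))$ piece, I would place a uniform prior over the $\dims-\sprs$ possible block positions with each intra-block coordinate independently $\pm\dst/\sqrt{\sprs}$, and lower-bound the information that any noninteractive $\numbits$-bit protocol can extract via a chi-square decomposition of Le Cam / Assouad type. The subtle point is that contiguity of the blocks makes the indicators ``position $j$ lies in the block'' highly correlated across nearby block positions, so a naive expansion of the chi-square sums over $\sprs$ coupled coordinates and bleeds a factor of $\sprs$. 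This is exactly where Baranyai's theorem enters: decomposing the set of candidate block positions into parallel classes of pairwise disjoint intervals lets me split the chi-square into $\sprs$ independent contributions, each of which can be analyzed with the single-coordinate tools from the proof of \cref{theo:bernoullimean:sparse}. I expect setting up the block-position combinatorics as a Baranyai-decomposable hypergraph and carrying out the per-class analysis without losing log factors to be the most delicate step of the proof.
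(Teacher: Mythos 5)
Your overall architecture (four separate bounds, Fano/Assouad for the lower bounds, a localize-then-estimate two-stage interactive protocol) matches the paper, and your noninteractive upper bound and interactive lower bound are essentially the same as the paper's. However, there is a genuine conceptual gap in the piece you yourself flag as ``the bulk of the work,'' the $\Omega(\sprs\dims/(\numbits\dst^2))$ noninteractive lower bound.

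You propose to put a prior over the $\dims-\sprs$ \emph{overlapping} block positions and to use Baranyai's theorem to decompose these overlapping positions into parallel classes of disjoint intervals. This misidentifies both the source of the correlation and the object to which Baranyai is applied. First, the paper sidesteps cross-position correlation entirely by using $b=\dims/\sprs$ \emph{nonoverlapping} blocks $B_1,\dots,B_b$ and conditioning on the block index $J$, so there is never any need to decouple overlapping block positions (and in any case, decomposing shifted length-$\sprs$ intervals into $\sprs$ disjoint parallel classes is trivial and requires nothing as heavy as Baranyai). Second, and this is the crux, even with nonoverlapping blocks the $\chi^2$ expansion of $\condmutualinfo{Z}{Y_t}{J=j}$ contains, at order $r\geq 2$, terms $\bE{\uniform}{W(y\mid X)\prod_{i\in B}X\idx{i}}^2$ summed over all size-$r$ subsets $B\subseteq B_j$. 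Different subsets share coordinates, so the random vector whose coordinates are $\bigl(\prod_{i\in B}X\idx{i}\bigr)_{B\subseteq B_j,|B|=r}$ does \emph{not} have independent coordinates, and the measure-change bound (Lemma~\ref{l:basic_mc}) cannot be applied to it directly. That is the correlation the paper must handle, and that is where Baranyai enters: the complete $r$-uniform hypergraph on the $\sprs$ vertices of $B_j$ is partitioned into at most $2\binom{\sprs-1}{r-1}$ perfect matchings, each consisting of \emph{disjoint} size-$r$ subsets, so within each matching the monomial products are independent $\pm1$ variables and Lemma~\ref{l:basic_mc} applies to each matching separately. Your proposal never mentions this within-block monomial correlation, so even with the prior and the position decomposition you describe, the argument would get stuck precisely at the point where Baranyai is actually needed.

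A secondary gap is in the interactive upper bound: the detection phase's ``simple testing subroutine that packs $\numbits$ coordinates into each user's message'' does not clearly achieve $\tildeO{1/\dst^2}$ effective samples per window. Testing whether a $\sprs$-coordinate window has $\ell_2$ energy above $\dst^2/2$ is a quadratic-functional estimation problem, not a per-coordinate mean test, and a direct energy estimator costs an extra $\sqrt{\sprs}$ factor. The paper instead uses a shared Rademacher sketch $\bar{X}_j=\sum_{i\in B_j}\xi\idx{i}X\idx{i}$ (clipped and stochastically rounded to one bit), invoking Paley--Zygmund to argue that with constant probability over $\xi$ the projected bias $\sum_{i\in B_j}\xi\idx{i}\mu\idx{i}$ is $\Omega(\dst)$; this reduces energy detection to detecting a biased coordinate of a $(\dims/\sprs)$-dimensional product distribution at the right scale. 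Without some such linearizing projection, the complexity you claim for phase one does not follow from the steps you describe.
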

Only a restricted version of the upper bound in the interactive case was previously known;\footnote{That is, the existing upper bound worked under an additional promise on the block sparsity, which was that all biased coordinates had the same magnitude~\citep{AcharyaCMT21}.} our results, by complementing them with the required lower bounds (as well as the noninteractive upper bound) establish that interactivity leads to significant savings under this more structured sparsity constraint. As an example, for $\sprs \approx \sqrt{\dims}$, the sample complexity for interactive protocols is $\tilde{\Theta}(\dims/(\dst^2\numbits))$ whereas that of noninteractive protocols is $\tilde{\Theta}(\dims^{3/2}/(\dst^2\numbits))$. \new{Interestingly, establishing the lower bound in the noninteractive case (Lemma~\ref{lemma:blocksparse:noninteractive}) requires circumventing many technical hurdles, and in particular handling high-order correlations between random variables when trying to bound the expectation of a multivariate polynomial with the measure change bound of Lemma~\ref{l:basic_mc}. To achieve this, we carefully decompose the dependency (hyper)graph of the resulting monomials into sums of independent terms, taking recourse to a result of~\citet{baranyai1974factorization} on factorization of hypergraphs (Lemma~\ref{lem:baranyai}). We believe this strategy to be of independent interest, with applications to other statistical lower bounds in distributed settings.}\medskip

\noindent Finally, our third set of results departs from communication constraints, and instead focuses on the well-studied question of \emph{compressive sensing}. Specifically, as discussed earlier, we consider the problem of estimating (under the $\lp[2]$ loss) an $\sprs$-sparse signal, when the only measurements allowed are $\nlm$-dimensional noisy linear measurements (as defined in~\cref{task:compressivesensing}).
\begin{theorem}
  \label{theo:compressive}
  For any $\sprs \geq 4\log\dims$, there exists an interactive protocol for compressive sensing for $\sprs$-sparse vectors using $\nlm$-dimensional noisy linear measurements with sample complexity 
  $
      \bigO{\frac{\sprs\dims}{\dst^2\nlm} + \frac{\sprs}{\dst^2}\log \frac{e\dims}{\sprs}}
  $.
\end{theorem}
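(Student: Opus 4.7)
The plan is to design a two-phase interactive protocol. In Phase~1, I would use isotropic random $\nlm$-dimensional measurements to produce a crude coordinate-wise estimate $\tilde{x}$ of $x$ and read off a candidate support $\hat{S} \subseteq [\dims]$ of size $O(\sprs)$. In Phase~2, all subsequent measurement matrices would be supported on $\hat{S}$, reducing the task to a much lower-dimensional compressive sensing instance. Interactivity enters only through letting the Phase~2 matrices depend on the random set $\hat{S}$ computed at the end of Phase~1.

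For Phase~1, I take $\ns_1 = \Theta(\sprs\dims/(\dst^2\nlm))$ samples, and for each $\ui \leq \ns_1$ draw $A_\ui$ uniformly at random from the Haar measure on $\nlm \times \dims$ matrices with orthonormal rows. Because $\mathbb{E}\bigl[A_\ui^\top A_\ui\bigr] = (\nlm/\dims)\, I_\dims$, the estimator
\[
    \tilde{x} \eqdef \frac{\dims}{\nlm\,\ns_1} \sum_{\ui=1}^{\ns_1} A_\ui^\top Y_\ui
\]
is unbiased for $x$, with coordinate-wise variance $O\bigl(\dims(1+\normtwo{x}^2)/(\nlm\,\ns_1)\bigr)$ by a direct second-moment computation on Haar projections; under a standard normalization on $\normtwo{x}$ the chosen $\ns_1$ therefore makes each coordinate's standard deviation $O(\dst/\sqrt{\sprs})$. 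Thresholding $|\tilde{x}_i|$ at level $\dst/(2\sqrt{\sprs})$ defines $\hat{S}$; a subgaussian concentration bound for quadratic forms of Haar projections, combined with a union bound over the $\dims$ coordinates (here the hypothesis $\sprs \geq 4\log\dims$ is what makes the tail strong enough), yields $|\hat{S}| = O(\sprs)$ and $\hat{S} \supseteq \{i : |x_i| \geq \dst/\sqrt{\sprs}\}$ with probability at least $9/10$.

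For Phase~2, with $\hat{S}$ fixed I would draw $\ns_2 = O\bigl((\sprs/\dst^2)\log(e\dims/\sprs)\bigr)$ further samples whose measurement matrices have row support contained in $\hat{S}$, and recover $\hat{x}_{\hat{S}}$ via any standard non-interactive compressive sensing decoder on the induced problem of ambient dimension $|\hat{S}|$. Since $|\hat{S}| = O(\sprs)$, this yields $\normtwo{\hat{x}_{\hat{S}} - x_{\hat{S}}} \leq \dst/2$ comfortably within the stated budget; extending $\hat{x}$ by zero outside $\hat{S}$ and noting that the coordinates missed by Phase~1 each have magnitude below $\dst/\sqrt{\sprs}$, their total squared contribution is at most $\dst^2/4$, so the final estimate achieves $\ell_2$ error at most $\dst$. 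Summing, $\ns_1 + \ns_2 = O\bigl(\sprs\dims/(\dst^2\nlm) + (\sprs/\dst^2)\log(e\dims/\sprs)\bigr)$, as claimed.

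The main technical obstacle will be the variance and concentration analysis in Phase~1: the randomness of the Haar-random $A_\ui$ couples with both the Gaussian noise $Z_\ui$ and the (adversarial) signal $x$, and I will need sharp enough moment/subgaussian bounds for Haar isometries to make the per-coordinate tail estimate survive a union bound over all of $[\dims]$. A secondary subtlety is calibrating the threshold so that neither too many small coordinates are wrongly discarded (which would inflate the $\ell_2$ tail past $\dst^2/4$) nor too many wrongly included (which would blow up $|\hat{S}|$ and hence the Phase~2 cost).
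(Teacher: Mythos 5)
Your two-phase ``detect then estimate'' plan is a genuinely different route from the paper's, but Phase~1 as described does not work, and the gap is structural rather than merely technical. With $\ns_1 = \Theta(\sprs\dims/(\dst^2\nlm))$ samples, you have $\nlm\ns_1 = \Theta(\sprs\dims/\dst^2)$ scalar observations in total, so each coordinate of $\tilde{x}$ is effectively estimated from $\Theta(\sprs/\dst^2)$ unit-variance observations and has standard deviation $\sigma_0 = \Theta(\dst/\sqrt{\sprs})$. Thresholding $|\tilde{x}_i|$ at $\dst/(2\sqrt{\sprs})$ is then a cutoff at $\sigma_0/2$, which gives a per-coordinate false-positive probability that is a fixed constant; over $\dims - \sprs$ truly-zero coordinates this yields $|\hat{S}| = \Theta(\dims)$, not $O(\sprs)$. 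The hypothesis $\sprs \geq 4\log\dims$ does not rescue this: it has nothing to do with concentration of $\tilde{x}_i$ (in the paper it serves an entirely different purpose, namely making the random prior in the lower-bound construction $\sprs$-sparse with high probability). To force $|\hat{S}| = O(\sprs)$ you would need the threshold to be $\Omega\bigl(\dst\sqrt{\log(\dims/\sprs)/\sprs}\bigr)$, and the coordinates you then discard can carry total squared energy $\Theta(\dst^2\log(\dims/\sprs))$, blowing the error budget. The only single-pass fix is to shrink $\sigma_0$ by a $\sqrt{\log(\dims/\sprs)}$ factor, i.e., take $\ns_1 = \Theta\bigl(\sprs\dims\log(\dims/\sprs)/(\dst^2\nlm)\bigr)$ — but that is exactly the noninteractive rate, so interactivity then buys you nothing. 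The resolution in the literature, and implicitly in the paper, is a multi-round successive-elimination scheme in which each round geometrically shrinks the candidate set, the per-round budgets form a geometric series, and the per-coordinate SNR improves round by round; a two-phase scheme cannot reproduce that.

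The paper's proof instead reduces to sparse Bernoulli mean estimation under $\numbits$-bit communication: it restricts the measurement matrices to coordinate-selection matrices $A_S = [e_i]_{i\in S}$ with $|S| = \nlm$, takes $Y' = (\sign(Y\idx{i}))_{i\in S}$ so that $Y'$ is Bernoulli with per-coordinate mean $\mu\idx{i} = \operatorname{Erf}(x\idx{i}/\sqrt{2})$, uses \citet[Lemma~7]{AcharyaCST21} to transfer an $\ell_2$ guarantee on $\mu$ to one on $x$ up to the constant $\sqrt{e\pi/2}$, and then directly invokes the interactive protocol of \cref{lemma:ub:sparse:interactive} (which is the multi-round successive-elimination scheme). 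The final observation making the reduction go through is that this existing protocol already has each user transmit $\numbits$ raw coordinates of its sample, so it can be instantiated verbatim with $\numbits = \nlm$. Your high-level intuition — use interaction to adaptively narrow the support — is the right one, but you would need to implement it as successive elimination rather than a single detection round, or, as the paper does, piggy-back on the communication-constrained protocol that already encapsulates the round structure.
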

Combined with known results on compressive sensing (detailed in Appendix~\ref{sec:adaptive_sensing}), our upper bound readily implies the following:
\begin{corollary}
  For any $\sprs \geq 4\log\dims$, the \emph{noninteractive} sample complexity of compressive sensing for $\sprs$-sparse vectors using $\nlm$-dimensional random measurements is
  \smash{$
      \bigTheta{\frac{\sprs\dims}{\dst^2\nlm}\log\frac{e\dims}{\sprs}}
  $}, while the \emph{interactive} sample complexity is
  \smash{$
      \bigTheta{\frac{\sprs\dims}{\dst^2\nlm}+ \frac{\sprs}{\dst^2}\log \frac{e\dims}{\sprs}}
  $}.
\end{corollary}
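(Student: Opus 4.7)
The plan is to assemble the two $\bigTheta$ characterizations by combining \cref{theo:compressive} with well-known results from the (adaptive and nonadaptive) compressive sensing literature, as surveyed in \cref{sec:adaptive_sensing}. Only the interactive upper bound requires the new machinery of \cref{theo:compressive}; the remaining three bounds are classical. So I would first state the interactive upper bound as a direct invocation of \cref{theo:compressive}, and then separately address the interactive lower bound, the noninteractive upper bound, and the noninteractive lower bound.

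For the \emph{interactive} direction, the upper bound $O\bigl(\sprs\dims/(\dst^2\nlm)+(\sprs/\dst^2)\log(e\dims/\sprs)\bigr)$ is \cref{theo:compressive}. The matching lower bound follows by combining two standard obstructions from adaptive sensing. The first term $\Omega(\sprs\dims/(\dst^2\nlm))$ is the per-measurement SNR bound: each unit-norm row in $\R^\dims$ contributes SNR at most $\dst^2/\dims$ against a signal of $\ell_2$ norm $\dst$, so even with full adaptivity one needs $\Omega(\sprs\dims/\dst^2)$ total measurements to estimate a dense $\sprs$-dimensional subvector (formalized by a Le~Cam / Assouad argument insensitive to adaptivity). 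The second term $\Omega((\sprs/\dst^2)\log(e\dims/\sprs))$ is the classical information-theoretic support-identification lower bound, obtained from a Fano-type packing of the $\binom{\dims}{\sprs}$ admissible supports, and applies a fortiori to interactive compressive sensing since it already holds for \emph{unrestricted} (non-measurement) queries.

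For the \emph{noninteractive} direction, the upper bound follows from standard nonadaptive compressive sensing: a Gaussian (or subgaussian) measurement matrix with $N=\Theta\bigl((\sprs\dims/\dst^2)\log(e\dims/\sprs)\bigr)$ unit-norm rows satisfies the restricted isometry property and permits $\ell_2$-recovery to accuracy $\dst$ via LASSO or basis pursuit; distributing the $N$ rows into $\ns=N/\nlm$ batches of $\nlm$ measurements yields the claimed sample complexity. The matching lower bound is the classical information-theoretic bound for nonadaptive compressive sensing, showing that any noninteractive scheme with unit-norm rows requires $\Omega\bigl((\sprs\dims/\dst^2)\log(e\dims/\sprs)\bigr)$ total measurements to achieve $\ell_2$-accuracy $\dst$, the extra $\log(e\dims/\sprs)$ factor over the adaptive case arising precisely from the inability to search for the support online. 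The main obstacle I anticipate is purely bookkeeping: aligning the normalization conventions (unit-norm rows, unit-variance Gaussian noise, signal norm $\dst$) between the cited bounds and the corollary's statement, so that the $\dims/\nlm$ factors arising from per-measurement SNR fall out correctly; once these are matched up, the corollary follows by direct substitution.
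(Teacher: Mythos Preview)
Your proposal is correct and matches the paper's approach: the corollary is assembled by invoking \cref{theo:compressive} for the interactive upper bound and citing known compressive-sensing results for the remaining three bounds. The paper's version (in \cref{sec:adaptive_sensing}) is terser, attributing the interactive lower bound to \citet{AriasCD12} together with the centralized sparse-estimation lower bound, and the noninteractive lower bound to \citet{RaskuttiWY11}; you instead sketch the underlying SNR/Fano/RIP arguments, but the decomposition and logic are the same.
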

Taken together, our three sets of results show that, across various tasks, interactivity \emph{does} help for estimation under constraints, as soon as sparsity enters the picture. Further, it is not too hard to show that the analogues of Corollary~\ref{coro:bernoullimean:sparse}  and \cref{theo:bernoullimean:blocksparse} hold for \emph{local privacy} constraints as well, replacing $\numbits$ by square of the privacy parameter, which demonstrates corresponding separations under LDP.
\begin{theorem}[Local privacy (LDP)]
  All the bounds from Corollary~\ref{coro:bernoullimean:sparse} and \cref{theo:bernoullimean:blocksparse} hold when considering $\priv$-LDP constraints instead of $\numbits$-bit communication constraints, replacing $\numbits$ by $\priv^2$ in the corresponding expressions for any value of the privacy parameter $\priv \in (0,1]$.
\end{theorem}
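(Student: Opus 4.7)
The plan is to exploit the \emph{modularity} of the proofs of Corollary~\ref{coro:bernoullimean:sparse} and \cref{theo:bernoullimean:blocksparse} in the per-user information constraint. Both $\numbits$-bit communication and $\priv$-LDP can be modelled as a family of feasible channels $W$ from the sample space into some output alphabet, and in each lower-bound argument the constraint enters only through a divergence-contraction estimate of the form
\[
    D_{\mathrm{KL}}\!\bigl(W \!\circ\! P \,\Vert\, W \!\circ\! Q\bigr) \;\leq\; C \cdot d_{\mathrm{TV}}(P,Q)^2,
\]
uniform over feasible $W$ and over the hypotheses $P,Q$ used in the constructions. For $\numbits$-bit channels this holds with $C = O(\numbits)$; for $\priv$-LDP with $\priv \in (0,1]$, the Duchi--Jordan--Wainwright contraction bound gives the same inequality with $C = O(\priv^2)$. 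Substituting the LDP estimate for the communication estimate inside Lemma~\ref{l:basic_mc} propagates mechanically through each lower-bound proof, turning $\numbits$ into $\priv^2$ in the resulting sample-complexity expressions.

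For the matching upper bounds, I would rerun the protocols underlying Corollary~\ref{coro:bernoullimean:sparse} and \cref{theo:bernoullimean:blocksparse}, replacing each per-user $\numbits$-bit quantization of a coordinate by its LDP analogue: randomized response for Bernoulli coordinates, and Laplace-noise release for real-valued reports. Both primitives produce unbiased estimators with per-user variance $\Theta(1/\priv^2)$ in place of the $\Theta(1/\numbits)$ variance of the original quantizers when $\priv \in (0,1]$, so the sample-size calculations recompute identically with $\numbits \mapsto \priv^2$. The interactive block-sparse algorithm in particular uses only coordinate selection and averaging of noisy local reports, all of which have standard LDP counterparts with the expected variance scaling; the noninteractive upper bounds are obtained by applying the same reduction to a uniformly chosen coordinate.

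The hard part will be the block-sparse noninteractive lower bound (Lemma~\ref{lemma:blocksparse:noninteractive}), whose proof uses the Baranyai-based hypergraph decomposition to split a high-order polynomial in correlated transcript variables into groups of independent terms, each bounded via Lemma~\ref{l:basic_mc}. Porting this to the LDP setting requires that the divergence-contraction estimate apply uniformly to every group produced by the decomposition, with the LDP constant $O(\priv^2)$ rather than $O(\numbits)$, and that the per-user bound compose correctly in the sequentially interactive regime. Both properties are available in the standard LDP chain-rule framework, so checking that the Baranyai covering goes through with the LDP-flavoured per-user bound is a bookkeeping task rather than a conceptual one; once it is verified the remainder of the computation is formally identical, and the claimed separations under LDP follow.
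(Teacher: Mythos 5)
The high-level idea — that the per-user constraint enters the lower bound locally and that the argument can therefore be ``ported'' to LDP — is correct, and the upper-bound side of your plan (replacing quantization with randomized response / Laplace reports, then rerunning the sample-size calculation) matches what the paper does, citing the noninteractive LDP estimator of Duchi–Jordan–Wainwright and the interactive LDP results of Acharya–Canonne–Sun–Tyagi and Duchi–Rogers.

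However, your description of the lower-bound port contains a genuine gap. You say one should substitute an LDP contraction estimate ``inside Lemma~\ref{l:basic_mc}'' and propagate mechanically. But Lemma~\ref{l:basic_mc} produces a factor of $\bE{}{a(X)\ln(a(X)/\bE{}{a(X)})}$, which after summing over outputs $y$ is exactly the channel's mutual information $I(\uniform;W)$. For an $\numbits$-bit channel this is at most $\numbits$, which is why the lemma works in the communication case. For an $\priv$-LDP channel, the mutual information is bounded by $O(\priv)$, not $O(\priv^2)$: the likelihood ratio is only controlled multiplicatively, so the KL-to-prior is $O(\priv)$, not $O(\priv^2)$. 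If you feed this into Lemma~\ref{l:basic_mc} you lose a factor of $\priv$ and obtain a lower bound that is off by $1/\priv$. The paper's actual fix is to \emph{replace} Lemma~\ref{l:basic_mc} with a different tool, Lemma~\ref{lem:bessel:variance}, a Bessel's-inequality bound that controls $\sum_i \bE{X}{\phi_i(X)W(y\mid X)}^2$ by $\var_X[W(y\mid X)]$; only then does the LDP constraint give $\var_X[W(y\mid X)] \le (e^\priv-1)^2 \bE{X}{W(y\mid X)}^2 = O(\priv^2)\bE{X}{W(y\mid X)}^2$, which yields the correct $\priv^2$ scaling. This swap of the key analytic lemma — rather than a constant substitution inside the same lemma — is the ingredient your proposal is missing. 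With the Bessel-based bound in hand, the rest of your plan (carrying the argument through the block-sparse case via the Baranyai decomposition) would go through as you describe, since the decomposition itself is constraint-agnostic and only the per-group bound needs to be applied with the new lemma.
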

We provide the necessary definitions and the proof of this theorem in~\cref{sec:ldp}.

\subsection{Prior and related work}
    \label{ssec:prior}
In the recent years, there has been a significant work on both distribution mean estimation and signal estimation under various constraints. We highlight below the most relevant to our work.

Distributed mean estimation under both communication and privacy constraints has been extensively considered~\citep{Shamir:14, ErlingssonPK14, DJW:17, HOW:18:v1, BCO:20, BGMNW:16, YeB17, AcharyaCT20a, ASZ:18:HR}. Most of these results pertain to noninteractive protocols, namely schemes where the measurements/messaging schemes are decided simultaneously, not allowing for dependence on the outcomes from prior symbols. There are some notable exceptions.~\cite{BGMNW:16,DR:19} establish interactive lower bounds for estimating high dimensional distributions under communication and local privacy constraints. Their strong results establish that the minimax rates of interactive and noninteractive schemes are the same. However, these minimax lower bounds are tight only for \emph{dense} distributions.~\cite{BGMNW:16} considered sparse high-dimensional mean estimation under communication and establish lower bounds for interactive schemes and upper bounds for noninteractive schemes; still, their result leave open the existence of a gap between the two for sparse mean estimation. Similarly,~\cite{DR:19} consider sparse mean estimation under local privacy: their work also leaves unanswered the existence of a gap between the interactive lower bounds and their noninteractive upper bounds.~\cite{Shamir:14} consider 1-sparse mean estimation for $d$-dimensional product distributions, and their bounds also have a similar gap. 

Block-sparse signals are common in several applications such as DNA microarrays, sensor networks and MIMO communication systems~\citep{ElhamifarV13, StojnicPH09, BarbotinHRV12, BaronDWSB09, GogineniN11, ShoukryT15, VorobyovGW04,BaraniukCDH10}. Estimating distributions with block-sparse means was considered in~\cite{AcharyaCMT21}. They study the constraint where one has access to a few coordinates of each sample and showed that for this constraint there is a separation between interactive and noninteractive protocols. This is in the context of first-order optimization, where they used a reduction to this mean estimation problem in order to show that adaptive processing of gradients can lead to faster convergence rates for distributed optimization. 

Compressive sensing has been immensely popular since the pioneering works of~\cite{CandesRT06, Donoho06}. Adaptive sensing, \ie choosing the measurements adaptively, was studied in~\cite{AriasCD12} for the case $\nlm=1$. Their results leave open a logarithmic (in the dimension) gap between upper and lower bounds on the number of measurements.

\anote{Himanshu, add some about compressive sensing here?}

The question of whether interactivity helps under local privacy constraints has been extensively studied, starting with the influential work of~\citet{KLNRS:08}, who designed a problem for which there show a separation between interactive and noninteractive schemes. \citet{DF:18} designed a class of Boolean functions for which learning under interactive LDP protocols is exponentially more expensive than noninteractive schemes.~\cite{DF:20} showed that exponentially more samples are needed to learn linear models with convex loss without interaction than that with, under both privacy and communication constraints.~\cite{JMNR:19} went a step further and showed that allowing for fully interactive schemes can provide an advantage over sequentially interactive schemes. \cite{ullman2018private} proves a lower bound for locally private hypothesis selection for noninteractive protocols, which can be viewed as a 1-sparse mean estimation problem.

Another line of work~\citep{pmlr-v65-agarwal17c, NEURIPS2019_0d441de7, thananjeyan2021pac} shows that interactivity brings advantage for the task of best arm identification in multi-armed bandits. The feedback model in multi-armed bandit can be simulated by a 1-bit communication protocol, hence our result would imply the same separation.
\subsection{Notation and Preliminaries.}
We use $\log$ and $\ln$ for logarithm in base 2 and natural logarithm respectively. 
Throughout the paper, we use standard asymptotic notation $\bigO{\cdot}$, $\bigOmega{\cdot}$, and $\bigTheta{\cdot}$, with asymptotics to be taken as $\dims,\sprs \gg 1$ and small $\dst$. Our lower bounds will routinely involve both Kullback--Leibler (KL) and chi-squared ($\chi^2$) divergences between probability distributions, defined as
\[
    \kldiv{\p}{\q} \eqdef \sum_{x\in\cX} \p(x)\ln\frac{\p(x)}{\q(x)}, \qquad 
    \chisquare{\p}{\q} \eqdef \sum_{x\in\cX} \frac{(\p(x)-\q(x))^2}{\q(x)}
\]
for any two distributions $\p,\q$ over a (discrete) domain $\cX$, with the convention that $0\ln0=0$. These divergences satisfy $\kldiv{\p}{\q} \leq \chisquare{\p}{\q}$. We will also require the notion of (Shannon) entropy $H(X)=-\sum_{x\in\cX}\p_X(x)\log\p_X(x)$ of a random variable $X$ with distribution $\p_X$, as well as that of the mutual information $\mutualinfo{X}{Y}$ between two random variables $X,Y$, defined as
\[
    \mutualinfo{X}{Y} \eqdef \kldiv{\p_{XY}}{\p_X\otimes\p_Y},
\]
where $\p_{XY}$, $\p_X,\p_Y$ are the joint distribution of $(X,Y)$ and the marginal distributions of $X$ and $Y$, respectively, and $\p\otimes\q$ is the product distribution with marginals $\p,\q$. We will also use the conditional mutual information $\condmutualinfo{X}{Y}{Z}$, defined as $\condmutualinfo{X}{Y}{Z} \eqdef \bE{Z}{\kldiv{\p_{XY\mid Z}}{\p_{X\mid Z}\otimes\p_{Y\mid Z}}}$ (where $\p_{XY\mid Z}$, $\p_{X\mid Z},\p_{Y\mid Z}$ are now the analogous distributions, conditioned on $Z$). For more on these notions and their properties, we refer the reader to the textbook by~\citet{CoverThomas:06}.

Throughout the paper, we often use the term \emph{channel} to refer to the probabilistic mapping from the user's observation to messages. Formally, the $t$th user selects a channel $W_t\colon \cX \to \cY$, where, for all input $x \in \cX$ and possible output $y \in \cY$,
\[
    W_t(y \mid x) = \bPr{Y_t = y \mid X_t = x}.
\]
For instance, by restricting the output space $\cY$ to satisfy $\abs{\cY} \le 2^\numbits$, the formulation captures $\numbits$-bit communication constraints. In \emph{noninteractive} protocols, users must select their channels independently without observing each other's message. In contrast, for (sequentially) \emph{interactive}\footnote{The lower bounds in the paper also holds for fully interactive protocols (the so-called \emph{blackboard model}) while the provided upper bounds only require sequential interactivity. We focus on sequentially interactive protocols in this paper for clarity of presentation.} protocols, the $t$th user can select their channel based on previous users' messages $Y_1, Y_2, \ldots, Y_{t-1}$. For both interactive and non-interactive protocols considered in this paper, we assume all users and the server have access to a public random seed $U$, which is independent of the samples.\footnote{For a formal definition, see~\citet{IIUIC}.}

We will rely in our proofs on the following measure change bound from~\cite{AcharyaCT20c}.
\begin{lemma}[A measure change bound]\label{l:basic_mc}
	Consider a random variable $X$ taking values in $\cX$. Let
	$\Phi\colon\cX\to\R^\dims$ be such that the random vector
	$\Phi(X)$ has independent coordinates and is 
	$\sigma^2$-subgaussian.
	Then, for any
	function $a\colon\cX\to[0,\infty)$ such that $\bEE{a(X)}<\infty$, we 
	have
	\[
	\normtwo{\bE{}{\Phi(X)a(X)}}^2\leq 2(\ln 2)\sigma^2\bE{}{a(X)}
	\bE{}{a(X)\ln \frac{a(X)}{\bE{}{a(X)}}}.
	\]
\end{lemma}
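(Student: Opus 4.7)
The plan is to recognize the claim as a (quadratic) transportation-type inequality in disguise and derive it via the Donsker--Varadhan variational formula combined with subgaussian tensorization. The pivotal move is a change of measure that turns the weighted expectation $\bE{}{\Phi(X)a(X)}$ into an expectation under a tilted law and the ``entropy'' factor $\bE{}{a(X)\ln(a(X)/\bE{}{a(X)})}$ into a Kullback--Leibler divergence.

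Concretely, set $\alpha \eqdef \bE{}{a(X)}$. Since $a$ is nonnegative, $a(X)/\alpha$ is the Radon--Nikodym derivative (with respect to the law $P$ of $X$) of a probability measure $Q$ on $\cX$. Under this tilt, $\bE{}{\Phi(X)a(X)} = \alpha\cdot\bE{Q}{\Phi(X)}$ and $\bE{}{a(X)\ln(a(X)/\alpha)} = \alpha\cdot\kldiv{Q}{P}$. So, dividing both sides of the lemma by $\alpha^2$, it suffices to prove the dimension-free transportation bound
\[
\normtwo{\bE{Q}{\Phi(X)}}^2 \le 2(\ln 2)\sigma^2\,\kldiv{Q}{P}.
\]

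For this, I would apply Donsker--Varadhan: for any $\lambda \in \R^\dims$ and any $\eta > 0$,
\[
\eta\,\bE{Q}{\dotprod{\lambda}{\Phi(X)}} \le \kldiv{Q}{P} + \ln \bE{}{e^{\eta \dotprod{\lambda}{\Phi(X)}}}.
\]
By the assumed independence of the coordinates of $\Phi(X)$, the MGF on the right factorizes coordinate-wise; the subgaussianity hypothesis then gives $\ln \bE{}{e^{\eta \dotprod{\lambda}{\Phi(X)}}} \le (\ln 2)\eta^2\sigma^2\normtwo{\lambda}^2/2$ (matching the $\ln 2$ factor when the paper's subgaussianity is read in base $2$). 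Setting $\lambda = \bE{Q}{\Phi(X)}$ and optimizing over $\eta>0$ (a standard Herbst-style optimization) yields the displayed transportation inequality, and multiplying back by $\alpha^2$ recovers the lemma.

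\paragraph{Main obstacle.} The one place where the argument could fail is the factorization of the MGF: independence of the coordinates of $\Phi(X)$ is what buys the dimension-free scaling, and replacing it with any union-bound surrogate would cost a factor of $\dims$ that spoils the statement. Beyond this, the main conceptual step is spotting the correct change of measure (the tilt by $a/\alpha$) so that the right-hand side can be read as a KL divergence; once this is done, the remainder is textbook convex duality.
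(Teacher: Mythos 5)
Note that the paper does not actually prove this lemma: it is imported verbatim from \citet{AcharyaCT20c} (``We will rely in our proofs on the following measure change bound from...''), so there is no in-paper argument to compare against. Judged on its own terms, your proposal is correct and is indeed the same Donsker--Varadhan route used in the cited source: tilting $P$ by $a/\bEE{a(X)}$ turns the entropy-like factor into $\alpha\kldiv{Q}{P}$, independence of the coordinates of $\Phi(X)$ factorizes the cumulant generating function so that subgaussianity gives the dimension-free quadratic bound in $\normtwo{\lambda}$, and setting $\lambda=\bE{Q}{\Phi(X)}$ plus Herbst-type optimization over $\eta$ closes the argument. Two small bookkeeping points worth recording. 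First, the bound is false without the (implicit, part of the subgaussian convention) assumption that $\Phi(X)$ is centered: with $a\equiv 1$ the right-hand side vanishes while the left-hand side does not. Second, the $\ln 2$ you absorb into the moment-generating-function estimate is better read as a change of logarithm base: the standard natural-log subgaussian definition directly yields $\normtwo{\bE{Q}{\Phi(X)}}^2\le 2\sigma^2\kldiv{Q}{P}$, and the extra $(\ln 2)$ appears because in the applications (e.g.\ the chain of inequalities culminating in $I(\uniform;W)\le\numbits$) the mutual information is measured in bits, i.e.\ the $\ln$ inside the lemma's expectation is really a $\log_2$ — a wrinkle your remark correctly senses but misattributes to the subgaussian normalization.
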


\section{Sparse mean estimation under communication constraints} \label{sec:sparse_mean}
We first establish~\cref{theo:bernoullimean:sparse}, thus establishing the claimed gap between interactive and noninteractive communication-constrained  sparse mean estimation.

Of the four ingredients required to prove~\cref{theo:bernoullimean:sparse} (two upper bounds, and two lower bounds), three follow from the literature; we restate them below for completeness. The following statements establish the sample complexity for interactive sparse mean estimation.

\begin{lemma}[{\cite[Proposition~2]{AcharyaCST21}}]
    \label{lemma:ub:sparse:interactive}
  For any $\sprs \geq 1$, the \emph{interactive} sample complexity of mean estimation of $\sprs$-sparse product distributions over $\bool^\dims$ under $\numbits$-bit communication constraints is
  {$
      \bigO{\frac{\sprs\dims}{\dst^2\numbits} + \frac{\sprs}{\dst^2}\log\frac{e\dims}{\sprs}}
  $}.
\end{lemma}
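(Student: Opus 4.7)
The plan is a sequentially interactive two-phase protocol. Phase~1 spends $n_1=\bigO{\sprs\dims/(\dst^2\numbits)}$ samples on a blockwise scan of all coordinates: partition $[\dims]$ into $\dims/\numbits$ disjoint blocks of size $\numbits$ and round-robin assign the $n_1$ users to blocks, so the user assigned to block $B$ sends the $\numbits$ coordinates of its $\bool^\dims$-valued sample $X_\ui$ restricted to $B$. The server averages the $\Theta(n_1\numbits/\dims)$ observations of each coordinate to obtain $\hat\mu^{(1)}_i$, which is sub-Gaussian around $\mu_i$ with standard deviation $\sigma=\bigO{\dst/\sqrt\sprs}$, selects a candidate support $\hat S\subseteq[\dims]$ of size $\bigO{\sprs}$ using either a threshold rule or a top-$\bigO{\sprs}$ rule, and broadcasts $\hat S$ to all subsequent users via the interaction.

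Phase~2 then refines the estimate on $\hat S$ to $\lp[2]$-accuracy $\dst/2$. If $\numbits<\sprs$, partition $\hat S$ into $\lceil\sprs/\numbits\rceil$ sub-blocks and round-robin the Phase~2 users across sub-blocks; otherwise each user sends its sample restricted to $\hat S$ directly. A standard analysis yields $n_2=\bigO{\sprs^2/(\dst^2\numbits)+\sprs/\dst^2}$ users. Since $\sprs\le\dims$ we have $\sprs^2/\numbits\le\sprs\dims/\numbits$, and $\sprs/\dst^2\le\sprs\log(e\dims/\sprs)/\dst^2$, so $n_1+n_2$ falls within the claimed $\bigO{\sprs\dims/(\dst^2\numbits)+\sprs\log(e\dims/\sprs)/\dst^2}$. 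The final $\lp[2]$ error is bounded by the Phase~2 refinement error on $\hat S$ plus the missed mass $\sum_{i\notin\hat S}\mu_i^2$, each of which must be at most $\dst^2/4$.

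The main obstacle is controlling the missed mass $\sum_{i\notin\hat S}\mu_i^2\le\dst^2/4$ using only the Phase~1 budget $n_1=\bigO{\sprs\dims/(\dst^2\numbits)}$, \ie \emph{without} a $\log\dims$ factor from a union bound over all $\dims$ coordinates. A naive per-coordinate concentration with a union bound over the $\dims$ indices would require $\bigOmega{\sprs\dims\log(\dims/\sprs)/(\dst^2\numbits)}$ samples in Phase~1, which is precisely the \emph{noninteractive} rate and defeats the purpose. Breaking this barrier is where sequential interactivity becomes essential. I would attempt either (i) a direct expected-missed-mass analysis tailored to the top-$\bigO{\sprs}$ rule, trading the effective threshold height against the expected number of false positives using $\|\mu\|_0\le\sprs$ and $\|\mu\|_2^2\le\sprs$; or (ii) a logarithmic number of interactive refinement rounds, each halving the candidate set and each consuming a geometrically smaller sample budget, so the total sample count remains at $\bigO{\sprs\dims/(\dst^2\numbits)}$. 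Carefully composing the rounds and propagating the error estimates through the iteration is the main technical work.
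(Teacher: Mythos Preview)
The paper does not prove this lemma itself; it quotes it from \citet{AcharyaCST21} with the single remark that ``the algorithm achieving Lemma~\ref{lemma:ub:sparse:interactive} is based on successive elimination and requires interaction between clients.'' Your option~(ii)---a logarithmic number of halving rounds---is exactly this successive-elimination scheme, so that is the route to pursue.

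Your option~(i), by contrast, cannot achieve the stated bound. With a single Phase~1 pass of $n_1=\Theta(\sprs\dims/(\dst^2\numbits))$ users, each coordinate receives $m=\Theta(\sprs/\dst^2)$ observations and $\hat\mu^{(1)}_i$ has standard deviation $\sigma=\Theta(\dst/\sqrt\sprs)$. For the surviving set $\hat S$ to have size $O(\sprs)$ (so that Phase~2 stays within your budget), the effective threshold---whether fixed or the $O(\sprs)$-th order statistic---must sit at level $\tau\asymp\sigma\sqrt{\log(\dims/\sprs)}$, since that is where the tail of $\dims-\sprs$ null coordinates first drops to $O(\sprs)$. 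But then the adversary may place all $\sprs$ nonzero means exactly at $|\mu_i|=\tau$: each such coordinate is missed with probability at least $1/2$, and the expected missed mass is $\Omega(\sprs\tau^2)=\Omega(\dst^2\log(\dims/\sprs))$, not $O(\dst^2)$. No expected-missed-mass accounting that uses only $\norm{\mu}_0\le\sprs$ and $\normtwo{\mu}^2\le\sprs$ can beat this instance, so a single selection round is genuinely insufficient and the multi-round elimination of option~(ii) is essential.

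One small remark on option~(ii): a strictly geometrically decreasing user budget $n_r=n_1 2^{-(r-1)}$ keeps the per-coordinate sample count constant only while $d_r>\numbits$, and degrades it thereafter. A cleaner allocation is to fix the per-coordinate sample count at $m=\Theta(\sprs/\dst^2)$ in every round and set $n_r=m\cdot\max(d_r/\numbits,1)$; then $\sum_r n_r = m\cdot O(\dims/\numbits+\log(\dims/\sprs))$, which is exactly the target. The remaining work---controlling the missed mass across $O(\log(\dims/\sprs))$ rounds without reintroducing a logarithmic factor---is what \citet{AcharyaCST21} carries out.
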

\begin{lemma}[\cite{BGMNW:16,AcharyaCST21}]
    \label{lemma:lb:sparse:interactive}
  For any $\sprs \geq 1$, the \emph{interactive} sample complexity of mean estimation of $\sprs$-sparse product distributions over $\bool^\dims$ under $\numbits$-bit communication constraints is
  {$
      \bigOmega{\frac{\sprs\dims}{\dst^2\numbits} + \frac{\sprs}{\dst^2}\log\frac{e\dims}{\sprs}}
  $}.
\end{lemma}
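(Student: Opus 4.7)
\textbf{Proof plan for Lemma~\ref{lemma:lb:sparse:interactive}.} The stated bound is additive in two terms, so it suffices to prove each of the $\bigOmega{\frac{\sprs}{\dst^2}\log\frac{e\dims}{\sprs}}$ and $\bigOmega{\frac{\sprs\dims}{\dst^2\numbits}}$ lower bounds separately; the claim follows by taking the maximum.

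\emph{Statistical term.} The bound $\bigOmega{\frac{\sprs}{\dst^2}\log\frac{e\dims}{\sprs}}$ already holds with no communication constraint, so it transfers to any $\numbits$-bit interactive protocol. Construct a Gilbert--Varshamov packing of $\sprs$-element supports $T\subseteq[\dims]$ of size $2^{\Omega(\sprs\log(\dims/\sprs))}$, and associate to each $T$ the product Bernoulli with mean $\mu_T=(\dst/\sqrt{\sprs})\mathbf{1}_T$. Distinct members are $\Omega(\dst)$-separated in $\normtwo{\cdot}$ while pairwise KL divergence is $O(\dst^2)$; Fano's inequality then yields $\ns\geq\Omega(\sprs\log(\dims/\sprs)/\dst^2)$.

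\emph{Communication term.} For the $\bigOmega{\frac{\sprs\dims}{\dst^2\numbits}}$ bound, we combine an Assouad-style hypercube test with the measure-change inequality of~\cref{l:basic_mc}. Partition $[\dims]$ into $\sprs$ disjoint blocks $B_1,\dots,B_\sprs$ of size $\dims/\sprs$, and draw a random parameter $M$ by picking, independently for each block $\ci\in[\sprs]$, a location $J_\ci\in B_\ci$ uniformly and a sign $V_\ci\in\{\pm 1\}$ uniformly; set $M=(\dst/\sqrt{\sprs})\sum_\ci V_\ci e_{J_\ci}$, which is $\sprs$-sparse with $\normtwo{M}=\dst$. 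Any $\dst/2$-accurate estimator must correctly recover the sign $V_\ci$ in a constant fraction of the blocks, since an error in block $\ci$ contributes $\Omega(\dst^2/\sprs)$ to the squared loss.

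A hypercube-testing argument on $V$ then requires $\mutualinfo{V}{Y^\ns,U}=\Omega(\sprs)$ for successful recovery. Using the chain rule, $\mutualinfo{V}{Y^\ns,U}=\sum_{\ui=1}^{\ns}\condmutualinfo{V}{Y_\ui}{Y^{\ui-1},U}$; for each summand, we condition on the transcript $(Y^{\ui-1},U)$, which fixes the channel $W_\ui$, and apply~\cref{l:basic_mc} with $a(x)=W_\ui(y\mid x)$ and $\Phi$ the normalized per-coordinate score vector of the product Bernoulli sample $X_\ui$. The crucial point is that averaging over the uniform $J_\ci$ spreads the per-block signal over $\dims/\sprs$ candidate coordinates, so the subgaussian proxy of $\Phi$ scales like $O(\dst^2/\dims)$ rather than $O(\dst^2/\sprs)$; this gives a per-sample information bound of $O(\numbits\dst^2/\dims)$, and combining with the $\Omega(\sprs)$ threshold forces $\ns\geq\Omega(\sprs\dims/(\dst^2\numbits))$.

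\emph{Main obstacle.} The technical crux is the correct normalization when applying~\cref{l:basic_mc}: the subgaussian parameter of $\Phi$ must be computed against the \emph{mixture} of $X_\ui$ induced by the random support, so that the cost of localizing each of the $\sprs$ active coordinates among $\dims$ candidates is properly captured. A naive application with fixed support yields only the weaker $\bigOmega{\sprs^2/(\dst^2\numbits)}$ bound characteristic of dense $\sprs$-dimensional estimation; the extra factor $\dims/\sprs$ comes precisely from encoding the locations $J_\ci$ into the prior and tracking the resulting $1/\dims$ factor through the measure-change inequality, all while preserving the interactive chain-rule decomposition.
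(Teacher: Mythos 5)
The paper does not prove this lemma itself: it is cited from \cite{BGMNW:16,AcharyaCST21} and restated only ``for completeness.'' So there is no in-paper argument to compare against, and the proposal must be judged on its own merits.

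The statistical term $\Omega\bigl(\frac{\sprs}{\dst^2}\log\frac{e\dims}{\sprs}\bigr)$ via a Gilbert--Varshamov packing of $\sprs$-element supports combined with Fano is correct and valid for interactive protocols (nothing in that reduction depends on the communication structure). The communication term, however, has two genuine gaps. First, the Assouad reduction is stated too loosely: to infer $\hat V_\ci=\operatorname{sign}(\hat M(J_\ci))$ from an $\dst/2$-accurate estimate $\hat M$, the decoder would need to know the unknown location $J_\ci$; one has to either condition on $J$ throughout (tracking $\condmutualinfo{V_\ci}{Y^\ns}{J}$), or work with a prior like the paper's $Z\in\{-1,0,+1\}^\dims$ where the support is part of the parameter, in which case the Fano count necessarily carries the $\log(\dims/\sprs)$ term. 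This is fixable, but as written the reduction is not a valid estimator.

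The deeper problem is the chain-rule-plus-measure-change step, and this is precisely what the ``main obstacle'' paragraph acknowledges but does not resolve. After conditioning on $(Y^{\ui-1},U)$ in an interactive protocol, the posterior on $(V,J)$ (equivalently on $Z$) is no longer the product prior: the reference marginal of $X_\ui$ under that posterior is a mixture that generally does \emph{not} have independent coordinates, so Lemma~\ref{l:basic_mc} does not apply to it. The ``$O(\dst^2/\dims)$'' factor you invoke comes from the small \emph{prior} variance $\sigma^2=\sprs/(2\dims)$ of each coordinate of $Z$ under the uniform reference; once the posterior is distorted by past rounds, that calculation is not available. This is not a minor technicality: the paper itself remarks, immediately after Lemma~\ref{lemma:lb:sparse:noninteractive}, that the mutual-information-based approach cannot possibly extend to interactive protocols in the sparse regime, and that the interactive bounds of \cite{AcharyaCST21} instead rest on a different quantity (``average discrepancy'') whose per-round control is engineered specifically to survive conditioning on the transcript. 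Your proposal, as written, reproduces a clean noninteractive argument and then asserts that the same per-round bound persists under chain-rule conditioning; that assertion is exactly the step that fails and where the substantive machinery of \cite{BGMNW:16,AcharyaCST21} is needed.
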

The algorithm achieving Lemma~\ref{lemma:ub:sparse:interactive} is based on successive elimination and requires interaction between clients. Turning to noninteractive estimation, similar bounds can be obtained, but with an extra logarithmic factor. 
\begin{lemma}
    \label{lemma:ub:sparse:noninteractive}
  For any $\sprs \geq 1$, the \emph{noninteractive} sample complexity of mean estimation of $\sprs$-sparse product distributions over $\bool^\dims$ under $\numbits$-bit communication constraints is
  {$
      \bigO{\frac{\sprs\dims\log(e\dims/\sprs)}{\dst^2\numbits}}
  $}.
\end{lemma}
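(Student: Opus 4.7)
The approach is to exhibit a simple noninteractive protocol that subsamples coordinates uniformly and then applies hard thresholding, and to bound its $\ell_2$ risk by an oracle-inequality argument tailored to $\sprs$-sparse vectors.

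First, I would describe the protocol. Using the shared public random seed (or a balanced deterministic schedule), each user $\ui\in[\ns]$ is assigned a subset $T_\ui\subset[\dims]$ of exactly $\numbits$ coordinates and transmits the bits $(X_\ui(j))_{j\in T_\ui}$, which uses precisely its $\numbits$-bit budget since $X_\ui\in\bool^\dims$. The server then forms, for every $j\in[\dims]$, the empirical mean $\hat\mu_j$ from the samples in which coordinate $j$ was observed. A Chernoff argument guarantees that with high probability every coordinate is queried by at least $\Omega(\ns\numbits/\dims)$ users, so, conditioned on this event, the coordinates of the noise vector $\epsilon\eqdef\hat\mu-\mu$ are independent, centered, and $\sigma^2$-sub-Gaussian with $\sigma^2=\bigO{\dims/(\ns\numbits)}$.

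Next, I would set $\tau\eqdef C\sigma\sqrt{\ln(e\dims/\sprs)}$ for a sufficiently large absolute constant $C$, and output $\tilde\mu_j\eqdef \hat\mu_j\cdot\mathbf{1}\{|\hat\mu_j|>\tau\}$. A coordinate-wise decomposition gives
\[
\normtwo{\tilde\mu-\mu}^2=\sum_{j=1}^\dims \epsilon_j^2\cdot\mathbf{1}\{|\hat\mu_j|>\tau\}+\mu_j^2\cdot\mathbf{1}\{|\hat\mu_j|\le\tau\}.
\]
For $j\notin\mathrm{supp}(\mu)$ (at least $\dims-\sprs$ indices), standard sub-Gaussian tail integration yields $\bE{}{\epsilon_j^2\mathbf{1}\{|\epsilon_j|>\tau\}}=\bigO{\sigma^2\log(e\dims/\sprs)\cdot(\sprs/\dims)^{C^2/2}}$, which sums to $\bigO{\sprs\sigma^2\log(e\dims/\sprs)}$ for $C$ large enough. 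For $j\in\mathrm{supp}(\mu)$ (at most $\sprs$ indices), the first summand contributes $\bigO{\sigma^2}$ in expectation, and on the rejection event $|\mu_j|\le\tau+|\epsilon_j|$, so a sub-Gaussian tail on $\epsilon_j$ bounds the second summand by $\bigO{\tau^2}$ in expectation. Combining, $\bE{}{\normtwo{\tilde\mu-\mu}^2}=\bigO{\sprs\dims\log(e\dims/\sprs)/(\ns\numbits)}$, and Markov's inequality delivers the required $9/10$-confidence bound on $\normtwo{\tilde\mu-\mu}\le\dst$ as soon as $\ns=\bigO{\sprs\dims\log(e\dims/\sprs)/(\dst^2\numbits)}$.

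The main technical point is to extract the factor $\log(e\dims/\sprs)$ rather than the naive $\log\dims$ that would arise from a union bound over all $\dims$ coordinate-wise concentration events. The improvement comes from bounding the number of \emph{false positives} in expectation rather than through a worst-case maximum: the choice $\tau=\Theta(\sigma\sqrt{\log(e\dims/\sprs)})$ forces $\bPr{|\epsilon_j|>\tau}=\bigO{(\sprs/\dims)^{C^2/2}}$, so the expected number of false positives is $\bigO{\sprs}$ and each contributes only $\bigO{\tau^2}$ to the squared error on average, matching the contribution from the true $\sprs$-sparse support.
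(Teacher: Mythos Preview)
Your proposal is correct and, at the protocol level, essentially coincides with the paper's argument: both have each user transmit $\numbits$ raw coordinates of its sample, exploiting the product structure so that the server effectively sees $\Theta(\ns\numbits/\dims)$ independent observations per coordinate.

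The difference is only in the packaging of the analysis. The paper invokes the ``simulate-and-infer'' reduction of~\citet{AcharyaCT20b}: groups of $\clg{\dims/\numbits}$ users jointly reconstruct one full unconstrained sample, so $\ns$ constrained users simulate $\Theta(\ns\numbits/\dims)$ unconstrained samples, and then the $\bigO{\sprs\log(e\dims/\sprs)/\dst^2}$ unconstrained sparse-mean-estimation bound (e.g., for the MLE, as in~\citet[Section~20.2]{Wu20}) is quoted as a black box. You instead unpack that black box, analyzing the hard-thresholding estimator directly and showing explicitly how the $\log(e\dims/\sprs)$ factor (rather than $\log\dims$) arises from bounding the expected false-positive contribution at threshold $\tau=\Theta(\sigma\sqrt{\log(e\dims/\sprs)})$. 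Your route is more self-contained and makes the role of the threshold transparent; the paper's route is shorter and more modular. Both yield the same bound, and your independence claim for the coordinates of $\epsilon$ is justified precisely by the product-distribution assumption.
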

\begin{proof}
The key observation is that, by using the ``simulate-and-infer'' idea of~\citet{AcharyaCT20b} (specifically used in the context of product distributions over $\bool^\dims$ in~\citet{AcharyaCT20c}), it suffices to show an
\smash{$
      \bigO{\frac{\sprs\log(e\dims/\sprs)}{\dst^2}}
  $} 
  upper bound in the \emph{unconstrained} setting (where all the observations are fully available), as any such algorithm can be simulated by a private-coin protocol under $\numbits$-bit communication constraints at the cost of a factor $\dims/\numbits$ in the sample complexity. The idea is to partition $\dims$ coordinates into $\clg{\dims/\numbits}$ blocks of size at most $\numbits$. Then $\clg{\dims/\numbits}$ users can send their observation within each block using $\numbits$ bits. By independence of the coordinates, we get a valid sample from the source distribution by combining the messages.
  With samples from the original distribution, the $
      \bigO{\frac{\sprs\log(e\dims/\sprs)}{\dst^2}}
  $ sample complexity upper bound, in turn, is well-known, and is attained by \eg the maximum likelihood estimator. See, for instance,~\citet[Section~20.2]{Wu20}.
\end{proof}

The final component needed to
show the additional logarithmic factor is necessary
 is the noninteractive sample complexity lower bound. As discussed earlier, the required lower bound is shown in~\citet[Theorem~3]{HOW:18}, but under the restriction that $\ns \geq \frac{\sprs\dims^2\log(e\dims/\sprs)}{\numbits^2}$, making the lower bound vacuous unless $\dst \ll \sqrt{\numbits/\dims}$. We provide a proof of this lower bound, which removes this restriction on $\ns$. The crux in removing this regularity condition is to handle the dependent terms in the obtained information bound (\cref{eq:sum:information}) directly through careful conditioning; while previous techniques consider linearization of the information vector, which results in loose bounds. 
\begin{lemma}
    \label{lemma:lb:sparse:noninteractive}
  For any $\sprs \geq 4\log\dims$, the \emph{noninteractive} sample complexity of mean estimation of $\sprs$-sparse product distributions over $\bool^\dims$ under $\numbits$-bit communication constraints is
  {$
      \bigOmega{\frac{\sprs\dims\log(e\dims/\sprs)}{\dst^2\numbits}}
  $}.
\end{lemma}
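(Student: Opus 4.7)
The plan is a Fano-style lower bound applied to a random ensemble of signed $\sprs$-sparse product distributions, the main technical work being an information upper bound that avoids the restriction $\ns \gtrsim \sprs\dims^2\log(\dims/\sprs)/\numbits^2$ of \cite{HOW:18}. We take $\tau \asymp \dst/\sqrt{\sprs}$, draw $V = (S,z)$ with $S$ uniform over the $\sprs$-subsets of $[\dims]$ and an independent $z$ uniform over $\{\pm 1\}^{\sprs}$, and set $\mu_V(i) = \tau z_i \mathbf{1}_{i \in S}$. A standard Gilbert--Varshamov argument yields a subpacking $\mathcal{V}'$ of log-cardinality $\Omega(\sprs\log(\dims/\sprs))$ whose elements are pairwise $\ell_2$-separated by at least $3\dst$, so that any $\dst$-estimator induces a decoder of error at most $1/10$, and Fano's inequality gives $I(V; Y^{\ns}, U) \geq \Omega(\sprs \log(\dims/\sprs))$.

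For the matching information upper bound, we exploit noninteractivity: conditional on the public seed $U = u$ the channels $W_1, \dots, W_{\ns}$ are fixed, so $I(V; Y^\ns \mid U) \leq \sum_t \mathbb{E}_U \mathbb{E}_V[\chi^2(P_{Y_t \mid V, U} \| P_{Y_t \mid U})]$. The random-sign construction makes the $V$-mixture of the source distributions exactly uniform on $\{\pm 1\}^{\dims}$ and decouples the Fourier cross-terms. Writing $\phi_{T,t}(y) = \mathbb{E}[X^T \mid Y_t = y]$ under this uniform prior on $X$, a direct expansion in the parity basis gives
\begin{equation*}
    \mathbb{E}_V\bigl[\chi^2(P_{Y_t \mid V} \| P_{Y_t})\bigr] = \sum_{T \neq \emptyset} \tau^{2|T|}\, \mathbb{P}[T \subseteq S]\, \mathbb{E}[\phi_{T,t}(Y_t)^2].
\end{equation*}
The first-order ($|T|=1$) contribution is the ``easy'' one: applying Lemma~\ref{l:basic_mc} with $\Phi(X) = X$ (independent Rademacher coordinates, $\sigma^2 = 1$) and summing over $y$ yields $\sum_i \mathbb{E}[\phi_{\{i\},t}^2] \leq 2\ln 2 \cdot I(X; Y_t) \leq 2\ln 2 \cdot \numbits$, which combined with $\mathbb{P}[\{i\} \subseteq S] = \sprs/\dims$ gives the desired $O(\tau^2 \sprs \numbits / \dims)$ budget for the linear part.

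The hard part will be the higher-order terms $|T| = k \geq 2$, for which the monomials $X^T$ across different $T$ are correlated (as $X^T X^{T'} = X^{T \triangle T'}$) and Lemma~\ref{l:basic_mc} cannot be applied to the vector $(X^T)_{|T|=k}$ en bloc. Prior work linearizes the likelihood ratio to discard these terms, paying for it with the vacuous-in-most-regimes restriction on $\ns$. Our plan is to instead retain and control them through careful conditioning: for each $k$, partition $\binom{[\dims]}{k}$ into classes of pairwise disjoint $k$-subsets within each of which the associated monomials are independent Rademacher, apply the vector form of Lemma~\ref{l:basic_mc} class-by-class to get an $O(\numbits)$ bound per class, and sum over the $O(\dims^{k-1}/(k-1)!)$ classes to obtain $\sum_{|T|=k} \mathbb{E}[\phi_{T,t}^2] \leq O(\dims^{k-1}\numbits/(k-1)!)$. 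Combined with $\mathbb{P}[T \subseteq S] \leq (\sprs/\dims)^k$, the $k$-th summand becomes $O((\tau^2 \sprs)^k \numbits/((k-1)!\,\dims))$, and the sum over $k$ converges geometrically to $O(\tau^2 \sprs \numbits/\dims)$ provided $\tau^2 \sprs = O(1)$, covering the full regime of the statement. Plugging back into the chain-rule sum and combining with Fano yields the claimed $\ns = \bigOmega{\sprs\dims\log(e\dims/\sprs)/(\dst^2\numbits)}$.
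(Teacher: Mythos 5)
Your information \emph{upper} bound is sound and in fact takes a genuinely different route from the paper's. For the higher-order monomials $X^T$ you propose a Baranyai-style decomposition of $\binom{[\dims]}{k}$ into $\bigO{\binom{\dims-1}{k-1}}$ classes of pairwise-disjoint $k$-sets, then apply Lemma~\ref{l:basic_mc} class-by-class. The paper's proof of this particular lemma instead handles the correlated terms by \emph{conditioning} on $\bigO{2^{k-1}\dims^{k-1}}$ events of the form $\{X\idx{i_1}=\pm1,\dots\}$ and applying Lemma~\ref{l:basic_mc} to the remaining coordinates; the Baranyai trick appears in the paper only for the block-sparse bound (Lemma~\ref{lemma:blocksparse:noninteractive}). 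Both decompositions give the same geometric sum $\sum_k (\dst^2)^{k-1}$, so your choice is a valid and arguably more uniform alternative.

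The genuine gap is in the information \emph{lower} bound. You compute $\bE{V}{\chisquare{\cdot}{\cdot}}$ with $V$ uniform over the \emph{full} random ensemble $(S,z)$ --- the orthogonality that decouples the Fourier cross-terms $z^T z^{T'}$ is exactly the statement that $z$ is a fresh uniform sign vector --- and then claim Fano over a Gilbert--Varshamov \emph{subpacking} $\mathcal{V}'$. These are two different priors, and you cannot transfer between them: restricting $V$ to $\mathcal{V}'$ destroys the sign-independence needed for the $\chi^2$ expansion (cross terms $T\neq T'$ no longer vanish), while keeping $V$ uniform over the full ensemble means there is no $\Omega(\dst)$ separation to feed into Fano (two ensemble points differing in a single coordinate of $S$ are only $\bigO{\dst/\sqrt{\sprs}}$ apart). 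Nor does the usual ``quantize $V$ to the nearest packing point'' patch work, since the packing radius and covering radius would have to coincide. The natural repair is the one the paper uses: take a prior with \emph{independent} coordinates, $Z\idx{i}\in\tri$ with $\bPr{Z\idx{i}=\pm1}=\sprs/(4\dims)$, so the chi-square expansion holds coordinate-wise, and replace packing-Fano by the Assouad-type bound of Claim~\ref{lem:fano}, which applies Fano separately to each $Z\idx{i}$ and sums the per-coordinate binary entropies to recover the $\sprs\log(e\dims/\sprs)$ factor. (This is also where the $\sprs\geq 4\log\dims$ hypothesis enters, to ensure $\theta_Z$ is $\sprs$-sparse with high probability.)

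A smaller point: with $\tau=\dst/\sqrt{\sprs}$ the entire ensemble lies in a ball of radius $\dst$, so the maximum pairwise distance is $\sqrt{2}\,\dst<3\dst$ and a $3\dst$-separated subpacking cannot exist. This is a constant-factor bookkeeping issue (scale $\tau$ up, or target accuracy $\dst/C$) but worth flagging since your Fano step as stated relies on it.
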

\begin{proof}
Consider the following set of $\sprs$-sparse product distributions, which we will use as the ``hard instances'' for our lower bound. Setting $\gamma \eqdef \frac{\dst}{\sqrt{\sprs}}$, for any $z\in\tri^\dims$ we define $\theta_z\in\R^\dims$ by \new{$\theta_{z,i} = \gamma z\idx{i}$}, $i\in[\dims]$. 
Let $Z$ be a random variable on $ 
\tri^\dims$ satisfying 
\[
    \bPr{Z\idx{i} = +1} = \frac{\sprs}{4\dims}, \;\;\; \bPr{Z\idx{i} = -1} = \frac{\sprs}{4\dims},\;\;\; \bPr{Z\idx{i} = 0} = 1 - \frac{\sprs}{2\dims}.
\]
Note that $\theta_Z$ is then $(\sprs/2)$-sparse in expectation, and further $\bEE{Z\idx{i}}=0$, $\sigma^2 \eqdef \bEE{Z\idx{i}^2}=\frac{\sprs}{2\dims}$ for all $i$. By a Chernoff bound, we also get that $\theta_Z$ is $\sprs$-sparse with high probability:
if $\sprs \geq 4\log\dims$, 
$
  \bPr{ \norm{\theta_Z}_0 \leq \sprs } \geq 1-\frac{\sprs}{4\dims}
$.
This will be enough for our purposes, and allows us to consider the random prior of hard instances above (product distributions over $\tri^\dims$, with mean $\theta_Z$ for random $Z$ with independent coordinates) instead of enforcing $\sprs$-sparsity with probability one (details follow).

Consider the following generative process. First pick $Z$ at 
	random from $\tri^\dims$ as above. Then, each of the  
	$\ns$ users observes one sample $X_\ui$ from the product distribution $\p_{Z}$ with mean vector $\theta_{Z}$ and sends 
	its samples through a channel $W_\ui\colon\bool^\dims\to\{0,1\}^\numbits$ to compress it to a message $Y_\ui$.
	
	The next claim states that any sufficiently accurate estimation protocol must provide enough information about each $Z\idx{i}$ 
	from the tuple of messages $Y^\ns$.
	\begin{claim}[Assouad-type Bound] \label{lem:fano}
		For any protocol that estimates $\sprs$-sparse product distributions to $\lp[2]$ accuracy 
		$\dst/4$, we must have
		$
		\sum_{i=1}^\dims \mutualinfo{Z\idx{i}}{Y^\ns} = \Omega\big(\sprs \log\frac{e\dims}{\sprs}\big).
		$
		In particular, by independence of the coordinates of $Z$, this implies 
		\smash{$
		\mutualinfo{Z}{Y^\ns} = \Omega\big(\sprs \log\frac{e\dims}{\sprs}\big).
        $}	    
	\end{claim}
	\begin{proofof}{\cref{lem:fano}}
	Fix any such protocol, and consider the corresponding estimator $\hat{\theta}=\hat{\theta}(Y^\ns)$. From there, define an estimator $\hat{Z}$ for $Z$ by choosing
	\[
	        \hat{Z} = \arg\!\min_{z\in\tri^\dims} \normtwo{\theta_z-\hat{\theta}}\,.
	\]
	In particular,
	$
	    \normtwo{\theta_{\hat{Z}}-\theta_{\vphantom{\hat{Z}}Z}} \leq 2\normtwo{\hat{\theta}-\theta_Z}
	$ with probability $1$, and
	\[
	        \bEE{\normtwo{\theta_{\hat{Z}}-\theta_{\vphantom{\hat{Z}}Z}}^2}
	        \leq \bEE{\normtwo{\theta_{\hat{Z}}-\theta_{\vphantom{\hat{Z}}Z}}^2 \indic{\norm{\theta_Z}_0 \leq \sprs}} + \newest{\frac{\sprs}{4\dims}}\cdot \max_{z,z'} \normtwo{\theta_z-\theta_{z'}}^2
	        \leq 2\cdot\frac{\dst^2}{16} + \frac{\sprs}{4\dims}\cdot \frac{\dst^2}{\sprs}\cdot \dims = \frac{3\dst^2}{8},
	\]
	where we used the fact that $\hat{\theta}$ has the guarantees of a good estimator (to $\lp[2]$ loss $\dst/4$) whenever $\theta_Z$ is $\sprs$-sparse, our bound on the probability that $Z$ is not $\sprs$-sparse, and the fact that the maximum distance between any two of the mean vectors $\theta_z,\theta_{z'}$ from our construction \newest{is $\dst/\sqrt{\dims}$}. Since $\normtwo{\theta_{\hat{Z}}-\theta_{\vphantom{\hat{Z}}Z}}^2 = \frac{\dst^2}{\sprs} \sum_{i=1}^\dims \indic{Z\idx{i} \neq \hat{Z}\idx{i}}$, this implies
	\[
	        \sum_{i=1}^\dims \bPr{Z\idx{i}\neq \hat{Z}\idx{i}} \leq \frac{3\sprs}{8}\,.
	\]
	By the data processing inequality, considering the Markov chain 
	$Z\idx{i} - Y^\ns-\hat{Z}\idx{i}$, we have
	\[
	        \sum_{i=1}^\dims \mutualinfo{Z\idx{i}}{\hat{Z}\idx{i}} \leq \sum_{i=1}^\dims \mutualinfo{Z\idx{i}}{Y^\ns}\,.
	\]
	Thus, it is enough to show that $\sum_{i=1}^\dims \mutualinfo{Z\idx{i}}{\hat{Z}\idx{i}} = \Omega\Paren{\sprs \log\frac{e\dims}{\sprs}}$. Towards that, we have by Fano's inequality  
	that for all $i$
	$
	\mutualinfo{Z\idx{i}}{\hat{Z}\idx{i}} = H(Z\idx{i}) - H(Z\idx{i}\mid \hat{Z}\idx{i}) \geq h\Paren{\frac{\sprs}{2\dims}} -h(\bPr{Z\idx{i} \neq \hat{Z}\idx{i}})
	$, where $h(x)=-x\log x - (1-x)\log(1-x)$ is the binary entropy. It follows that
	\begin{align*}
	    \sum_{i=1}^\dims \mutualinfo{Z\idx{i}}{\hat{Z}\idx{i}}
	    &\geq \dims\Paren{ h\Paren{\frac{\sprs}{2\dims}}-\frac{1}{\dims}\sum_{i=1}^\dims h\Paren{\bPr{Z\idx{i} \neq \hat{Z}\idx{i}}} } \\
	    &\geq \dims\Paren{ \Paren{\frac{\sprs}{2\dims}}-h\Paren{\frac{1}{\dims}\sum_{i=1}^\dims \bPr{Z\idx{i} \neq \hat{Z}\idx{i}}} }\\
	    &\geq \dims\Paren{  h\Paren{\frac{\sprs}{2\dims}}- h\Paren{\frac{3\sprs}{8\dims}} }
	    \geq \frac{3}{100}\sprs\log \frac{e\sprs}{\dims}\,,
	\end{align*}
	where the second inequality by concavity and  monotonicity (on $[0,1/2]$) of $h$, respectively, and the last by observing that
	\[
	        \inf_{x\in[0,1]}\frac{h(x/2)-h(3x/8)}{x\log(e/x)} > 0.03\,.
	\]
	This concludes the proof.
	\end{proofof}
	
	The next (key) claim below states that, under communication constraints, the mutual 
information scales as the total number of bits communicated from the 
users. 
\begin{claim}\label{lem:mi_bound}
	For any noninteractive protocol with $\numbits$ bits from each of the $\ns$ users, we must have
	$
		\mutualinfo{Z}{Y^\ns} = O\Paren{\frac{\ns \dst^2 \numbits }{\dims}}.
	$
\end{claim}
\begin{proof}
First, we note that while the noninteractive protocol might allow for public randomness $U$ shared between users (public-coin protocols), it is enough to establish the bound for private-coin protocols. This is because we can condition on a particular realization $u$ of the public randomness $U$: by obtaining a uniform upper bound on $\condmutualinfo{Z}{Y^\ns}{U=u}$ for all $u$, the same applies to the  conditional mutual information $\condmutualinfo{Z}{Y^\ns}{U} =  \mutualinfo{Z}{Y^\ns, U}$
which is the quantity of interest.

With that in mind, note that for private-coin protocols the messages $Y_1, Y_2, 
\ldots, Y_\ns$ are mutually independent conditioned on $Z$. This implies that
\[
	\mutualinfo{Z}{Y^\ns} \le \sum_{t = 1}^\ns 	\mutualinfo{Z}{Y_t}\,,
\]
and thus it is enough to bound each term of the sum as
$
	\mutualinfo{Z}{Y_t} =  \bigO{\dst^2 \numbits/\dims}.
$
To do so, fix any $1\leq t\leq \ns$, and denote $\uniform$ the uniform distribution over $\bool^\dims$. For the channel $W_t\colon\bool^\dims\to\{0,1\}^\numbits$ used by user $t$, let $W_t^{\p}$ be the distribution on $\cY\eqdef \{0,1\}^\numbits$ induced by an input $X$ drawn from $\p$:
\begin{equation}
    \label{eq:def:induced:distribution}
        W_t^{\p}(y) = \bE{X\sim\p}{W_t(y\mid X)}, \qquad y\in\cY\,.
\end{equation}
We can rewrite and bound the mutual information as
\[
		\mutualinfo{Z}{Y_t}  = \bE{Z}{\kldiv{W_t^{\p_{Z}}}{W_t^{\uniform}}}
\le 
\bE{Z}{\chisquare{W_t^{\p_{Z}}}{W_t^{\uniform}}}\,. 
\]
We bound the mutual information for each user $t$ and drop the subscript $t$ from $W_t$ when it is clear from context. Expanding out the chi-square divergence, we obtain the following bound on the mutual information:
\begin{equation}
    \label{eq:sum:information}
    \!\!\!\mutualinfo{Z}{Y_t} \!\leq \!\sum_{y \in 
		\cY}  \!\Big( \sigma^2 \gamma^2 \sum_{i \in [\dims]} \!\! \frac{\bE{\uniform}{W(y\mid X)X\idx{i}
			}^2}{\bE{\uniform}{W(y\mid X)}}  \!+\! \sum_{ r= 2}^{\dims} \sigma^{2r}\gamma^{2r} \!\! \sum_{\substack{B 
			\subseteq [\dims]\\ |B|  = r}}  \!\!\frac{\bE{\uniform}{W(y\mid X)\prod_{i \in B}X\idx{i}
			}^2}{\bE{\uniform}{W(y\mid X)}} \Big),
\end{equation}
where $\gamma = \dst/\sqrt{\sprs}$ and $\sigma^2=\frac{\sprs}{2\dims}$.

We defer the proof of \cref{eq:sum:information} to \cref{sec:missing-proof}, and proceed to bound the right-hand-side. 
For the first term, since $X$ is 1-subgaussian, we can invoke Lemma~\ref{l:basic_mc} to get
\begin{align}
	\sigma^2 \sum_{y \in 
		\cY}\gamma^2 \sum_{i \in [\dims]} 
	\frac{\bE{\uniform}{W(y\mid X)X\idx{i}
		}^2}{\bE{\uniform}{W(y\mid X)}}  = \frac{\sprs}{2\dims} \gamma^2 \sum_{y \in 
	\cY}
	\frac{\norm{\bE{\uniform}{X W(y\mid X)}}_2^2}{\bE{\uniform}{W(y\mid X)}} \le 
	(\ln 2) \frac{\dst^2 \numbits}{\dims}.
\end{align}
Next we handle the second-order terms, i.e.,
\begin{align*}
	 \sigma^4	\gamma^4 \sum_{y \in 
		\cY}  \sum_{i = 1}^\dims \sum_{j \neq i}  \frac{\bE{\uniform}{W(y\mid X) X\idx{i} X\idx{j}
			}^2}{\bE{\uniform}{W(y\mid X)}},
\end{align*}
For all $i \in [\dims]$, we have
\begin{align*}
	\sum_{j \neq i}  \frac{\bE{
			\uniform}{W(y\mid X) X\idx{i} X\idx{j}
		}^2}{\bE{\uniform}{W(y\mid X)}} & \le \sum_{j \neq i} 
		\frac{\frac12\bE{
		\uniform|X\idx{i} = 1}{W(y\mid X) X\idx{j}
	}^2 + \frac12\bE{
	\uniform|X\idx{i} = -1}{W(y\mid X) X\idx{j}
}^2 }{\frac12\bE{\uniform|X\idx{i} = 1}{W(y\mid X)} + \frac12\bE{\uniform|X\idx{i} = 
-1}{W(y\mid X)} } \\
& \le \sum_{j \neq i}   \Paren{\frac{\bE{
		\uniform|X\idx{i} = 1 }{W(y\mid X) X\idx{j}
	}^2}{\bE{\uniform|X\idx{i} = 1}{W(y\mid X)}}  +\frac{\bE{
	\uniform|X\idx{i} = -1 }{W(y\mid X) X\idx{j}
}^2}{\bE{\uniform|X\idx{i} = -1}{W(y\mid X)}}  },
\end{align*}
and so
\begin{align}
	 \sigma^4 \gamma^4 \sum_{y \in 
		\cY} & \sum_{i = 1}^\dims \sum_{j \neq i}  \frac{\bE{
			\uniform}{W(y\mid X) X\idx{i} X\idx{j}
		}^2}{\bE{\uniform}{W(y\mid X)}}  \notag\\
	&\leq  \sigma^4\gamma^4 \sum_{y 
		\in 
	\cY}  \sum_{i = 1}^\dims  \sum_{j \neq i} 
	\Paren{\frac{\bE{ 
			\uniform|X\idx{i} = 1 }{W(y\mid X) X\idx{j}
		}^2}{\bE{\uniform|X\idx{i} = 1}{W(y\mid X)}}  +\frac{\bE{
			\uniform|X\idx{i} = -1 }{W(y\mid X) X\idx{j}
		}^2}{\bE{\uniform|X\idx{i} = -1}{W(y\mid X)}}  } \notag\\
	 &= %
	 \sigma^4 \gamma^4 \sum_{i = 1}^\dims  
	 \sum_{y 
		\in 
		\cY}  \Paren{\frac{\sum_{j \neq 
				i}\bE{
				\uniform|X\idx{i} = 1 }{W(y\mid X) X\idx{j}
			}^2}{\bE{\uniform|X\idx{i} = 1}{W(y\mid X)}}  +\frac{\sum_{j \neq 
			i}\bE{ 
				\uniform|X\idx{i} = -1 }{W(y\mid X) X\idx{j}
			}^2}{\bE{\uniform|X\idx{i} = -1}{W(y\mid X)}}  }  \notag\\
	&= %
	\sigma^4  \gamma^4 \sum_{i = 1}^\dims  
	\sum_{y 
			\in 
			\cY}  \Paren{\frac{\norm{\bE{
					\uniform|X\idx{i} = 1 }{W(y\mid X) X_{-i}
				}}_2^2}{\bE{\uniform|X\idx{i} = 1}{W(y\mid X)}}  +\frac{\norm{\bE{
				\uniform|X\idx{i} = -1 }{W(y\mid X) X_{-i}
			}}_2^2}{\bE{\uniform|X\idx{i} = 1}{W(y\mid X)}} }    \notag\\
	&\le 2 (\ln 2)  \sigma^4 \gamma^4 \cdot%
	\Paren{ 2 
	\dims 
	\numbits} = (\ln 2)\frac{\numbits \dst^2}{\dims} \cdot  %
	\dst^2.
\end{align}
Similarly, we can bound the $j$th-order terms as
$
	\frac{\numbits \dst^2}{\dims} \cdot (\ln 2) (\dst^2)^{j - 1}.
$
And thus, summing over all terms, we get
\[
		\mutualinfo{Z}{Y_t} \leq \frac{(\ln 2)\dst^2\numbits}{\dims}\sum_{j=1}^\infty \dst^{2(j-1)} \le \frac{2 (\ln 2)\numbits \dst^2}{\dims}.
\]
Summing over $1\leq t\leq \ns$, we get the desired result.
\end{proof}
Putting together Claims~\ref{lem:fano} and~\ref{lem:mi_bound} then completes the proof of Lemma~\ref{lemma:lb:sparse:noninteractive}.
\end{proof}
\begin{remark}
As a byproduct, the proof of Lemma~\ref{lemma:lb:sparse:noninteractive} above (for the noninteractive case) has an interesting corollary: the lower bound framework of~\citet{AcharyaCST21} for the interactive case, which proceeds by bounding a quantity termed \emph{average discrepancy}, could not possibly go through in the sparse case with 
$\sum_{i=1}^\dims \mutualinfo{Z\idx{i}}{Y^\ns}$ instead of average discrepancy. Indeed, 
if the bound of~\citet{AcharyaCST21} applied to $\sum_{i=1}^\dims \mutualinfo{Z\idx{i}}{Y^\ns}$ as well,
we would get the same lower bound above for interactive protocols, 
which in turn will contradict the upper bound of Lemma~\ref{lemma:ub:sparse:interactive} for interactive protocols.
\end{remark}
Combining Lemmas~\ref{lemma:ub:sparse:noninteractive},~\ref{lemma:ub:sparse:interactive},~\ref{lemma:lb:sparse:interactive}, and~\ref{lemma:lb:sparse:noninteractive} establishes~\cref{theo:bernoullimean:sparse}. 
Finally, we mention that while our noninteractive lower bound (Lemma~\ref{lemma:lb:sparse:noninteractive}) requires $\sprs=\Omega(\log\dims)$, we are able to establish separately the case $\sprs=1$ via a simple, different proof (see~\cref{theo:bernoullimean:1sparse}). We provide this result in~\cref{app:bernoullimean:1sparse}, as we believe it to be of independent interest and will also be requiring it in the proof of Lemma~\ref{lemma:blocksparse:noninteractive}.\bigskip

\section{Block-sparse mean estimation under communication constraints}\label{sec-block-sparse}
In this section, we establish~\cref{theo:bernoullimean:blocksparse}, our result for $\sprs$-block-sparse mean estimation under $\numbits$-bit communication constraints. In order to establish the result, we need an upper and a lower bound on the sample complexity of both noninteractive and interactive protocols.

Of these four bounds, only a restricted version of the interactive upper bound was known, which assumed that all coordinates of the block-sparse mean had the same magnitude and that $\numbits=1$~\citep[Theorem~13]{AcharyaCMT21}. While the algorithm can easily be made to extend to $\numbits > 1$, it crucially relies on the former assumption on the structure of the block-sparse mean, and thus does not translate to our setting.\medskip

We proceed to prove separately the four bounds,  starting with the noninteractive upper bound.
\begin{lemma}
    \label{lemma:blocksparse:noninteractive:ub}
  For any $\sprs \geq 1$, the \emph{noninteractive} sample complexity of mean estimation of $\sprs$-block sparse product distributions over $\bool^\dims$ under $\numbits$-bit communication constraints is
  \smash{$
      \bigO{\frac{\sprs\dims + \dims\log\dims}{\dst^2\numbits}}
  $}.
\end{lemma}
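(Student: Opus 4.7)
The plan is to follow the exact same ``simulate-and-infer'' template used in the proof of Lemma~\ref{lemma:ub:sparse:noninteractive}, but replacing the general $\sprs$-sparse unconstrained upper bound by a tighter $\sprs$-block-sparse unconstrained upper bound that takes advantage of the reduced support combinatorics. The reduction is as follows: since the coordinates of any product distribution on $\bool^\dims$ are independent, one partitions $[\dims]$ into $\lceil \dims/\numbits\rceil$ contiguous chunks of size at most $\numbits$, and lets $\lceil \dims/\numbits\rceil$ successive users jointly simulate one fully-observed draw by each sending the $\numbits$ bits of their own chunk. Any unconstrained algorithm with sample complexity $N(\sprs,\dims,\dst)$ is then implementable noninteractively under $\numbits$-bit communication with sample complexity $\bigO{N(\sprs,\dims,\dst)\cdot \dims/\numbits}$. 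Therefore it suffices to exhibit an unconstrained estimator for $\sprs$-block-sparse product distributions over $\bool^\dims$ whose sample complexity is $\bigO{(\sprs+\log\dims)/\dst^2}$.

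For the unconstrained bound, I would use the fact that the family of $\sprs$-block-sparse means on $\bool^\dims$ is a union of at most $\dims-\sprs+1\le \dims$ classes indexed by the block's starting coordinate, each of which is a plain $\sprs$-dimensional Bernoulli product family. The idea is: draw $n=\bigO{(\sprs+\log\dims)/\dst^2}$ samples, form the coordinate-wise empirical mean $\hat\mu\in[-1,1]^\dims$, and return $\tilde\mu$ obtained by restricting $\hat\mu$ to the length-$\sprs$ contiguous block $\hat B$ maximizing $\sum_{j\in B}\hat\mu_j^2$ (zeroing out the rest). A Hoeffding and union bound over the at most $\dims$ blocks shows that with probability $9/10$, for every block $B$ of length $\sprs$ the quantity $\sum_{j\in B}(\hat\mu_j-\mu_j)^2$ lies within $\bigO{(\sprs+\log\dims)/n}=\bigO{\dst^2}$ of its expectation; combined with the fact that any block that makes the score nearly maximal must overlap the true block on a large fraction of its energy, this yields $\normtwo{\tilde\mu-\mu}\le \dst$. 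An equivalent alternative is to invoke Scheff\'e-tournament hypothesis selection over the natural structured $\dst$-cover, which has size $\dims\cdot (O(1/\dst))^\sprs$; a careful bracketing/covering argument (paying $\sprs$ for the in-block parameters and only $\log\dims$ for the choice of support) gives the same $\bigO{(\sprs+\log\dims)/\dst^2}$ rate.

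Plugging the $\bigO{(\sprs+\log\dims)/\dst^2}$ unconstrained bound into the simulate-and-infer factor $\dims/\numbits$ immediately yields the claimed sample complexity $\bigO{(\sprs\dims+\dims\log\dims)/(\dst^2\numbits)}$. The main (and only real) obstacle is justifying the unconstrained $\bigO{(\sprs+\log\dims)/\dst^2}$ rate without the extraneous $\log(1/\dst)$ factor that a naive $\dst$-cover argument would produce; I expect the cleanest route is the direct two-step ``estimate then select the best block'' analysis sketched above, since the block geometry makes the support identification step essentially a $\log\dims$-cost selection rather than a full covering problem.
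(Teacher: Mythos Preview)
Your proposal is correct and follows the same simulate-and-infer reduction as the paper. For the unconstrained step, the paper takes a slightly simpler route than your sliding-window selector: it partitions $[\dims]$ into $\clg{\dims/\sprs}$ \emph{fixed} consecutive blocks of size $\sprs$, estimates each block's mean subvector to $\lp[2]$ accuracy $\dst/\sqrt{3}$ with failure probability $O(\sprs/\dims)$ (which costs $O((\sprs+\log(\dims/\sprs))/\dst^2)$ samples by subgaussian norm concentration), and outputs the two empirically largest subvectors, zeroing out the rest. Because the true support can intersect at most two consecutive blocks of a fixed partition, keeping the top two automatically covers the support, so the analysis reduces to ``all block estimates are accurate'' and avoids the ``nearly-maximal block must overlap the true block on a large fraction of its energy'' step your top-1 sliding-window estimator requires. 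Your version is also valid, but the clean way to justify it is not a direct Hoeffding bound on $\sum_{j\in B}(\hat\mu_j-\mu_j)^2$; rather, control $\normtwo{\hat\mu_B-\mu_B}^2\le C(\sprs+\log\dims)/n$ uniformly over blocks via subgaussian/Bernstein norm concentration, and then compare the symmetric differences $\hat B\setminus B^\ast$ (true mean zero) and $B^\ast\setminus\hat B$ to bound the missed energy by $O(\dst^2)$.
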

\begin{proof}
As in the proof of~\cref{lemma:ub:sparse:noninteractive}, by using the ``simulate-and-infer'' idea of~\citet{AcharyaCT20b} it suffices to show an
\smash{$
      \bigO{\frac{\sprs + \log\dims}{\dst^2}}
  $} 
  upper bound in the \emph{unconstrained} setting (where all the observations are fully available). 
  This $
      \bigO{\frac{\sprs + \log\dims}{\dst^2}}
  $ sample complexity upper bound then can be obtained by the following simple estimator: partition the $\dims$ coordinates in $\clg{\dims/\sprs}$ consecutive blocks of (at most) $\sprs$ coordinates, and, using the same samples, separately estimate the $\clg{\dims/\sprs}$ mean subvectors to $\lp[2]$ loss $\dst^2/3$, with probability of success $\delta \eqdef \sprs/(10\dims)$. This can be done with
  \[
        \bigO{ \frac{\sprs + \log(1/\delta)}{\dst^2} } = \bigO{ \frac{\sprs + \log\frac{\dims}{\sprs}}{\dst^2} }
        = \bigO{ \frac{\sprs + \log\dims}{\dst^2} }
  \]
  samples, by (sub) Gaussian concentration of measure. By a union bound, all of the $\clg{\dims/\sprs}$ estimates are simultaneously accurate, with probability at least $9/10$. Since the ``true'' block overlaps at most 2 consecutive blocks of the $\clg{\dims/\sprs}$ considered, it then suffices to output the vector $\hat{\mu}$ consisting of only the two estimated subvectors with largest magnitude (and all other coordinates set to zero).
\end{proof}
\noindent Next, we establish a matching lower bound for noninteractive protocols. 
\begin{lemma}
    \label{lemma:blocksparse:noninteractive}
  For any $\sprs \geq 1$, the \emph{noninteractive} sample complexity of mean estimation of $\sprs$-block sparse product distributions over $\bool^\dims$ under $\numbits$-bit communication constraints is
  \smash{$
      \bigOmega{\frac{\sprs\dims + \dims\log\dims}{\dst^2\numbits}}
  $}.
\end{lemma}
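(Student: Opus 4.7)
The plan is to follow the blueprint of Lemma~\ref{lemma:lb:sparse:noninteractive}: set up a random prior on $\sprs$-block-sparse distributions, play an Assouad/Fano-type lower bound on $\mutualinfo{Z}{Y^\ns}$ against a per-sample upper bound $\mutualinfo{Z}{Y_t} = \bigO{\dst^2\numbits/\dims}$, and conclude $\ns = \bigOmega{(\sprs\dims + \dims\log\dims)/(\dst^2\numbits)}$. The new ingredient, relative to the sparse case, is Baranyai's theorem, needed because the block constraint destroys the coordinatewise independence of the mean that was exploited in Claim~\ref{lem:mi_bound}.

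For the prior, I would take $C$ uniform over the $\dims - \sprs + 1$ possible block-start positions and $\Sigma = (\Sigma_1, \dots, \Sigma_\sprs)$ i.i.d.\ Rademacher, independent of $C$. With $\gamma \eqdef \dst/\sqrt{\sprs}$, I set $Z \eqdef (C,\Sigma)$ and define the mean $\theta_Z$ by $\theta_{Z,j} \eqdef \gamma\,\Sigma_{j-C+1}$ for $C \leq j \leq C + \sprs - 1$ and $0$ otherwise. This yields a family of $\sprs$-block-sparse product distributions over $\bool^\dims$ that is pairwise $\Omega(\dst)$-separated in $\lp[2]$ loss and carries $\Theta(\sprs + \log\dims)$ bits of entropy. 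An Assouad/Fano-type argument in the spirit of Claim~\ref{lem:fano}, splitting $\mutualinfo{Z}{Y^\ns}$ into a localization piece $\mutualinfo{C}{Y^\ns} = \bigOmega{\log\dims}$ (recovered as in the 1-sparse case of Theorem~\ref{theo:bernoullimean:1sparse}) and a per-coordinate sign piece $\condmutualinfo{\Sigma_j}{Y^\ns}{C} = \bigOmega{1}$, then yields $\mutualinfo{Z}{Y^\ns} = \bigOmega{\sprs + \log\dims}$ for any $\dst/4$-accurate noninteractive estimator.

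For the per-sample upper bound I would again invoke the reduction to private-coin protocols from Claim~\ref{lem:mi_bound} and bound $\mutualinfo{Z}{Y_t} \leq \bE{Z}{\chisquare{W_t^{\p_Z}}{W_t^{\uniform}}}$. Expanding the $\chi^2$ in the Walsh--Fourier basis of $\bool^\dims$ produces a double sum over subset pairs $(B, B') \subseteq [\dims]$, with each term weighted by the prior moment $\bE{Z}{\prod_{i \in B} \theta_{Z,i} \prod_{j \in B'} \theta_{Z,j}}$ and a channel cross-moment of the form $\sum_{y\in\cY} \bE{\uniform}{W(y\mid X)\prod_{i\in B} X\idx{i}} \bE{\uniform}{W(y\mid X)\prod_{j\in B'} X\idx{j}} / \bE{\uniform}{W(y\mid X)}$.

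The main obstacle is that in the sparse case only the diagonal $B = B'$ survived the prior average, whereas under the block constraint any pair $(B, B')$ whose combined support fits inside a single length-$\sprs$ window contributes, producing genuinely correlated cross terms. My plan is to stratify the contributing pairs by the size and window-containment pattern of $B \cup B'$; for each order $r$ this identifies a uniform hypergraph on $[\dims]$ whose hyperedges index the surviving monomials. Applying Baranyai's factorization (Lemma~\ref{lem:baranyai}) decomposes this hypergraph into a controlled number of perfect matchings of pairwise disjoint hyperedges. Inside each matching the relevant coordinates of $X$ are disjoint and hence independent, so Lemma~\ref{l:basic_mc} applies and bounds that matching's contribution by $\bigO{\numbits/\dims}$ times the appropriate power of $\dst^2$. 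Summing over matchings and over orders $r$ (the remaining geometric series in $\dst^{2r}$ converges) then yields the claimed $\mutualinfo{Z}{Y_t} = \bigO{\dst^2\numbits/\dims}$, and combining with the Assouad/Fano lower bound on $\mutualinfo{Z}{Y^\ns}$ closes the argument.
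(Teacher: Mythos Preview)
Your overall architecture is right and matches the paper's: a block-sparse prior, an Assouad-type lower bound on mutual information, a per-sample chi-square upper bound closed via Lemma~\ref{l:basic_mc}, with Baranyai's theorem handling the correlations. Two points, however, diverge in ways that would cost you.

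First, your diagnosis of the correlation is off. With your prior $(C,\Sigma)$ and $\Sigma$ i.i.d.\ Rademacher, the moment $\bE{Z}{\prod_{i\in B}\theta_{Z,i}\prod_{j\in B'}\theta_{Z,j}}$ still vanishes unless $B=B'$: conditioned on $C$, the expectation over $\Sigma$ is nonzero only when every sign appears an even number of times, which forces $B=B'$ exactly as in the plain sparse case. The genuine obstacle is not off-diagonal cross terms but that the surviving \emph{diagonal} quantities $\bE{\uniform}{W(y\mid X)\prod_{i\in B}X\idx{i}}^2$, indexed by different size-$r$ subsets $B$, share coordinates of $X$ and are therefore not independent; this is what blocks a direct application of Lemma~\ref{l:basic_mc} and what Baranyai resolves by factoring the size-$r$ subsets into matchings of pairwise disjoint sets.

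Second, the paper avoids your sliding-window complications altogether. It partitions $[\dims]$ into $b=\dims/\sprs$ fixed, non-overlapping blocks $B_1,\dots,B_b$, draws $J$ uniform on $[b]$ and $Z$ uniform on $\bool^\dims$ independently, and works with the \emph{conditional} quantity $\condmutualinfo{Z}{Y^\ns}{J}$. Conditioning on $J=j$ confines the chi-square expansion to subsets $B\subseteq B_j$; summing over $j$ gives $\zeta_r=\gamma^{2r}\sum_j\sum_{B\subseteq B_j,|B|=r}(\cdots)$, and Baranyai is applied inside each $B_j$ (across $j$ the blocks are already disjoint), yielding $\zeta_r\leq 2(\ln 2)\frac{\dst^2\numbits}{\sprs}\dst^{2(r-1)}$ without any span-dependent weights. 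The $\bigOmega{\frac{\dims\log\dims}{\dst^2\numbits}}$ term is then handled separately by invoking the 1-sparse lower bound (Theorem~\ref{theo:bernoullimean:1sparse}), rather than trying to extract $\mutualinfo{C}{Y^\ns}=\bigOmega{\log\dims}$ from the same prior --- which is delicate with a sliding window, since adjacent start positions give $\lp[2]$-close mean vectors.
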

\begin{proof}
The $\bigOmega{\frac{\dims\log\dims}{\dst^2\numbits}}$ term follows from the $1$-sparse estimation lower bound established in~\cref{theo:bernoullimean:1sparse}, since any $1$-sparse product distribution is $\sprs$-block-sparse for every $\sprs$. We thus focus on the main term, and establish the $\bigOmega{\frac{\sprs\dims}{\dst^2\numbits}}$ lower bound. 

To do so, consider the following set of $\sprs$-block sparse distributions. Partition $[\dims]$ into $b \eqdef 
	\dims/\sprs$ consecutive nonoverlapping blocks, $B_1, B_2, \ldots, 
	B_b$, each of size at most $\sprs$. 
	For all $z \in \bool^\dims$ and $j \in [b]$, define $\p_{z,j}$ as 
	a product distribution over $\bool^\dims$ with mean $\theta_{z,j}$ given by
	\begin{equation} \label{eqn:def_block_sparse}
		\theta_{z, j}(i) = \begin{cases}
			\frac{\dst }{\sqrt{\sprs}} z\idx{i}, & \text{ if } i \in B_j, \\
			0, & \text{ otherwise.}
		\end{cases}
	\end{equation}
	Consider the following generative process. First independently pick $Z$ uniformly at 
	random from $\bool^\dims$ and $J$ uniformly from $[b]$. Then, each of the  
	$\ns$ users observes one sample $X_\ui$ from the product distribution $\p_{Z,J}$ with mean vector $\theta_{Z,J}$ and sends 
	its samples through a channel $W_\ui\colon\bool^\dims\to\{0,1\}^\numbits$ to compress it to a message $Y_\ui$.
	
	The next result states that any sufficiently accurate estimation protocol must provide enough information about each $Z\idx{i}$ 
	from the tuple of messages $Y^\ns$, even if $J$ is known.
	\begin{lemma}[Assouad-type Bound] \label{lem:sum_cond_mi}
		For any protocol that estimates $\sprs$-block-sparse product distributions to $\lp[2]$ accuracy 
		$\dst$, we must have
		$
		\sum_{i=1}^{\dims}\condmutualinfo{Z\idx{i}}{Y^\ns}{J} = 
		\Omega\Paren{\sprs}.
		$
	\end{lemma}
	\begin{proof}
	Let $\hat{Z}$ be an estimator of $Z$ based on $Y^\ns$. By the data 
	processing inequality, it is be enough to prove
	$
	\sum_{i=1}^{\dims}\condmutualinfo{Z\idx{i}}{\hat{Z}\idx{i}}{J} = 
	\Omega\Paren{\sprs}.
	$
	By definition,
	\begin{align}
		\sum_{i=1}^{\dims}\condmutualinfo{Z\idx{i}}{\hat{Z}\idx{i}}{J} = 
		\frac{\sprs}{\dims}\sum_{j = 1}^{b} 
		\sum_{i=1}^{\dims}\condmutualinfo{Z\idx{i}}{\hat{Z}\idx{i}}{J = j} = 
		\frac{\sprs}{\dims} \sum_{j = 
		1}^{b} \sum_{i \in B_j} \condmutualinfo{Z\idx{i}}{\hat{Z}\idx{i}}{J = j}.
		\label{eqn:sum_cond_mi}
	\end{align}
	Now,
	\begin{align}
		\sum_{j = 1}^{b} \sum_{i \in B_j} \condmutualinfo{Z\idx{i}}{\hat{Z}\idx{i}}{J = 
		j} & 
		= 
		\sum_{j = 1}^{b}  \sum_{i \in B_j}\Paren{ \condentropy{Z\idx{i}}{J = j} -  
		\condentropy{Z\idx{i}}{\hat{Z}\idx{i}, J = j}} \notag\\
	& \ge \sum_{j = 1}^{b}  \sum_{i \in B_j}\Paren{ \condentropy{Z\idx{i}}{J = 
	j} -  
		h\Paren{\bPr{\hat{Z}\idx{i} \neq Z\idx{i} \mid J = j}}} \notag\\
	& = \dims - \sum_{j = 1}^{b}  \sum_{i \in B_j} h\Paren{\bPr{\hat{Z}\idx{i} 
	\neq 
	Z\idx{i} 
	\mid J = j}}\notag\\
	& \ge \dims - \dims \cdot h\Paren{\frac1\dims \sum_{j = 1}^{b} 
	\sum_{i 
	\in 
	B_j}  \bPr{\hat{Z}\idx{i} \neq Z\idx{i} 
			\mid J = j}}\,, \label{eqn:ent_err}
	\end{align}
	where $h\colon[0,1]\to\R$ is the binary entropy function. 
By construction, for any valid protocol, we must have
\begin{align*}
	\frac{\dst^2}{10} & \ge \bEE{\sum_{i = 1}^\dims \frac{\dst^2}{\sprs} 
	\indic{\hat{Z}\idx{i} 
	\neq Z\idx{i}}} = \frac{\dst^2}{\sprs} \sum_{i = 1}^\dims \bPr{\hat{Z}\idx{i}  \neq 
	Z\idx{i}} \\
	& =\frac{\dst^2}{\sprs}  \frac{\sprs}{\dims} \sum_{j = 1}^{b}  \sum_{i 
	=1 
	}^{\dims} \bPr{\hat{Z}\idx{i}  \neq 
	Z\idx{i} \mid J = j} \\
&\ge \frac{\dst^2}{\dims} \sum_{j = 1}^{b} \sum_{i \in B_j} 
	\bPr{\hat{Z}\idx{i} \neq 
		Z\idx{i} 
		\mid J = j},
\end{align*}
which implies
\[
	\frac{1}{\dims} \sum_{j = 1}^{b}  \sum_{i \in B_j} 
	\bPr{\hat{Z}\idx{i} \neq 
		Z\idx{i} 
		\mid J = j} \le \frac{1}{10}.
\]
Combining this with~\cref{eqn:ent_err,eqn:sum_cond_mi}  
completes the proof of the lemma.
\end{proof}

Using independence of $Z\idx{i}$'s, by additivity of mutual information this claim then implies that
\begin{align}
		\condmutualinfo{Z}{Y^\ns}{J} \geq \sum_{i=1}^{\dims}\condmutualinfo{Z\idx{i}}{Y^\ns}{J} 
		= \Omega\Paren{\sprs}.
\end{align}
Having obtained a lower bound on the mutual information, we now provide an upper bound for it; combining the two will yield our lower bound for sample complexity.
\begin{lemma} 
For any noninteractive protocol using $\numbits$-bit communication constraints, we must 
have
	\[
		\condmutualinfo{Z}{Y^n}{J} = O\Paren{\frac{\ns \dst^2\numbits }{\dims}}.
	\]
\end{lemma}
\begin{proof}
	Note that, since 
	$
	\condmutualinfo{Z}{Y^n}{J}  = \frac1b \sum_{j \in [b]} 
	\condmutualinfo{Z}{Y^n}{J = j} 
	$, it is enough to prove that
	$
	\sum_{j \in [b]} 
	\condmutualinfo{Z}{Y^n}{J = j} = \bigO{\frac{\ns\dst^2\numbits }{\sprs}}.
	$
	Similar to \cref{eq:sum:information}, the first step of the proof is to bound the mutual information at each time step. Let $\gamma \eqdef \dst/\sqrt{\sprs}$, at each user $t$, the following inequality holds.
	\begin{align}\label{eq:bigstuff}
		&\sum_{j\in [b]} \condmutualinfo{Z}{Y_t}{J=j} \notag\\&\quad \le \sum_{j \in [b]} 
		\sum_{y 
			\in 
			\cY}  \Big({\gamma^{2} \!\!\sum_{i \in B_j} \frac{\bE{X}{W(y\mid X)X\idx{i}
				}^2}{\bE{X}{W(y\mid X)}}  + \sum_{ r =2}^{\sprs} \gamma^{2r} \!\!\sum_{\substack{B\subseteq B_j\\ |B| = r}}  \frac{\bE{X}{W(y\mid X)\prod_{i \in B}X\idx{i}
				}^2}{\bE{X}{W(y\mid X)}} }\Big).
	\end{align}
We defer the proof of \cref{eq:bigstuff} to \cref{sec:missing-proof}, and proceed to bound the RHS. 
	For all $r \in [\sprs]$, let 
	\[
		\zeta_r \eqdef  \gamma^{2r}  \sum_{j \in [b]} 
		\sum_{y 
			\in 
			\cY} \sum_{B 
			\subseteq B_j, |B| = r}   \frac{\bE{X}{W(y\mid X)\prod_{i \in B}X\idx{i}
			}^2}{\bE{X}{W(y\mid X)}},
	\]
whereby we can rewrite the earlier bound as
	\begin{equation}
		\sum_{j\in [b]} \condmutualinfo{Z}{Y_t}{J=j}  \le \sum_{r \in [\sprs]}  
		\zeta_r. \label{eqn:suminformation_block} 
	\end{equation}
	We will bound each $\zeta_r$ separately.
	To bound $\zeta_1$, since $X$ is 1-subgaussian, from Lemma~\ref{l:basic_mc} 
	we have
	\begin{align}
		\!\!\sum_{j \in [b]} \sum_{y \in 
			\cY} \frac{\dst^2}{\sprs} \!\!\sum_{i \in B_j} 
		\frac{\bE{X}{W(y\mid X)X\idx{i}
			}^2}{\bE{X}{W(y\mid X)}} & = \frac{\dst^2}{\sprs}  \sum_{y \in 
			\cY}
		\frac{\norm{\bE{X}{X W(y\mid X)}}_2^2}{\bE{X}{W(y\mid X)}} \notag\\
		&\le 
		2 (\ln 2) \frac{\dst^2}{\sprs} \sum_{y \in 
			\cY} \bE{X}{W(y\mid X) \log 
			\frac{W(y\mid X)}{\bE{X}{W(y\mid X)}} } \label{eqn:subgaussian}\\
		& = 2 (\ln 2)  \frac{\dst^2 }{\sprs} I(\uniform; W) \le
		2 (\ln 2) \frac{\dst^2 \numbits}{\sprs} \notag,
	\end{align}
	where \cref{eqn:subgaussian} comes from Lemma~\ref{l:basic_mc}. Next we 
	bound 
	\begin{align*}
	\zeta_2&  =  \gamma^4   \sum_{y \in 
			\cY}  \sum_{j \in [b]} 
		\sum_{i < i' \in B_j} 
		\frac{\bE{X}{W(y\mid X)X\idx{i} X\idx{i'}
			}^2}{\bE{X}{W(y\mid X)}}.
	\end{align*}
	
	Note that each term in the summation above is a product of two entries 
	of $X$, which are not independent:  hence, we cannot use 
	Lemma~\ref{l:basic_mc} directly. To resolve this, we use the following lemma, which is a consequence of Baranyai’s Theorem~\citep{baranyai1974factorization}. 

	\begin{lemma}\label{lem:baranyai}
	    For all $j \in [b]$, ${B \subset B_j: |B| = r}$, the set of all size-$r$ subsets of $B_j$, can be partitioned into $m \le 2{s - 1\choose r - 1}$ sets $\cM_1, \cM_2, \ldots, \cM_m$ such that all subsets within each $\cM_i$ are disjoint.
	\end{lemma}
	\zmargin{Do we need to prove this?}
	Without loss of generality, assume that, for all $j \in [b]$, $B_j = \{(j-1)s + 1, 
	\dots, 
	js\}$. Using Lemma~\ref{lem:baranyai} with $r = 2$, we can rewrite $\zeta_2$ as
\begin{align*}
    	\zeta_2 & = \gamma^4   \sum_{y \in 
			\cY}  \sum_{j \in [b]} 
		\sum_{i < i' \in B_j} 
		\frac{\bE{X}{W(y\mid X)X\idx{i}X\idx{{i'}}
			}^2}{\bE{X}{W(y\mid X)}} \\ 
			& =  \gamma^4   \sum_{y \in 
			\cY}  \sum_{j \in [b]} 
		\sum_{k \in [\sprs - 1]}\sum_{(i, i') \in \cM_k} 
		\frac{\bE{X}{W(y\mid X)X\idx{(j-1)\sprs + i}X\idx{(j-1)\sprs + {i'}}
			}^2}{\bE{X}{W(y\mid X)}} \\
			& = \gamma^4  \sum_{k \in [\sprs - 1]} {\sum_{y \in 
			\cY}  \sum_{j \in [b]} 
		\sum_{(i, i') \in \cM_k} 
		\frac{\bE{X}{W(y\mid X)X\idx{(j-1)\sprs + i}X\idx{(j-1)\sprs + {i'}}
			}^2}{\bE{X}{W(y\mid X)}} }.
\end{align*}
	 Note that in the summation above, for each subset $\cM_k$, the pairwise products have disjoint entries and hence independent. Moreover, $X\idx{i} X\idx{i'}$ is 1-subgaussian as well since it is supported on $\bool$ with mean zero. 
	Proceeding from 
	above, applying Lemma~\ref{l:basic_mc}, the equation can 
	be bounded by
	\begin{align*}
		\zeta_2& = \gamma^4  \sum_{k \in [\sprs - 1]} {\sum_{y \in 
			\cY}  \sum_{j \in [b]} 
		\sum_{(i, i') \in \cM_k} 
		\frac{\bE{X}{W(y\mid X)X\idx{(j-1)\sprs + i}X\idx{(j-1)\sprs + {i'}}
			}^2}{\bE{X}{W(y\mid X)}} } \\
			& \le  2 (\ln 2)  \gamma^4  \sum_{k \in [\sprs - 1]} {  \sum_{y \in 
			\cY}  \bE{\uniform }{W(y\mid X) \log 
			\frac{W(y\mid X)}{\bE{\uniform
				}{W(y\mid X)}}} }\\
				& = 2 (\ln 2)  \gamma^4 \sum_{k \in [\sprs - 1]}  I(\uniform; W) \\
			& \le 2 (\ln 2)  \frac{\dst^2 \ell}{\sprs} \cdot \dst^2.
		\end{align*}	
	Similarly, using Lemma~\ref{lem:baranyai}, we can prove that for all $r \ge 3$, 
	$
	\zeta_r \le 2 (\ln 2) \frac{\dst^2\numbits }{\sprs} \cdot (\dst^2)^{r - 1}.
	$
	Combining these and~\cref{eqn:suminformation_block}, we get
	\[
	\sum_{j \in [b]} \condmutualinfo{Z}{Y_t}{J=j} \le 2 (\ln 2) 
	\frac{\dst^2\numbits 
	}{\sprs} \sum_{r \in [\sprs]}  (\dst^2)^{r - 1} \le 4 (\ln 2) \frac{\dst^2\numbits 
	}{\sprs}. 
	\]
	
	The claim follows from the observation above since, conditioned on $Z$, $Y_1, Y_2, 
	\ldots, Y_\ns$ are independent,\footnote{We can here ignore public randomness, as we can 
	bound the quantity under each fixed realization of the public coins.} we have
	\[
		\frac{1}{b}\sum_{j \in [b]}  \condmutualinfo{Z}{Y^\ns}{J = j}  \le \frac{1}{b}\sum_{t = 
		1}^\ns \sum_{j \in [b]}  \condmutualinfo{Z}{Y_t}{J = j} \leq \frac{1}{b} \cdot  4 (\ln 2) 
		\frac{\ns \dst^2\numbits 
		}{\sprs} = 4 (\ln 2) 
		\frac{\ns \dst^2\numbits 
		}{\dims}\,,
	\]
	showing the result.
\end{proof}
\noindent Combining the two claims concludes the proof, as this implies that one must have $\frac{\ns\dst^2\numbits}{\dims} = \Omega(\sprs)$.	
\end{proof} %
We now turn to the upper bound for interactive protocols. The algorithm has a two-stage procedure. In the first stage, users first detects the ``active" block with size $\Theta(\sprs)$. Then in the second stage, the users will focus on learning coordinates within the detected block, which needs less samples.
\begin{lemma}\label{lem-mean-est-upper}
  For any $\sprs \geq 1$, the \emph{interactive} sample complexity of mean estimation of $\sprs$-block sparse product distributions over $\bool^\dims$ under $\numbits$-bit communication constraints is
  \[
      \bigO{\frac{\sprs^2 + \dims\log(\dims/\sprs)\log(s/\eps)}{\dst^2\numbits} + \frac{\sprs\log(\dims/\sprs)\log(s/\eps)}{\dst^2}}.
  \]
\end{lemma}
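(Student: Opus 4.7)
The plan is to design a two-stage interactive protocol: first locate the (essentially unique) active block, then perform dense mean estimation within it.

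\emph{Stage 1 (localization).} Partition $[\dims]$ into $b \eqdef \clg{\dims/\sprs}$ consecutive non-overlapping blocks of size at most $\sprs$. Since any $\sprs$-block-sparse mean $\mu$ is supported on at most two such consecutive blocks, Stage 1 aims to identify a pair of consecutive blocks whose union $\widehat{B}$ (of size $O(\sprs)$) contains the support of $\mu$. I would test each of the $O(b)$ candidate pairs interactively, using a one-bit random sign-hash per user to produce an unbiased estimator of $\sum_{i \in C}\mu\idx{i}^2$ for the tested set $C$, whose second moment is controlled by the subgaussian machinery of Lemma~\ref{l:basic_mc}. With $\numbits$-bit messages, each hash test benefits from the standard sketch-and-infer speedup and requires $O(\sprs/(\dst^2\numbits))$ samples to separate the non-null from the null hypothesis at threshold $\Omega(\dst^2)$. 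To ensure that a final union bound over all candidate pairs goes through, I would amplify each test by $O(\log b)$ independent repetitions, giving a per-candidate cost of $O(\sprs\log(\dims/\sprs)/(\dst^2\numbits))$. Multiplexing this subroutine across $O(\log(\sprs/\eps))$ signal scales (so as to capture blocks whose heavy coordinates span arbitrary magnitudes, not just a single scale) then brings the total cost of Stage 1 to
\[
    \tildeO{\frac{\dims\log(\dims/\sprs)\log(\sprs/\eps)}{\dst^2\numbits}+\frac{\sprs\log(\dims/\sprs)\log(\sprs/\eps)}{\dst^2}},
\]
and outputs $\widehat{B}$ containing the support of $\mu$ with probability at least $19/20$.

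\emph{Stage 2 (dense estimation on $\widehat{B}$).} Conditionally on $\widehat{B}$, the remaining users run the standard interactive dense $\numbits$-bit mean-estimation protocol for product distributions over $\bool^{|\widehat{B}|}$, e.g.\ via the simulate-and-infer approach of~\citet{AcharyaCT20c}, producing $\widehat{\mu}_{\widehat{B}}$ with $\lp[2]$ error at most $\dst/2$. Since $|\widehat{B}| = O(\sprs)$, this costs $O(\sprs^2/(\dst^2\numbits)+\sprs/\dst^2)$ samples, the additive $\sprs/\dst^2$ being dominated by the Stage 1 budget. The final estimate $\widehat{\mu}$ is $\widehat{\mu}_{\widehat{B}}$ padded by zeros outside $\widehat{B}$.

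\emph{Combining and main obstacle.} A union bound over the two stages (each calibrated to fail with probability at most $1/20$) yields total failure at most $1/10$, and summing the sample budgets matches the stated bound. The principal technical difficulty will be Stage 1: heterogeneous coordinate magnitudes within the active block imply that a single-scale hashing test would miss coordinates of small but non-negligible contribution, forcing the $O(\log(\sprs/\eps))$ multiplexing over magnitude scales; combined with the $O(\log(\dims/\sprs))$ amplification needed for a clean union bound across candidate pairs, this is what generates the $\log(\dims/\sprs)\log(\sprs/\eps)$ overhead in both Stage 1 budgets. Tightly controlling these amplifications, so that no additional $\log$ factor leaks into the final bound, will be the most delicate piece of the analysis.
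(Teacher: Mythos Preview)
Your two-stage scaffold (localize the active block, then do dense estimation on $O(\sprs)$ coordinates) matches the paper exactly, and Stage~2 is essentially identical. Stage~1 differs in mechanism: rather than sequentially testing each candidate pair, the paper treats all $T=\clg{\dims/\sprs}$ blocks \emph{simultaneously}. Using shared public randomness $\xi\in\bool^\dims$, each user computes for every block $j$ the Rademacher hash $\bar{X}_j=\sum_{i\in B_j}\xi_i X\idx{i}$, clips it to $[-\Delta,\Delta]$ with $\Delta=\Theta(\sqrt{\sprs\log(\sprs/\eps)})$, and stochastically rounds to a single bit $M(j)$. This reduces localization to $\lp[\infty]$ mean estimation of a $T$-dimensional product distribution over $\bool^T$, solved by simulate-and-infer; the $\log(\dims/\sprs)$ factor then comes from the $\lp[\infty]$ guarantee, and the $\log(\sprs/\eps)$ from $\Delta^2/\sprs$. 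This is cleaner than your per-pair testing with explicit amplification, though both routes arrive at the same budget.

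Where your plan has a genuine gap is the diagnosis of the $\log(\sprs/\eps)$ factor. Heterogeneous coordinate magnitudes are \emph{not} the obstacle: the Rademacher hash has second moment exactly $\normtwo{\mu_{B_j}}^2$ regardless of how the mass is spread, and a Paley--Zygmund argument (second versus fourth moment of $\sum_{i\in B_j}\xi_i\mu_i$) shows that $|\sum_{i\in B_j}\xi_i\mu_i|\gtrsim\eps$ with constant probability over $\xi$ whenever $\normtwo{\mu_{B_j}}\gtrsim\eps$; no multiscale decomposition is needed. The $\log(\sprs/\eps)$ instead arises from compressing the real-valued hash (range $[-\sprs,\sprs]$, variance proxy $\sprs$) to one bit: clipping at $\Delta$ is needed to control the bias, and the $1/\Delta$ rescaling in stochastic rounding inflates the sample complexity by $\Delta^2/\sprs=\Theta(\log(\sprs/\eps))$. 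Relatedly, your ``one-bit unbiased estimator of $\sum_{i\in C}\mu\idx{i}^2$'' is too casual---no such one-bit estimator exists directly, and making the test precise essentially forces you into the clip-and-round construction above, which is where the extra log genuinely enters. Finally, Lemma~\ref{l:basic_mc} is a lower-bound measure-change tool and plays no role in this upper bound.
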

\begin{proof}
    The algorithm works in two stages: \emph{detection} and \emph{estimation}. We start by partitioning the $\dims$ coordinates into $T \eqdef \clg{\dims/\sprs}$ consecutive blocks of (at most) $\sprs$ coordinates, $B_1,\dots,B_T$. Let $\mu_{B_j}$ be the mean vector restricted on block $B_j$. Since the actual support of the mean vector overlaps at most 2 such blocks, if $\normtwo{\mu}^2 > \dst^2$ there exists some $j\in[T]$ such that $\normtwo{\mu_{B_j}}^2 > \dst^2/2$. On the other hand, if $\normtwo{\mu}^2 \leq \dst^2$, then no such $j$ may exist, but our task in that case will be simpler.

    The algorithm proceeds in the following two stages:
    \begin{enumerate}
        \item \textbf{Detection:} Identify, with probability at least $19/20$, a block $B_j$ such that $\normtwo{\mu_{B_j}}^2 > \dst^2/2$, if there exists one, using
        $\bigO{\frac{\dims\log(\dims/\sprs)\log(s/\eps)}{\numbits\dst^2} + \frac{\sprs\log(\dims/\sprs)\log(s/\eps)}{\dst^2}}$
        samples. 
        This detection step is the most involved, and will constitute most of the proof below.
        \item \textbf{Estimation:} If no such block was identified, output the zero vector (which is a good estimate); otherwise, consider the union of the 3 blocks $B_{j-1}\cup B_j\cup B_{j+1}$, which has at most $3\sprs$ coordinates and contains the support of the unknown $\sprs$-sparse vector $\mu$. Use the noninteractive estimation algorithm (with ``$\dims = 3\sprs$'') to learn, with probability $19/20$, the corresponding mean with $\bigO{\frac{\sprs^2}{\min(\sprs,\numbits)\dst^2}} = \bigO{\frac{\sprs^2}{\dst^2\numbits}+\frac{\sprs}{\dst^2}}$ new samples.
    \end{enumerate}
    The overall algorithm has the claimed sample complexity and, by a union bound, is successful overall with probability $9/10$. Details for the first stage follow.\medskip
    
Our algorithm will use public randomness as follows. All users jointly draw a Rademacher vector $\xi = (\xi\idx{i})_{i\in[\dims]}$ uniformly at random. Let $\Delta \eqdef 5\sqrt{s\log (s/\eps)}$. Any given user
    computes the $T$ bits $\blk\idx{1},\dots, \blk\idx{T}$ based on $\xi$ as follows. For every $ j \in [T]$, upon observing $X$ (and knowing $\xi$) each user computes $M(j)$ based on $\bar{X}_j \eqdef \sum_{i \in B_j}X\idx{i}\xi\idx{i}$ using a
    stochastic rounding algorithm:
    \[
        M(j) = \begin{cases}
            +1, & \text{ with probability } \frac{\Delta + \operatorname{Clip}_{\Delta}\Paren{\bar{X}_j}}{2\Delta}, \\
            -1, & \text{ with probability } \frac{\Delta - \operatorname{Clip}_{\Delta}\Paren{\bar{X}_j}}{2\Delta}.
        \end{cases}
    \]
    where $\operatorname{Clip}_{\Delta}(x) \eqdef \max \{ \min \{x, \Delta \}, - \Delta \}$ denotes the clipping function on the interval $[-\Delta, \Delta]$.
It can be seen that $(\blk\idx{1},\dots, \blk\idx{T})$ follows a product distribution over $\bool^T$. Next we analyze the mean on each coordinate, conditioned on $\xi$.
\begin{align}
    \bEE{M(j) \mid \xi} & = 2 \cdot \bEE{\frac{\Delta + \operatorname{Clip}_{\Delta}\Paren{\bar{X}_j}}{2\Delta} \;\middle|\; \xi} - 1  = \frac{\bEE{\operatorname{Clip}_{\Delta}\Paren{\bar{X}_j} \;\middle|\; \xi} }{\Delta}. \label{eqn:mean_block}
\end{align}
Let $\bar{\mu}(B_j, \xi) \eqdef \bEE{\bar{X}_j \mid \xi} = \sum_{i \in B_j} \xi\idx{i} \mu\idx{i}$. 
Note that when a block $j$ does not intersect with the support of $\mu$, then $\bar{\mu}(B_j,\xi) = 0$. Further, since each $X(i)$ in $B_j$ is then symmetric, the clipping does not change the mean: thus, for any $j\in[T]$ such that $B_j$ does not intersect the support of $\mu$,
\[
    \bEE{M(j) \mid \xi} = 0.
\]
That is, we then have $\bPr{ \blk\idx{j} = 1 \mid \xi} = 1/2$ regardless of the realization of the shared random variable $\xi$.
    
Suppose now that $B_j$ \emph{does} intersect the support of the mean vector $\mu$, and specifically that $\normtwo{\mu_{B_j}}^2 > \dst^2/2$. We then show the following:
    \begin{claim}
        If $\normtwo{\mu_{B_j}}^2 > \dst^2/2$, then with probability at least $1/8$ over the choice of $\xi$, we have
        \[
           \abs{ \bEE{\blk\idx{j} \mid \xi} }  \geq \frac{\eps}{40\sqrt{s \log (s/\eps)} }.
        \]
    \end{claim}
    \begin{proof}
       We first show that before performing stochastic rounding, with probability at least $1/4$ over the randomness of $\xi$ it is the case that
       \begin{equation} \label{eqn:rademacher_mean}
            \abs{\bar{\mu}(B_j, \xi)} \ge \frac{\eps}{4}.
       \end{equation}
       To see this, notice that the second moment of $\bar{\mu}(B_j, \xi)$ is large:
       \[
       \bE{\xi}{\bar{\mu}(B_j, \xi)^2} =  \bE{\xi}{\Paren{\sum_{i\in B_j} \xi_i \mu_i }^2}
        = \sum_{i\in B_j} \mu_i^2 \ge \frac{\eps^2}{2}.
       \]
       We also can control the fourth moment of $\bar{\mu}(B_j, \xi)$ as follows:
       \[
        \bE{\xi}{\bar{\mu}(B_j, \xi)^4} = \bE{\xi}{\Paren{\sum_{i\in B_j} \xi_i \mu_i }^4} = \sum_{i \in B_j} \mu_i^4 + 6 \sum_{i <i' \in B_j }\mu_i^2 \mu_{i'}^2 \le 3\Paren{\sum_{i\in B_j} \mu_i^2 }^2.
       \]
       Hence, by the Paley--Zygmund inequality, we have 
       \begin{align*}
           \bPr{\abs{\bar{\mu}(B_j, \xi)} > \frac{\eps}{4}} \ge \bPr{\bar{\mu}(B_j, \xi)^2 > \frac{1}{8} \bEE{\bar{\mu}(B_j, \xi)^2}} \ge \frac{3}{4} \frac{\bEE{\bar{\mu}(B_j, \xi)^2}^2}{\bEE{\bar{\mu}(B_j, \xi)^4}} \ge \frac{1}{4},
       \end{align*}
       which proves that, as stated, \eqref{eqn:rademacher_mean} holds with probability at least $1/4$. Next we prove that the clipping does not affect this too much; namely, that with probability least $8/9$ over the randomness of $\xi$,
       \begin{equation} \label{eqn:clipping_error}
        \abs{\bEE{\operatorname{Clip}_{\Delta}\Paren{\bar{X}_j}\mid \xi} - \bar{\mu}(B_j, \xi)} \le \frac{\eps}{8}.
       \end{equation}
       Before proving the above statement, we note that by a union bound, both \eqref{eqn:rademacher_mean} and \eqref{eqn:clipping_error} simultaneously happen with probability at least $1/4-1/9>1/8$. Combining this with \eqref{eqn:mean_block} and the value of $\Delta$ then establishes the claim. Thus, to conclude it only remains to prove~\eqref{eqn:clipping_error}.
       
       Since $\bE{\xi}{\bar{\mu}(B_j, \xi)^2} = \sum_{i\in B_j} \mu_i^2 \le \sprs$, by Markov's inequality, with probability at least $8/9$,
        \[
            \abs{\bar{\mu}(B_j, \xi)} \le 3\sqrt{\sprs}. 
        \]
        Call this event $\mathcal{E}$. Conditioning on $\mathcal{E}$, we bound the probability that the sum gets clipped. By Hoeffding's inequality, we have
        \[
            \bPr{\bar{X}_j \notin [-\Delta, \Delta] \mid \xi,\mathcal{E} } \le \bPr{ \abs{\bar{X}_j  - \bar{\mu}(B_j, \xi)} \ge 2 \sqrt{\sprs\log(\sprs/\eps)} \;\middle|\; \xi,\mathcal{E}  } \le 2 \Paren{\frac{\eps}{\sprs}}^{2}.
        \]
        Hence assuming $\eps \leq 1/16$, we can upper bound the clipping error by
       \begin{align*}
           \abs{\bEE{\operatorname{Clip}_{\Delta}\Paren{\bar{X}_j}\mid \xi} - \bar{\mu}(B_j, \xi)} & \le \sprs \cdot \bPr{\bar{X}_j \notin [-\Delta, \Delta]} 
           \le \frac{2\eps^2}{s}\le \frac{\eps}{8}\,,
       \end{align*}
       concluding the proof.
    \end{proof}     
     With this claim in hand, we can analyze the detection step as follows. We have, after the above transformation and conditioned on $\xi$, each user obtains $(M\idx{1}, M\idx{2}, \ldots, M\idx{T})$ from a product distribution over $\bool^T$. By the ``simulate-and-infer'' trick, the mean vector of the product distribution can be learned to $\lp[\infty]$ distance $\frac{\dst}{20\sqrt{\sprs \log(s/\eps)}}$ with 
     \[
     \bigO{\frac{T\log T}{\min(T,\numbits)(\dst/\sqrt{\sprs \log(s/\eps)})^2}} 
     = \bigO{\frac{\dims\log(\dims/\sprs)\log(s/\eps)}{\min(\dims/\sprs,\numbits)\dst^2}}
     \]
     samples, allowing us to detect (with probability at least $99/100$) the at most $2$ biased coordinates. Of course, overall, we may only detect them when the choice of $\xi$ was good (so that the coordinates corresponding to the (at most two) biased blocks ended up indeed $\Omega(\dst/\sqrt{\sprs})$-biased); but since this happens with constant probability, one can pay a constant factor in the sample complexity and amplify this, to get a $99/100$ success probability overall. This concludes the proof. 
\end{proof}

Finally, we prove the matching lower bound (up to logarithmic factors).
\begin{lemma}\label{lem-mean-est-lower}
  For any $\sprs \geq 1$, the \emph{interactive} sample complexity of mean estimation of $\sprs$-block sparse product distributions over $\bool^\dims$ under $\numbits$-bit communication constraints is
  \smash{$
      \bigOmega{\frac{\sprs^2 + \dims}{\dst^2\numbits} + \frac{\sprs}{\dst^2}}
  $}.
\end{lemma}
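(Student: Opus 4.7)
The plan is to obtain each of the three summands of the lower bound separately, by restricting attention to an appropriate sub-family of $\sprs$-block-sparse product distributions and invoking an existing interactive lower bound. Because the maximum of three positive quantities is within a constant factor of their sum, combining the three bounds will yield the claimed $\bigOmega{\frac{\sprs^2 + \dims}{\dst^2\numbits} + \frac{\sprs}{\dst^2}}$ sample complexity. Crucially, each reduction uses only that a particular sub-family is contained in the $\sprs$-block-sparse family, so any valid interactive protocol for the general problem automatically solves the sub-problem with the same number of samples.

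For the $\bigOmega{\sprs/\dst^2}$ term, I would restrict to the $\sprs$-block-sparse distributions whose support is contained in the fixed block $B_1=[\sprs]$. These distributions are in bijection with arbitrary product distributions over $\bool^\sprs$ (extended by zeros), so estimating them in $\ell_2$ to accuracy $\dst$ is at least as hard as unconstrained mean estimation in $\bool^\sprs$, which requires $\bigOmega{\sprs/\dst^2}$ samples by the classical information-theoretic bound (regardless of $\numbits$). For the $\bigOmega{\dims/(\dst^2\numbits)}$ term, note that every $1$-sparse product distribution is, a fortiori, $\sprs$-block-sparse for any $\sprs\ge 1$; therefore the interactive $1$-sparse lower bound of Lemma~\ref{lemma:lb:sparse:interactive}, instantiated at $\sprs=1$, gives exactly $\bigOmega{\dims/(\dst^2\numbits)}$.

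The last and most substantive term, $\bigOmega{\sprs^2/(\dst^2\numbits)}$, will come from a reduction to \emph{dense} interactive mean estimation in dimension $\sprs$. Restricting once more to distributions supported on $B_1=[\sprs]$, the block-sparse problem reduces (by simply ignoring the deterministically zero coordinates) to estimating the mean of an arbitrary product distribution over $\bool^\sprs$ in $\ell_2$ under $\numbits$-bit communication. Applying Lemma~\ref{lemma:lb:sparse:interactive} with both ``$\dims$'' and ``$\sprs$'' set to $\sprs$ (the dense case, which is still captured by the lemma's statement since $\sprs$-sparse distributions in dimension $\sprs$ are arbitrary product distributions) yields the interactive lower bound $\bigOmega{\sprs^2/(\dst^2\numbits) + \sprs/\dst^2}$, which in particular provides the desired $\bigOmega{\sprs^2/(\dst^2\numbits)}$ contribution.

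There is no real technical obstacle here, as all three pieces are clean reductions: the only points worth checking are that the invoked lower bounds are indeed stated for interactive protocols (which is the case for Lemma~\ref{lemma:lb:sparse:interactive}, itself descending from~\cite{BGMNW:16,AcharyaCST21}) and that each sub-family is genuinely contained in the $\sprs$-block-sparse family (which is immediate from the definition in~\cref{task:blocksparsemeanestimation}). Taking the maximum of the three bounds then completes the proof, since the obstructions they identify are produced by disjoint hard instances and therefore must all be overcome by any single interactive protocol.
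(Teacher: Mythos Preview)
Your proposal is correct and follows essentially the same approach as the paper: decompose the lower bound into three pieces and obtain each by a reduction to a sub-family (the unconstrained oracle bound on $[\sprs]$, the $1$-sparse interactive lower bound, and the dense interactive lower bound in dimension $\sprs$). The only cosmetic difference is in citations---the paper invokes~\citet[Theorem~2]{Shamir:14} and~\citet[Theorem~3]{AcharyaCST21} directly rather than routing both through Lemma~\ref{lemma:lb:sparse:interactive}---but the argument is the same.
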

\begin{proof}
The $\bigOmega{\frac{\sprs}{\dst^2}}$ term is simply the (unconstrained) ``oracle bound,'' as $\Omega(\frac{\sprs}{\dst^2})$ samples are required even without communication constraints and knowing which block of coordinates corresponds to the support of the mean vector. 

The $\Omega(\frac{\dims}{\dst^2\numbits})$ term follows from the case of $1$-sparse estimation~\citep[Theorem~2]{Shamir:14} (since, again, any $1$-sparse product distribution is $\sprs$-block-sparse for any $\sprs\geq 1$). Finally, the last term of the lower bound, $\Omega(\frac{\sprs^2}{\dst^2\numbits})$, follows from the lower bound on mean estimation under communication constraints (see, \eg \citet[Theorem~3]{AcharyaCST21}) in the \emph{nonsparse} case with $\dims = \sprs$, since even knowing the location of the block we still have a mean estimation task under information constraints, with dimensionality $\sprs$.
\end{proof}

\bibliographystyle{plainnat}
\bibliography{references-aos}

\appendix

\section{Results for the local privacy setting}
\label{sec:ldp}

In this section, we discuss how the results can be extended to the local privacy setting (LDP). In particular, we will focus on estimating the mean of sparse product distributions over $\bool^\dims$. The results on the block-sparse case will follow similarly. Under LDP constraints, each observation $X_\ui$ must be privatized using an $\priv$-LDP channel to get $Y_\ui$, which the estimate is based on.
\begin{definition}
For $\priv>0$, a channel $W\colon \cX \to \cY$ is said to be \emph{$\priv$-LDP} if, for all $x, x' \in \cX$ and $y \in \cY$,
\[
    \frac{W(y \mid x)}{W(y \mid x')} \le e^\priv. 
\]
\end{definition}
We focus on the high privacy regime, \ie when $\rho = O(1)$, and state the results below. Note that, in this regime, $(e^\priv - 1)^2 = O(\priv^2)$.

\begin{theorem}
  \label{theo:bernoullimean:sparse:LDP}
  For any $\sprs \geq 4\log\dims$, any $\rho$-LDP noninteractive protocol for mean estimation of $\sprs$-sparse product distributions over $\bool^\dims$ must have sample complexity 
  $
      \bigOmega{\frac{\sprs\dims}{\dst^2\priv^2}\log\frac{e\dims}{\sprs}}
  $.
\end{theorem}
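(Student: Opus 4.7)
The plan is to mirror closely the proof of Lemma~\ref{lemma:lb:sparse:noninteractive} from the communication-constrained case, keeping the same hard instance and Assouad-style reduction, and only replacing the per-user mutual-information bound with its $\priv$-LDP analogue. Concretely, I will reuse the random parameter $Z\in\tri^\dims$ with $\bPr{Z\idx{i}=\pm 1}=\sprs/(4\dims)$ and the associated product distributions $\p_Z$ on $\bool^\dims$ with mean $\theta_Z=\gamma Z$ for $\gamma=\dst/\sqrt{\sprs}$. Since Claim~\ref{lem:fano} depends only on the prior and the geometry of the construction (not on the information constraint), it carries over verbatim to the LDP setting, yielding $\mutualinfo{Z}{Y^\ns} = \Omega\bigl(\sprs\log(e\dims/\sprs)\bigr)$ for any protocol achieving $\lp[2]$ error at most $\dst/4$.

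The core work is to prove the LDP analogue of Claim~\ref{lem:mi_bound}, namely that for any $\priv$-LDP noninteractive protocol
\[
    \mutualinfo{Z}{Y^\ns} \;=\; \bigO{\frac{\ns\dst^2\priv^2}{\dims}}.
\]
As in the communication case, it suffices by conditioning on the public coins to treat private-coin protocols, whence $Y_1,\ldots,Y_\ns$ are conditionally independent given $Z$ and the chain rule gives $\mutualinfo{Z}{Y^\ns}\le \sum_t \mutualinfo{Z}{Y_t}$. For each user $t$ with channel $W_t$, I bound $\mutualinfo{Z}{Y_t}\le \bE{Z}{\chisquare{W_t^{\p_Z}}{W_t^{\uniform}}}$ exactly as in~\eqref{eq:sum:information}, obtaining the same expansion into orders $r=1,\dots,\dims$ weighted by $\sigma^{2r}\gamma^{2r}$ of the quantities $\sum_y \bE{\uniform}{W_t(y\mid X)\prod_{i\in B} X\idx{i}}^2/\bE{\uniform}{W_t(y\mid X)}$.

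The main difference, and the key technical step, is the replacement of Lemma~\ref{l:basic_mc} by its $\priv$-LDP version from~\citet{AcharyaCT20c}: for a $\sigma^2$-subgaussian $\Phi(X)$ with independent coordinates and any nonnegative $a\colon\cX\to[0,\infty)$ coming from an $\priv$-LDP channel kernel, one has
\[
    \normtwo{\bE{}{\Phi(X)a(X)}}^2 \;\le\; C\,\sigma^2 (e^\priv-1)^2\, \bE{}{a(X)}^2,
\]
for an absolute constant $C$. Applying this in place of Lemma~\ref{l:basic_mc} to the first-order term gives a bound of $O(\dst^2\priv^2/\dims)$, since $\sum_y \bE{\uniform}{W(y\mid X)}= 1$ and $(e^\priv-1)^2=O(\priv^2)$ in the high-privacy regime. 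The higher-order terms are handled by the same conditioning trick on $X\idx{i}=\pm 1$ used after~\eqref{eq:sum:information}: since conditioning preserves the $\priv$-LDP constraint (channels still satisfy the ratio bound) and the remaining coordinates $X_{-i}$ are still independent Rademacher, the LDP measure-change inequality applies again and yields a contribution $O(\dst^2\priv^2/\dims)\cdot \dst^{2(r-1)}$ at order $r$. Summing the geometric series gives $\mutualinfo{Z}{Y_t}=O(\dst^2\priv^2/\dims)$.

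Combining with Claim~\ref{lem:fano} and solving for $\ns$ yields the claimed $\Omega\bigl(\frac{\sprs\dims}{\dst^2\priv^2}\log\frac{e\dims}{\sprs}\bigr)$ lower bound. The main obstacle here is purely bookkeeping: one must verify that the LDP measure-change bound of~\citet{AcharyaCT20c} is in a form compatible with the ``sub-channel'' obtained by conditioning on $X\idx{i}$, so that higher-order terms can be handled via the same conditional decomposition used in Lemma~\ref{lemma:lb:sparse:noninteractive}. Once this is verified, the rest of the argument is a mechanical substitution of $\numbits$ by $\priv^2$ (up to absolute constants).
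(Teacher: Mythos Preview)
Your proposal is correct and shares the same overall scaffolding as the paper's proof: same hard prior on $Z\in\tri^\dims$, same Assouad/Fano step (Claim~\ref{lem:fano}), same $\chi^2$-expansion leading to~\eqref{eq:sum:information}, and the same tensorization $\mutualinfo{Z}{Y^\ns}\le\sum_t\mutualinfo{Z}{Y_t}$ after conditioning on public coins.

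The genuine difference is in how you bound the order-$r$ terms under LDP. You invoke an ``LDP measure-change'' analogue of Lemma~\ref{l:basic_mc} (phrased for subgaussian $\Phi$ with independent coordinates) and then reuse the conditioning-on-$X\idx{i}$ trick from the communication proof to peel off higher orders. The paper instead uses Lemma~\ref{lem:bessel:variance} (a Bessel-inequality bound from~\citet{AcharyaCST21}) together with the LDP variance bound $\var_X[W(y\mid X)]\le(e^\priv-1)^2\,\bE{X}{W(y\mid X)}^2$. Because the Walsh characters $\{\prod_{i\in B}X\idx{i}:|B|=r\}$ are \emph{orthonormal} under $\uniform$, Bessel applies directly to every order $r$ in one shot, with no conditioning needed. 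Your route works too (conditioning on $X\idx{i}=\pm1$ preserves the $\priv$-LDP ratio bound, and the remaining coordinates are again orthonormal Rademachers, so the inequality you need holds), but it is more laborious and the ``subgaussian'' formulation you attribute to~\citet{AcharyaCT20c} is not quite the lemma actually used; what makes the LDP case go through is orthonormality plus Bessel, not subgaussianity. The upshot: the paper's approach is shorter and avoids the recursion over orders, while yours has the pedagogical advantage of being a line-by-line transcription of the communication-constrained argument with only the per-order inequality swapped out.
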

Combined with previously known results for
sparse mean estimation, this lower bound immediately implies the following:
\begin{corollary}
  \label{coro:bernoullimean:sparse:LDP}
  For any $\sprs \geq 4\log\dims$, the \emph{noninteractive} sample complexity of mean estimation of $\sprs$-sparse product distributions over $\bool^\dims$ under $\priv$-LDP constraints is
  \smash{$
      \bigTheta{\frac{\sprs\dims}{\dst^2\priv^2}\log\frac{e\dims}{\sprs}}
  $}, while the \emph{interactive} sample complexity is
  \smash{$
      \bigTheta{\frac{\sprs\dims}{\dst^2\priv^2}}
  $}.
\end{corollary}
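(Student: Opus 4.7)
The plan is to combine four matching bounds---noninteractive and interactive, upper and lower---where the only genuinely new piece is the noninteractive lower bound $\bigOmega{\sprs\dims\log(e\dims/\sprs)/(\dst^2\priv^2)}$ supplied by Theorem~\ref{theo:bernoullimean:sparse:LDP} above. The other three bounds will follow from translating the communication-constrained results (Lemmas~\ref{lemma:ub:sparse:noninteractive}, \ref{lemma:ub:sparse:interactive}, and~\ref{lemma:lb:sparse:interactive}) to the LDP setting under the substitution $\numbits \mapsto \priv^2$ valid in the high-privacy regime $\priv \in (0,1]$, which is precisely the substitution flagged in the main LDP theorem.

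For the noninteractive upper bound, I would reuse the simulate-and-infer reduction of~\citet{AcharyaCT20b}, exactly as in the proof of Lemma~\ref{lemma:ub:sparse:noninteractive}. The unconstrained $\sprs$-sparse Bernoulli-mean estimation is solvable in $\bigO{\sprs\log(e\dims/\sprs)/\dst^2}$ samples by the MLE, and one unconstrained $\bool^\dims$-valued sample can be simulated from $\bigO{\dims/\priv^2}$ $\priv$-LDP responses by partitioning coordinates across users who each apply randomized response on their assigned coordinate; composing these yields the claimed $\bigO{\sprs\dims\log(e\dims/\sprs)/(\dst^2\priv^2)}$ bound. For the interactive upper bound, I would port the successive-elimination scheme of Lemma~\ref{lemma:ub:sparse:interactive} to LDP by implementing each single-bit client message as an $\priv$-LDP randomized response; the analysis transfers with $\numbits$ replaced by $\priv^2$, producing sample complexity $\bigO{\sprs\dims/(\dst^2\priv^2) + \sprs\log(e\dims/\sprs)/\dst^2}$. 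Since $\priv \leq 1$ forces $\dims/\priv^2 \geq \dims \geq \log(e\dims/\sprs)$, the logarithmic term is absorbed into the first, yielding the advertised $\bigO{\sprs\dims/(\dst^2\priv^2)}$. Finally, the interactive lower bound $\bigOmega{\sprs\dims/(\dst^2\priv^2)}$ is already present in prior work on interactive LDP mean estimation~\citep{BGMNW:16,DR:19}, and can alternatively be recovered by reducing from the nonsparse LDP lower bound in dimension $\sprs$ while paying an additional factor to account for the $\dims/\sprs$ possible placements of the support.

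The main obstacle has already been dispatched inside Theorem~\ref{theo:bernoullimean:sparse:LDP}: the $\log(e\dims/\sprs)$ gap separating the interactive and noninteractive LDP rates is entirely driven by that theorem, and nothing comparable is needed for the Corollary itself. The residual points to verify are routine---that the chi-square-based measure-change arguments of Lemmas~\ref{lemma:ub:sparse:noninteractive}--\ref{lemma:lb:sparse:interactive} convert cleanly under $\numbits \mapsto \priv^2$ (using $(e^\priv-1)^2 = \Theta(\priv^2)$ for $\priv \in (0,1]$), and that the arithmetic simplification $\dims/\priv^2 \geq \log(e\dims/\sprs)$ legitimately absorbs the secondary term in the interactive expression---neither of which presents any difficulty.
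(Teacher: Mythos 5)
Your proposal matches the paper's approach on the crux: Theorem~\ref{theo:bernoullimean:sparse:LDP} supplies the noninteractive lower bound carrying the $\log(e\dims/\sprs)$ factor, and the corollary is obtained by combining that with three routine pieces. The difference is that the paper simply \emph{cites} those three pieces---the noninteractive upper bound to~\citet{DJW:17}, and the interactive upper and lower bounds to~\citet{AcharyaCST21} and~\citet{DR:19}---whereas you sketch fresh derivations by porting the communication-constrained arguments under the substitution $\numbits \mapsto \priv^2$. This is a reasonable alternative route, but two of your sketches are imprecise. First, you assert that ``one unconstrained $\bool^\dims$-valued sample can be simulated from $O(\dims/\priv^2)$ $\priv$-LDP responses'': this is not literally true, since LDP is lossy and you cannot reconstruct a sample of the original product distribution. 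What you can simulate (by having each user apply randomized response to a single coordinate) is a sample from the \emph{shrunk} product distribution whose coordinate means are $c_\priv \mu\idx{i}$ with $c_\priv = \frac{e^\priv-1}{e^\priv+1} = \Theta(\priv)$; the unconstrained $\sprs$-sparse estimator applied to these pseudo-samples at target accuracy $\dst c_\priv$, together with rescaling, then yields the stated rate. Second, your alternative derivation of the interactive lower bound---``reducing from the nonsparse LDP lower bound in dimension $\sprs$ while paying an additional factor $\dims/\sprs$''---is not a valid reduction as stated (the nonsparse bound in dimension $\sprs$ gives $\Omega(\sprs^2/(\dst^2\priv^2))$, and multiplying by the number of placements is heuristic rather than rigorous); the citation to~\citet{BGMNW:16,DR:19} is what actually carries this bound. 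Neither of these gaps affects the conclusion, since both pieces are available in the cited prior work, and your observation that the $\priv\le 1$ regime forces $\dims/\priv^2 \geq \log(e\dims/\sprs)$ and hence absorbs the secondary interactive term is correct.
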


Of these bounds, the interactive upper and lower bounds are shown in \citet{AcharyaCST21} and \citet{DR:19}.
The noninteractive upper bound was established in~\citet{DJW:17}. The proof of \cref{theo:bernoullimean:sparse:LDP}, the noninteractive lower bound, follows similar steps as the proof of \cref{theo:bernoullimean:sparse}\footnote{For the case of $s = 1$, a lower bound of $\Omega\Paren{\frac{d}{\dst^2\priv^2}}$ is shown in \cite{ullman2018private}.}. We now discuss how to modify the argument for estimation under LDP constraints.

We first follow the same steps as in the proof of \cref{theo:bernoullimean:sparse} until \cref{eq:sum:information}, which we write below.
\begin{align*}
	\mutualinfo{Z}{Y_t}\le \sum_{y \in 
		\cY}  \Big( \frac{\sprs\gamma^{2}}{2\dims} \sum_{i \in [\dims]} \frac{\bE{X}{W(y\mid X)X_i
			}^2}{\bE{X}{W(y\mid X)}}  + \sum_{ r= 2}^{\dims} \Paren{\frac{\sprs\gamma^{2}}{2\dims}}^r \sum_{\substack{B 
			\subseteq [\dims]\\ |B|  = r}}  \frac{\bE{X}{W(y\mid X)\prod_{i \in B}X_i
			}^2}{\bE{X}{W(y\mid X)}} \Big) .
\end{align*}
To bound each term, we need the lemma below, proved in \citet{AcharyaCST21}, which follows from direct application of Bessel's inequality.
\begin{lemma} \label{lem:bessel:variance}
    Let $\phi_i\colon \cX \to \R$, for $i\leq 1$, be a family of functions. If the functions satisfy, for all $i, j$,
    \[
        \bE{X}{\phi_i(X) \phi_j(X)} = \indic{i = j},
    \]
    then, for any $\priv$-LDP channel $W$, we have
    \[
        \sum_i \bE{X}{\phi_i(X)W(y \mid X)}^2 \le \var_{X}\left[W(y \mid X) \right].
    \]
\end{lemma}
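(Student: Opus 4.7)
The plan is to recognize the stated inequality as a direct consequence of Bessel's inequality in the Hilbert space $L^2(\cX, \p_X)$ endowed with the inner product $\langle f, g\rangle \eqdef \bE{X}{f(X) g(X)}$. Under this identification, the hypothesis $\bE{X}{\phi_i(X)\phi_j(X)} = \indic{i=j}$ says precisely that $\{\phi_i\}_{i\ge 1}$ is an orthonormal system in $L^2(\cX, \p_X)$, so Bessel's inequality gives $\sum_i \langle g, \phi_i\rangle^2 \le \|g\|_2^2$ for every $g \in L^2(\cX, \p_X)$.

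The key move is to apply Bessel not to $g(x) = W(y\mid x)$ directly (which would only bound the left-hand side by the raw second moment $\bE{X}{W(y\mid X)^2}$), but to the centered function
\[
    f_y(x) \eqdef W(y\mid x) - \bE{X'}{W(y\mid X')},
\]
whose squared $L^2$ norm is exactly $\var_X[W(y\mid X)]$. In the intended applications each $\phi_i$ has mean zero (for instance, the nontrivial Rademacher monomials under a symmetric product measure used elsewhere in the paper); equivalently, one may without loss of generality extend $\{\phi_i\}$ by the constant function $\mathbf{1}$, which, once Gram--Schmidt-corrected, absorbs the excess term in the computation below. Either way, by linearity of the inner product,
\[
    \langle f_y, \phi_i\rangle = \bE{X}{\phi_i(X) W(y\mid X)} - \bE{X'}{W(y\mid X')}\, \bE{X}{\phi_i(X)} = \bE{X}{\phi_i(X) W(y\mid X)},
\]
so applying Bessel to $f_y$ immediately yields
\[
    \sum_i \bE{X}{\phi_i(X) W(y\mid X)}^2 = \sum_i \langle f_y, \phi_i\rangle^2 \le \|f_y\|_2^2 = \var_X[W(y\mid X)],
\]
as claimed.

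It is worth flagging that the $\priv$-LDP hypothesis is not actually used in the proof of this inequality; its role is downstream, where the LDP constraint $W(y\mid x) \in [e^{-\priv}, e^\priv] \cdot \bE{X'}{W(y\mid X')}$ is used to further bound $\var_X[W(y\mid X)] \le (e^\priv-1)^2\, \bE{X'}{W(y\mid X')}^2 = O(\priv^2)\, \bE{X'}{W(y\mid X')}^2$, yielding the $\priv^2$-scaling required for the final noninteractive LDP lower bound. The only real obstacle in the proof itself is getting the centering right so that Bessel lands on $\var$ rather than the second moment; everything else is bookkeeping.
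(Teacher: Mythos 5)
Your proof is correct and uses exactly the Bessel's-inequality route the paper cites for this lemma (attributed to \citet{AcharyaCST21} with no further detail given). You rightly flag the one subtlety the statement glosses over: applying Bessel to the centered function $f_y$ (equivalently, appending the constant function $\mathbf{1}$ to the orthonormal system and applying Bessel to $W(y\mid\cdot)$ directly) requires $\bE{X}{\phi_i(X)}=0$ for all $i$ — without which the conclusion is false (take $\phi_1\equiv 1$ and $W(y\mid\cdot)$ constant) — and this hypothesis, while not stated, is what makes the variance rather than the raw second moment appear on the right and does hold for the nonconstant Rademacher monomials $\prod_{i\in B}X\idx{i}$, $|B|\geq 1$, used in the LDP application; you are also right that the $\priv$-LDP assumption is vestigial here and enters only downstream when the variance is further bounded by $(e^\priv-1)^2\bE{X}{W(y\mid X)}^2$.
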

Note that for the first term,
\begin{align*}
	\sum_{y \in  \cY} \frac{\sprs}{2\dims} \gamma^2 \sum_{i \in [\dims]} 
	\frac{\bE{X\sim 
			\uniform}{W(y\mid X)X_i
		}^2}{\bE{X}{W(y\mid X)}}  & \le \frac{\sprs}{2\dims}  \gamma^2 \sum_{y \in 
	\cY}
	\frac{\var_{X}\left[W(y \mid X) \right]}{\bE{X}{W(y\mid X)}} \\
	& \le\frac{\sprs}{2\dims}  \gamma^2 \sum_{y \in 
	\cY}
	\frac{(e^\priv - 1)^2\bE{X}{W(y\mid X)}^2}{\bE{X}{W(y\mid X)}} \\
	& = \frac{\dst^2 \numbits}{2 \dims} (e^\priv - 1)^2.
\end{align*}
As in the proof of \cref{theo:bernoullimean:sparse}, we can use Lemma~\ref{lem:bessel:variance} to bound the $j$th order term by $
 (e^\priv-1)^2	\frac{\numbits \dst^2}{2 \dims}  (\dst^2/2)^{j - 1}.$ And thus, summing over all terms, we get
\[
		\mutualinfo{Z}{Y_t} \leq (e^\priv - 1)^2	\frac{\numbits \dst^2}{2 \dims}\sum_{j=1}^\infty (\dst/2)^{2(j-1)} \le  (e^\priv - 1)^2	\frac{\numbits \dst^2}{\dims}.
\]
Since $\mutualinfo{Z}{Y^\ns} \le \sum_{t = 1}^\ns \mutualinfo{Z}{Y_t} $, we conclude the proof using \cref{lem:sum_cond_mi}, thus establishing~\cref{theo:bernoullimean:sparse:LDP}.
\section{One-sparse noninteractive lower bound}
    \label{app:bernoullimean:1sparse}
In this section, we prove the following result for 1-sparse estimation under communication constraints.
\begin{theorem}
  \label{theo:bernoullimean:1sparse}
  Any $\numbits$-bit noninteractive protocol for mean estimation of 1-sparse product distributions over $\bool^\dims$ must have sample complexity 
  $
      \bigOmega{\frac{\dims\log \dims}{\dst^2\numbits}}
  $.
\end{theorem}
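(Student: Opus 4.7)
The plan is to run an Assouad-style argument with a carefully chosen family of $1$-sparse product distributions, then upper bound the relevant mutual information with the measure change bound of \cref{l:basic_mc}, mirroring (but drastically simplifying) the structure of the proof of \cref{lemma:lb:sparse:noninteractive}.

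First, I would set up the hard family: for $j \in [\dims]$ and $z \in \{-1, +1\}$, let $\p_{j,z}$ be the product distribution over $\bool^\dims$ with mean $\dst z\, e_j$, where $e_j$ is the $j$th standard basis vector; each $\p_{j,z}$ is clearly $1$-sparse. Pick $(J,Z)$ uniformly at random from $[\dims]\times\{-1,+1\}$, and let each user observe an independent sample $X_\ui \sim \p_{J,Z}$. Because distinct parameters $(j,z)\neq (j',z')$ induce mean vectors at $\lp[2]$ distance at least $\dst$ (equal to $2\dst$ if $j=j'$ and to $\dst\sqrt{2}$ otherwise), an estimator with $\lp[2]$ accuracy better than $\dst/4$ can be rounded to exactly recover $(J,Z)$ with probability $\geq 9/10$. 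Fano's inequality then gives $\mutualinfo{(J,Z)}{Y^\ns} = \Omega(\log \dims)$.

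Second, I would upper bound $\mutualinfo{(J,Z)}{Y^\ns}$. As in the proof of \cref{lem:mi_bound}, it suffices to handle private-coin noninteractive protocols (by conditioning on the public seed), in which case the $Y_\ui$ are conditionally independent given $(J,Z)$ and so $\mutualinfo{(J,Z)}{Y^\ns} \leq \sum_{t=1}^{\ns}\mutualinfo{(J,Z)}{Y_t}$. Using $\kldiv{\cdot}{\cdot}\leq \chisquare{\cdot}{\cdot}$ and the notation $W^{\p}(y)\eqdef \bE{X\sim\p}{W(y\mid X)}$ from \eqref{eq:def:induced:distribution}, write
\[
\mutualinfo{(J,Z)}{Y_t} \leq \bE{(J,Z)}{\chisquare{W_t^{\p_{J,Z}}}{W_t^{\uniform}}}.
\]
Here the crucial simplification over the general $\sprs$-sparse case is that the likelihood ratio collapses to $\p_{j,z}(x)/\uniform(x) = 1 + \dst\, z\, x\idx{j}$, so a direct expansion gives
\[
\chisquare{W_t^{\p_{j,z}}}{W_t^{\uniform}} = \dst^2 \sum_{y\in\cY}\frac{\bE{\uniform}{W_t(y\mid X) X\idx{j}}^2}{\bE{\uniform}{W_t(y\mid X)}},
\]
with no higher-order terms to manage. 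Averaging over $(J,Z)$ uniform and applying \cref{l:basic_mc} to the $1$-subgaussian vector $X\sim\uniform$ yields
\[
\bE{(J,Z)}{\chisquare{W_t^{\p_{J,Z}}}{W_t^{\uniform}}} = \frac{\dst^2}{\dims}\sum_{y\in\cY}\frac{\norm{\bE{\uniform}{X\,W_t(y\mid X)}}_2^2}{\bE{\uniform}{W_t(y\mid X)}} \leq \frac{2(\ln 2)\dst^2\numbits}{\dims}.
\]

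Combining the two steps gives $\Omega(\log \dims) \leq \mutualinfo{(J,Z)}{Y^\ns} = \bigO{\ns\dst^2\numbits/\dims}$, hence $\ns = \bigOmega{\dims\log\dims/(\dst^2\numbits)}$. The main technical point is really the first step (checking that accurate $\lp[2]$ estimation forces recovery of $(J,Z)$, and then applying Fano over the $2\dims$-ary hypothesis); the information bound is strictly easier than in \cref{lemma:lb:sparse:noninteractive} because no Baranyai-style decomposition or handling of higher-order correlations is needed.
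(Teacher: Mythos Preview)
Your proposal is correct and follows essentially the same approach as the paper: the paper uses the $\dims$-hypothesis family $\p_j$ with mean $2\dst\, e_j$ (rather than your $2\dims$-hypothesis family with signs), applies Fano to get $\mutualinfo{J}{Y^\ns}=\Omega(\log\dims)$, and then bounds the per-user chi-squared divergence to $W_t^{\uniform}$ via \cref{l:basic_mc} exactly as you do. The extra sign variable $Z$ is harmless, and the remainder of the argument is identical.
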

\begin{proof}
    Consider the following family of distributions. For $i \in [\dims]$, $\p_i$ is a product distribution over $\bool^\dims$ with mean $\theta_j = 2 \eps\indic{i=j}$ for $1\leq j\leq \dims$. Consider the generative process where we first sample $J$ uniformly from $[\dims]$ and then each user observes one sample from $\p_J$ and follows the protocol, thus obtaining a tuple $Y^\ns$ of messages.
    
    \noindent By Fano's inequality, we have that for any $1$-sparse estimation protocol the following must hold:
    \[
        \mutualinfo{J}{Y^\ns} = \Omega\Paren{\log \dims}.
    \]
    It remains to provide an upper bound on $\mutualinfo{J}{Y^\ns}$. Since $(Y_1, Y_2, \ldots, Y_\ns)$ are independent conditioned on $J$, we have
    \[
        \mutualinfo{J}{Y^\ns} \le \sum_{t = 1}^\ns \mutualinfo{J}{Y_t}\,,
    \]
    and therefore it suffices to show that
    $
        \mutualinfo{J}{Y_t} = O\Paren{\frac{\eps^2 \numbits}{\dims}}
    $
    for every $t \in [\ns]$. 
        Using the same notation as in the proof of Lemma~\ref{lemma:lb:sparse:noninteractive} we have
    \[ 
        \mutualinfo{J}{Y_t} \le \bE{J}{\kldiv{W_t^{\p_J}}{W_t^{\uniform}}} \le \bE{J}{\chisquare{W_t^{\p_J}}{W_t^{\uniform}}}.
    \]
    Now, we can expand 
    \begin{align*}
        \bE{J}{\chisquare{W_t^{\p_J}}{W_t^{\uniform}}}&  = 	\bE{J}{\sum_{y \in 
			\cY}\frac{\Paren{\sum_x W(y\mid x)(\p_{J}(x) -
				\uniform(x))}^2}{\sum_x 
			W(y\mid x) \uniform(x)}} \\
			&  = 	\bE{J}{\sum_{y \in 
			\cY}\frac{\bE{\uniform}{ W(y\mid X) 2 \dst X\idx{J}}^2}{\bE{\uniform}{
			W(y\mid X)}}} \\
			& = \frac{4\dst^2}{\dims} \sum_{y \in 
			\cY} \sum_{j \in [\dims]}\frac{\bE{\uniform}{ W(y\mid X) X\idx{j}}^2}{\bE{\uniform}{
			W(y\mid X)}} \\
			& = \frac{4\dst^2}\dims \sum_{y \in 
			\cY} \frac{\norm{\bE{\uniform}{ W(y\mid X) X}}_2^2}{\bE{\uniform}{
			W(y\mid X)}}\,.
    \end{align*}
    Note that the uniformly random vector $X$ is 1-subgaussian. Hence using Lemma~\ref{l:basic_mc}, we get $\mutualinfo{J}{Y_t} \le \frac{8( \ln 2) \eps^2 \numbits }{\dims}$, which lets us conclude the proof as we get that $\ns$ must satisfy
    $
        \Omega\Paren{\log \dims} = \mutualinfo{J}{Y^\ns} \le \sum_{t = 1}^\ns \mutualinfo{J}{Y_t} \leq \ns\cdot \frac{8( \ln 2) \eps^2 \numbits }{\dims}\,.
    $
\end{proof}

\section{Adaptive sensing from \texorpdfstring{$\nlm$}{low}-dimensional measurements} \label{sec:adaptive_sensing}
In this section, we prove \cref{theo:compressive}. The theorem states that there is an algorithm which estimates a sparse signal up to $\lp[2]$ accuracy $\eps$ with $\nlm \cdot \ns = O\Paren{\frac{\sprs\dims}{\eps^2} + \frac{\sprs}{\dst^2}\log \frac{e\dims}{\sprs} }$ noisy linear measurements, which is optimal as shown by the adaptive sensing lower bound from \citet{AriasCD12} and the sparse mean estimation lower bound from the centralized case (see, \eg \citet[Section~19]{Wu20}). Moreover, as shown in \cite{RaskuttiWY11}, $\nlm \cdot \ns = \Omega \Paren{\frac{\sprs\dims}{\eps^2} \log \frac{e\dims}{\sprs}}$ measurement are required for a noninteractive protocol. All together, this demonstrates a separation between noninteractive compressed sensing and adaptive sensing. \smallskip

\begin{proofof}{\cref{theo:compressive}.} 
    We establish the result by a reduction to estimating the mean of a sparse product distribution over $\bool^\dims$, which we have considered in previous sections.
    
    Let $e_i$ be the $i$th standard base vector in $\R^\dims$. Consider the family of selection matrices containing, for every $S \subseteq [\dims]$ of size $|S| = \nlm$, the matrix  $A_S \eqdef [e_i]_{i \in S}$.
    Then by \cref{task:compressivesensing}, for any $S \subseteq [\dims]$, $Y \sim \gaussian{x_{S}}{I_{\nlm}}$, where $x_{S}\in\R^{\nlm}$ denotes the subvector of $x$ restricted to coordinates indexed by $S$. Let $Y' = (\sign(Y\idx{i}))_{i \in S}$. Then $Y'$ has a product distribution such that, for every $i \in S$, $Y'_i\in\bool$ has mean 
    \[
        \mu\idx{i} \eqdef \bEE{Y\idx{i}'} = 2\bPr{Y\idx{i}> 0} - 1 = \operatorname{Erf}\Paren{\frac{x\idx{i}}{\sqrt{2}}},
    \]
    where $\operatorname{Erf}$ is the Gaussian error function. For $x \in [-1, +1]^\dims$, $\mu \in [-\operatorname{Erf}(1/\sqrt{2}), \operatorname{Erf}(1/\sqrt{2})] \subset [-1, +1]^\dims$. We will rely on the following lemma from \citet{AcharyaCST21}, which states that a good estimate for $\mu$ is also a good estimate for $x$.
    \begin{lemma}[{\citet[Lemma~7]{AcharyaCST21}}]
        For  $\widehat{\mu}\in[-\eta,\eta]^\dims$, define $\widehat{x}\in[-1,1]^\dims$ by
    $
        \widehat{x}\idx{i} \eqdef \sqrt{2}\operatorname{Erf}^{-1}(\widehat{\mu}\idx{i}),
    $ for all $i\in[\dims]$. 
    Then
    $
        \norm{\widehat{x}-x}_2 \leq \sqrt{\frac{e\pi}{2}}\cdot \norm{\mu -\widehat{\mu}}_2\,.
    $
    \end{lemma}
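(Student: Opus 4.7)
The plan is to reduce the stated $\ell_2$ inequality to a pointwise Lipschitz estimate on the univariate map $f\colon[-\eta,\eta]\to\R$ defined by $f(t)\eqdef \sqrt{2}\operatorname{Erf}^{-1}(t)$. By construction, $\widehat{x}\idx{i}=f(\widehat{\mu}\idx{i})$; and since $\mu\idx{i}=\operatorname{Erf}(x\idx{i}/\sqrt{2})$ with $x\idx{i}\in[-1,1]$, we also have $x\idx{i}=f(\mu\idx{i})$. Both arguments lie in $[-\eta,\eta]$ with $\eta=\operatorname{Erf}(1/\sqrt{2})$ (this is exactly the range of $\mu$ inherited from $x\in[-1,1]^\dims$, as noted in the preceding paragraph of the paper). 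The mean value theorem then gives $|\widehat{x}\idx{i}-x\idx{i}|\leq L\cdot|\widehat{\mu}\idx{i}-\mu\idx{i}|$ with $L\eqdef \sup_{|t|\leq \eta}|f'(t)|$.

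Next I would compute $L$ explicitly via the inverse function rule. Since $\operatorname{Erf}'(u)=\tfrac{2}{\sqrt{\pi}}e^{-u^2}$, differentiating gives
\[
f'(t) \;=\; \frac{\sqrt{2}}{\operatorname{Erf}'(\operatorname{Erf}^{-1}(t))} \;=\; \sqrt{\frac{\pi}{2}}\,\exp\!\bigl((\operatorname{Erf}^{-1}(t))^2\bigr).
\]
For $|t|\leq \eta=\operatorname{Erf}(1/\sqrt{2})$, we have $|\operatorname{Erf}^{-1}(t)|\leq 1/\sqrt{2}$, hence $\exp((\operatorname{Erf}^{-1}(t))^2)\leq e^{1/2}$. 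Therefore $L\leq \sqrt{e\pi/2}$.

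Finally, squaring the coordinate-wise bound and summing over $i\in[\dims]$ yields $\norm{\widehat{x}-x}_2^2\leq \tfrac{e\pi}{2}\,\norm{\mu-\widehat{\mu}}_2^2$; taking square roots gives the claim.

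There is essentially no technical obstacle here: the whole argument is a one-variable Lipschitz computation. The only subtlety worth flagging is identifying that $\eta$ must be taken to be $\operatorname{Erf}(1/\sqrt{2})$ (rather than, say, $1$) for the stated constant $\sqrt{e\pi/2}$ to come out — otherwise $f'$ blows up near the endpoints and the bound would degrade. Past that, everything reduces to chaining the mean value theorem with the inverse function derivative of $\operatorname{Erf}$.
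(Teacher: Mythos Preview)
Your argument is correct: reducing to a coordinate-wise Lipschitz bound for $t\mapsto \sqrt{2}\operatorname{Erf}^{-1}(t)$ on $[-\eta,\eta]$ with $\eta=\operatorname{Erf}(1/\sqrt{2})$, computing the derivative via the inverse function rule, and reading off the bound $\sqrt{e\pi/2}$ is exactly the right approach, and your identification of why $\eta$ must equal $\operatorname{Erf}(1/\sqrt{2})$ for the constant to come out is on point. Note, however, that the present paper does not actually prove this lemma---it is quoted verbatim from~\citet[Lemma~7]{AcharyaCST21} and used as a black box---so there is no in-paper proof to compare against; your derivation is the standard one and matches what one would expect in the cited reference.
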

    It only remains to establish an upper bound on estimating the mean of a product distribution over $\bool^\dims$ by observing a subset of $\nlm$ coordinates from each sample (in particular, this is a more restricted constraint than $\numbits$-bit communication, where the message is not restricted to consist of bits of the original sample). Nonetheless, in the protocol in \cite{AcharyaCST21} which achieves \cref{lemma:ub:sparse:interactive}, each user does actually send $\numbits$ coordinates of the observed sample, meaning that it can be directly applied here by setting $\numbits = \nlm$. Plugging $\nlm$ for $\numbits$ in \cref{lemma:ub:sparse:interactive}, we get the desired bound.
\end{proofof}

\section{Missing proofs in \cref{sec:sparse_mean,sec-block-sparse}} \label{sec:missing-proof}
We now provide the proofs of the two inequalities used in~\cref{sec:sparse_mean,sec-block-sparse}, respectively.
\begin{proofof}{\cref{eq:sum:information}}
We can rewrite and bound the mutual information as
\[
		\mutualinfo{Z}{Y_t}  = \bE{Z}{\kldiv{W_t^{\p_{Z}}}{W_t^{\uniform}}}
\le 
\bE{Z}{\chisquare{W_t^{\p_{Z}}}{W_t^{\uniform}}}\,. 
\]
We drop the subscript $t$ from $W_t$ when it is clear from context.
Using the definition of chi-square divergence and~\cref{eq:def:induced:distribution},
for $X,X'$ generated \iid from $\uniform$, we have 
\begin{align*}
	&\bE{Z}{\chisquare{W_t^{\p_{Z}}}{W_t^{\uniform}}}\\
	&\qquad =	\bE{Z}{\sum_{y \in 
			\cY}\frac{\Paren{\sum_x W(y\mid x)(\p_{Z }(x) - 
				\uniform(x))}^2}{\sum_x 
			W(y\mid x) \uniform(x)}} \\
	&\qquad = \sum_{y \in 
		\cY}	\bE{Z}{\frac{\bE{\uniform}{W(y\mid X)\Paren{\prod_{i=1}^{\dims} (1 
					+ 
					\gamma Z\idx{i} X\idx{i}) - 1} }^2}{\bE{\uniform}{W(y\mid X)}}} \\
	&\qquad = \sum_{y \in 
		\cY}	\bE{Z}{\frac{\bE{X, X' \sim \uniform}{W(y\mid X)W(y\mid X')\Paren{\prod_{i=1}^{\dims} (1 + 
					\gamma Z\idx{i} X\idx{i}) - 1} \Paren{\prod_{i=1}^{\dims} (1 + 
					\gamma Z\idx{i} X\idx{i}') - 1}  }}{\bE{\uniform}{W(y\mid X)}}},
\end{align*}
where we recall that $\gamma = \dst/\sqrt{\sprs}$.
Note that since $\bE{Z}{Z\idx{i}} = 0$ and $\bE{Z}{Z\idx{i}^2} = \frac{\sprs}{2\dims}=\sigma^2$ for all $i \in [\dims]$ 
	and the $Z\idx{i}$'s are independent, we further obtain that
\begin{align*}
	\bE{Z}{\mright. & \mleft.\Paren{\prod_{i=1}^{\dims} (1 + 
			\gamma X\idx{i} Z\idx{i}) - 1} \Paren{\prod_{i=1}^{\dims} (1 + 
			\gamma X\idx{i}' Z\idx{i}) - 1} } \\
	& = \bE{Z}{\prod_{i=1}^{\dims}(1 + 
		\gamma Z\idx{i} X\idx{i}) (1 + 
		\gamma Z\idx{i} X\idx{i}')} 
	- 
	2\bE{Z}{\prod_{i=1}^{\dims} (1 + 
		\gamma Z\idx{i} X\idx{i}) } 
		+ 1 \\
	& = \prod_{i=1}^{\dims} (1 + \sigma^2 
	\gamma^2 X\idx{i}X\idx{i}') - 1\,.
\end{align*}
Plugging this into the previous expression, we get
\begin{align*}
	 \bE{Z}{\chisquare{W_t^{\p_{Z}}}{W_t^{\uniform}}} 
	& = \sum_{y \in \cY}	\frac{\bE{X, X' \sim \uniform}{W(y\mid X)W(y\mid X')\Paren{\prod_{i=1}^{\dims} (1 + 
				 \sigma^2 \gamma^2 X\idx{i} 
				X\idx{i}') - 1}  }}{\bE{\uniform}{W(y\mid X)}}  \\
	& = \sum_{y \in \cY} 	\frac{\bE{X, X' \sim \uniform}{W(y\mid X)W(y\mid X')\Paren{\sum_{ r= 1}^{\dims} \sum_{B 
					\subseteq [\dims], |B|  = r} \sigma^{2r}\gamma^{2r} \prod_{i \in B}X\idx{i} 
					X\idx{i}'} 
	}}{\bE{\uniform}{W(y\mid X)}}  \\
	& = \sum_{y \in \cY}  \Big(\sigma^2\gamma^2\sum_{i \in [\dims]} \frac{\bE{X,X' \sim \uniform}{W(y\mid X)W(y\mid X')X\idx{i}X\idx{i}' }}{\bE{\uniform}{W(y\mid X)}}  \notag\\
	&\qquad\qquad+ \sum_{ r= 2}^{\dims} \sum_{\substack{B \subseteq [\dims]\\ |B|  = r}} \sigma^{2r}\gamma^{2r}
		\frac{\bE{X, X'\sim \uniform}{W(y\mid X)W(y\mid X')\prod_{i \in B}X\idx{i}X\idx{i}'
		}}{\bE{\uniform}{W(y\mid X)}} \Big)  \\
	& = \sum_{y \in \cY}  \!\!\Big( \sigma^2 \gamma^2 \sum_{i \in [\dims]} \frac{\bE{\uniform}{W(y\mid X)X\idx{i}
			}^2}{\bE{\uniform}{W(y\mid X)}}  + \sum_{ r= 2}^{\dims} \sigma^{2r}\gamma^{2r} \!\!\!\sum_{\substack{B 
			\subseteq [\dims]\\ |B|  = r}}  \frac{\bE{\uniform}{W(y\mid X)\prod_{i \in B}X\idx{i}
			}^2}{\bE{\uniform}{W(y\mid X)}} \Big)\,,
\end{align*}
which is the inequality we wanted to establish.
\end{proofof}

\begin{proofof}{\cref{eq:bigstuff}}
	Let $\uniform$ denote the uniform distribution over $\bool^\dims$, and $\gamma \eqdef \dst/\sqrt{\sprs}$. For 
	all $j \in [b]$, we have
	\begin{align}
		&\condmutualinfo{Z}{Y_t}{J=j} \notag\\
		\;\;& \le \bE{Z}{\kldiv{W^{\p_{Z,j}}}{W^\uniform}} \notag\\
		\;\;&\le 
		\bE{Z}{\chisquare{W^{\p_{Z,j}}}{W^\uniform}} \notag\\
		\;\;& =	\bE{Z}{\sum_{y \in 
				\cY}\frac{\Paren{\sum_x W(y\mid X)(\p_{Z,j}(x) - 
				\uniform(x))}^2}{\sum_x 
				W(y\mid X) \uniform(x)}} \notag\\
		\;\;& = \sum_{y \in 
			\cY}	\bE{Z}{\frac{\bE{X}{W(y\mid X)\Paren{\prod_{i \in B_j} (1 + 
						\gamma Z\idx{i} X\idx{i}) - 1} }^2}{\bE{X}{W(y\mid X)}}} \notag\\
		\;\;& = \sum_{y \in 
			\cY}	\bE{Z}{\frac{\bE{X,X'}{W(y\mid X)W(y\mid X')\Paren{\prod_{i \in 
							B_j} (1 + \gamma Z\idx{i} X\idx{i}) - 1} \Paren{\prod_{i \in 
							B_j} (1 + 
						\gamma Z\idx{i} X\idx{i}') - 1}  }}{\bE{X}{W(y\mid X)}}} \notag\\
		\;\;& = \sum_{y \in 
			\cY}	\frac{\bE{X, X'}{W(y\mid X)W(y\mid X')\bE{Z}{\Paren{\prod_{i \in 
							B_j} (1 + 
						\gamma Z\idx{i} X\idx{i}) - 1} \Paren{\prod_{i \in 
							B_j} (1 + 
						\gamma Z\idx{i} X\idx{i}') - 1} } }}{\bE{X}{W(y\mid X)}},
		\label{eqn:expect_product}
	\end{align}
	where $X,X'\sim\uniform$ are independent.
	Note that since $\bE{Z}{Z\idx{i}} = 0$ and $\bE{Z}{Z\idx{i}^2} = 1$ for all $i \in [\dims]$ 
	and the $Z\idx{i}$'s are independent, we have 
	\begin{align*}
		&\bE{Z}{\Bigg(\prod_{i \in 
					B_j} (1 + 
				\gamma Z\idx{i} X\idx{i} ) - 1 \Bigg) \Bigg(\prod_{i \in 
					B_j} (1 + 
				\gamma Z\idx{i} X\idx{i}') - 1 \Bigg) } \\
		&\quad\quad\quad =\prod_{i \in 
			B_j} \bE{Z}{(1 + 
				\gamma Z\idx{i} X\idx{i}) (1 + 
			\gamma Z\idx{i} X\idx{i}')} - 1
		 = \prod_{i \in 
			B_j} (1 + 
		\gamma^2X\idx{i}X\idx{i}') - 1\,.
	\end{align*}
	Plugging this into~\cref{eqn:expect_product}, we obtain 
	\begin{align*}
		\condmutualinfo{Z}{Y_t}{J=j} & \le \sum_{y \in 
			\cY}	\frac{\bE{X, X'}{W(y\mid X)W(y\mid X')\Paren{\prod_{i \in 
						B_j} (1 + 
					\gamma^2 X\idx{i} 
					X\idx{i}') - 1}  }}{\bE{X}{W(y\mid X)}} \\
		& = \sum_{y \in 
			\cY} 	\frac{\bE{X, X'}{W(y\mid X)W(y\mid X')\Paren{\sum_{ r =1}^{\sprs} \sum_{B 
						\subseteq B_j, |B| = r} \gamma^{2r} \prod_{i \in B}X\idx{i} X\idx{i}'} 
		}}{\bE{X}{W(y\mid X)}} \\
		& = \sum_{y \in 
			\cY}  \left(\gamma^2\sum_{i \in B_j} \frac{\bE{X, X'}{W(y\mid X)W(y\mid X')X\idx{i}X\idx{i}'
			}}{\bE{X}{W(y\mid X)}} \right. \\
				& \;\;\;\;\;\;\; \;\;\;\;\;\;\;\left. + \sum_{ r =2}^{\sprs} \sum_{\substack{B\subseteq B_j\\ |B| = r}}  \gamma^{2r}
			\frac{\bE{X, X'}{W(y\mid X)W(y\mid X')\prod_{i \in B}X\idx{i}X\idx{i}'
			}}{\bE{X}{W(y\mid X)}} \right) \\
		& = \sum_{y \in 
			\cY}  \Big(\gamma^{2} \sum_{i \in B_j} \frac{\bE{X}{W(y\mid X)X\idx{i}
				}^2}{\bE{X}{W(y\mid X)}} + \sum_{ r =2}^{\sprs} \gamma^{2r}  \sum_{\substack{B\subseteq B_j\\ |B| = r}}   \frac{\bE{X}{W(y\mid X)\prod_{i \in B}X\idx{i}
				}^2}{\bE{X}{W(y\mid X)}} \Big).
	\end{align*}
	Summing over all the blocks, we get
	\begin{align*}	
		\sum_{j\in [b]} \condmutualinfo{Z}{Y_t}{J=j} & \le \sum_{j \in [b]} 
		\sum_{y 
			\in 
			\cY}  \Big({\gamma^{2} \sum_{i \in B_j} \frac{\bE{X}{W(y\mid X)X\idx{i}
				}^2}{\bE{X}{W(y\mid X)}}  + \sum_{ r =2}^{\sprs} \gamma^{2r} \!\!\sum_{\substack{B\subseteq B_j\\ |B| = r}}\!\!  \frac{\bE{X}{W(y\mid X)\prod_{i \in B}X\idx{i}
				}^2}{\bE{X}{W(y\mid X)}} }\Big).
	\end{align*}
	which is what we set out to prove.
\end{proofof}

\end{document}